\newcommand{\sfrac}[2]{#1\nicefrac{}{#2}}
\newtheorem{claim}{Claim}
\newtheorem{corollary}{Corollary}[section]
\newtheorem{lemma}{Lemma}
\newtheorem{theorem}{Theorem}[section]
\theoremstyle{definition}
\newtheorem{definition}{Definition}
 \newcommand{\IGNORE}[1]{}
\newcommand{\citep}{\cite}
\newcommand{\citet}{\cite}
\newcommand\E{\mathbb{E}}
\newcommand\R{\mathbb{R}}
\DeclareMathOperator{\diag}{diag}
\newcommand\inner[1]{\langle #1 \rangle}
\newcommand\poly{\operatorname{poly}}
\newcommand\polylog{\operatorname{polylog}}
\newcommand{\Exp}{\mathop{\mathbb E}\displaylimits}
\def\shownotes{1}  %set 1 to show author notes
\newcommand{\authnote}[2]{{ $\ll$\textsf{\footnotesize #1 notes: #2}$\gg$}}
\newcommand{\authnote}[2]{}
\newcommand{\pE}{\widetilde{\Exp}}
\newcommand{\sosle}{\preceq}
\newcommand{\sosge}{\succeq}
\title{Decomposing Overcomplete 3rd Order Tensors using Sum-of-Squares Algorithms}
\author[1]{Rong Ge}
\affil[1]{Microsoft Research, New England. Email: rongge@microsoft.com}
\author[2]{Tengyu Ma}
\affil[2]{Princeton University. Email: tengyu@cs.princeton.edu}
\date{}
\begin{document}
\maketitle

\begin{abstract}
Tensor rank and low-rank tensor decompositions have many applications in learning and complexity theory. Most known algorithms use unfoldings of tensors and can only handle rank up to $n^{\lfloor p/2 \rfloor}$ for a $p$-th order tensor in $\R^{n^p}$. Previously no efficient algorithm can decompose 3rd order tensors when the rank is super-linear in the dimension. Using ideas from sum-of-squares hierarchy, we give the first quasi-polynomial time algorithm that can decompose a random 3rd order tensor decomposition when the rank is as large as $n^{3/2}/\poly\log n$.

We also give a polynomial time algorithm for certifying the injective norm of random low rank tensors. Our tensor decomposition algorithm exploits the relationship between injective norm and the tensor components. The proof relies on interesting tools for decoupling random variables to prove better matrix concentration bounds, which can be useful in other settings.
\end{abstract}

\section{Introduction}

Tensors, as natural generalization of matrices, are often used to represent multi-linear relationships or data that involves higher order correlation. A $p$-th order tensor $T\in \mathbb{R}^{n^p}$ is a $p$-dimensional array indexed by $[n]^p$. A tensor $T$ is rank-1 if it can be written as the outer-product of $p$ vectors $T = a_1\otimes \dots \otimes a_p$, where $a_i\in \mathbb{R}^n$ (for $i = 1,\dots,p$). Equivalently, $T_{i_1,\dots,i_p} = \prod_{j=1}^p a_{j}(i_j)$ where $a_j(i_j)$ denotes the $i_j$-th entry of vector $a_j$. 

%For vectors $a_1,...,a_p\in \R^n$, we use tensor product $T = a_1\otimes a_2\otimes \cdots \otimes a_p$ to denote a $p$-th order tensor in the space $R^{n^p}$, such that $T_{i_1,i_2,...,i_p} = \prod_{j=1}^p a_j(i_j)$. The tensor $T = a_1\otimes a_2\otimes \cdots \otimes a_p$ is called a rank-1 tensor. In general, a $p$-th order tensor is just a $p$-dimensional array that may not be expressable as tensor products of vectors. 

Low rank tensors --- similar to low rank matrices --- are widely used in many applications. The rank of tensor $T$ is defined as the minimum number $m$ such that $T$ can be written as the sum of $m$ rank-1 tensors. %, and such a representation is known as CP/PARAFAC decomposition~\cite{carroll1970analysis,harshman1970foundations}. %$T_i$'s, that is, `$T = \sum_{i=1}^m T_i$. 
%Given a tensor $T \in \R^{n^p}$, we would like to decompose $T$ into sum of rank-1 tensors
%In many applications, it is important to consider the {\em low rank} decomposition for tensors. 
%\begin{equation}
%\label{eq:lowrank}
%T = \sum_{i=1}^m a_{i,1}\otimes a_{i,2} \otimes \cdots \otimes a_{i,p}, a_{i,j} \in \R^n.
%\end{equation}
%Representing a tensor $T$ in this form is known as the CP/PARAFAC\cite{carroll1970analysis,harshman1970foundations} decomposition.  The smallest number $m$ where such a decomposition exists is called the rank of the tensor $T$. 
This agrees with the definition of matrix rank. However, most of the corresponding tensor problems are much harder: for $p \ge 3$ computing the rank of the tensor (as well as many related problems) is NP-hard~\cite{haastad1990tensor,TensorNPHard}. Tensor rank is also not as well-behaved as matrix rank (see for example the survey~\cite{comontensorsurvey}).

%Tensors and the notion of tensor rank have found many applications in theoretical computer science including matrix multiplication and circuit complexity\cite{strassen1973vermeidung, brgisser2010algebraic, raz2013tensor}. 
Unlike matrices, low rank tensor decompositions are often unique \cite{Kruskal:77}, which is important in many applications. In special cases (especially when rank $m$ is less than dimension $n$) tensor decomposition can be efficiently computed. Such specialized tensor decompositions have been the key algorithmic ideas in many recent algorithms for learning latent variable models, 
%In this paper we are mostly interested in the applications to learning. 
%Recently tensor decomposition has been applied to learning many latent variable models, 
including mixture of Gaussians, Independent Component Analysis, Hidden Markov Model and Latent Dirichlet Allocation (see \cite{AnandkumarEtal:tensor12}). In many cases tensor decomposition can be viewed as reinterpreting previous spectral learning results \cite{Chang96,MR06,AnandkumarEtal:lda12,AnandkumarHsuKakade:COLT12}. This new interpretation has also inspired many new works (e.g. \cite{AnandkumarEtal:community12COLT,bhaskara2014smoothed,GHK15}).

A common limitation in early tensor decomposition algorithms is that they only work for the undercomplete case when rank $m$ is at most the dimension $n$. 
%in learning latent variable models using tensor decompositions is that we can only compute the decomposition efficiently when $m \le n$ (which we call the undercomplete case). 
Although there are some attempts to decompose tensors in the overcomplete case ($m>n$) \cite{de2007fourth,bhaskara2014smoothed,anderson2013more}, all these works require 4-th or higher order tensors. In many machine learning applications, the number of samples required to accurately estimate a 4-th order tensor is too large. In practice algorithms based on 3rd order tensor are much more preferable. Therefore we are interested in the key question: are there any efficient algorithms for overcomplete 3rd order tensor decomposition?

%It is natural to ask whether it is possible to handle overcomplete ($m>n$) case for $3$rd order tensor, especially since the running time and sample complexity of tensor-based algorithms typically scale exponentially in the order of the tensor.

In the worst case setting, overcomplete 3rd order tensors are not well-understood. 
%understanding the overcomplete setting for third order tensors is a hard problem and only few results are known. 
Kruskal~\cite{Kruskal:77} showed the tensor decomposition is unique when the rank $m \le 1.5n-1$ and the components are in general position, but there is no efficient algorithm known for finding this decomposition. Constructing an explicit 3rd order tensor with rank $\Omega(n^{1+\epsilon})$ will give nontrivial circuit complexity lowerbounds \cite{strassen1973vermeidung}, while the best known rank bound for an explicit 3rd order matrix is only $3n - O(\log n)$~\cite{alexeev2011tensor}.

For many of the learning applications, it is natural to consider the average case problem where the components of the tensor are chosen according to a random distribution. In this case \cite{AltTensorDecomp2014} give a polynomial time algorithm that can find the true components when $m = Cn$ for any constant $C>0$ (however the runtime depends exponentially on $C$). 

%The main difficulty in handling overcomplete 3rd order tensors is that there is no natural {\em unfolding} that can certify the rank of the tensor. An unfolding of a tensor is a mapping from the tensor to a matrix. We can unfold a 4-th order tensor $T$ into a matrix $M$ of size $n^2\times n^2$ where $M_{(i_1,i_2), (i_3,i_4)} = T_{i_1,i_2,i_3,i_4}$. \Tnote{Not clear what the following sentence means}Such an unfolding can have rank up to $n^2$ (it is easy to see the rank of the tensor is at least the rank of an unfolding). In general from the unfolding we can get lowerbounds for tensor ranks up to $n^{\lfloor p/2\rfloor}$, so $p = 3$ is the only case where this method does not give any nontrivial bound.

This paper also considers this average case setting and gives a quasi-polynomial algorithm for decomposing the tensor when $m$ can be as large as $n^{3/2}$. The main idea of the algorithm is based on sum-of-squares (SoS) SDP hierarchy (\cite{parrilo2000structured, lasserre2001global}, see Section~\ref{sec:prelim} and the recent survey \cite{BarakS14}). The main difficulty in handling overcomplete 3rd order tensors is that there is no natural {\em unfolding} (i.e. mapping to a matrix) that can certify the rank of the tensor. We can unfold a 4-th order tensor $T$ into a matrix $M$ of size $n^2\times n^2$ where $M_{(i_1,i_2), (i_3,i_4)} = T_{i_1,i_2,i_3,i_4}$. However, unfolding 3rd order tensor will result in a very unbalanced matrix of dimension $n\times n^2$ that cannot have rank more than $n$. Intuitively, the power of SoS-based algorithm is that it can provide higher-order ``pseudo-moments'' that will allow us to use nontrivial unfoldings.% will behave similarly to higher-order tensor, even though we only have access to 3rd order tensor. 

%\Tnote{Not clear what the following sentence means}Such an unfolding can have rank up to $n^2$ (it is easy to see the rank of the tensor is at least the rank of an unfolding). In general from the unfolding we can get lowerbounds for tensor ranks up to $n^{\lfloor p/2\rfloor}$, so $p = 3$ is the only case where this method does not give any nontrivial bound.

In particular, the key component of the proof is a way of certifying {\em injective norm} (see Section~\ref{sec:prelim}) of random tensors, which is closely related to the problem of certifying the 2-to-4 norm of random matrices\cite{BBH12}. Recently, there has been an increasing number of applications of SoS hierarchy to learning problems. \cite{BarakKS14} give algorithms for finding the sparsest vectors in a subspace, which is closely related to many learning problems. \cite{BKS14} give a new algorithm for dictionary learning that can handle nearly linear sparsity, and also an algorithm for robust tensor decomposition.However their result requires a tensor of high order.
\cite{BarakM15} studies a related problem of tensor prediction, also using ideas of SoS hierarchies.

\subsection{Our Results}

In this paper we give a quasi-polynomial time algorithm for decomposing third-order tensors when the rank $m$ is almost as large as $n^{3/2}$ and the components of the tensor is chosen randomly. More concretely, we define $\mathcal{D}_{m,n}$ to be a distribution of third order tensors of the following form:

$$
T = \sum_{i=1}^m a_i^{\otimes 3},
$$
where the vectors $a_i\in \R^n$ are uniformly random vectors in $\{\pm \frac{1}{\sqrt{n}}\}^n$ and $a_i^{\otimes 3}$ is short for $a_i\otimes a_i\otimes a_i$. Our goal is to recover these components $a_i$'s. Since any permutation of $a_i$'s is still a valid solution, we say two decompositions are $\epsilon$-close if they are close after an arbitrary permutation:

\begin{definition}[$\epsilon$-close] Two sets of vectors $\{a_i\}_{i\in[m]}$ and $\{\hat{a}_i\}_{i\in [m]}$ in $\R^n$ are $\epsilon$-close if there exists a permutation $\pi:[m]\to [m]$ such that $\|\hat{a}_{\pi(i)} -  a_i\| \le \epsilon$. Two decompositions of the tensor $T$ are $\epsilon$-close if their components are $\epsilon$-close.
\end{definition}

For tensors in distribution $\mathcal{D}_{m,n}$ our algorithm can recover the decomposition as long as $m\ll n^{3/2}$.

\begin{theorem}
\label{thm:main}
Given a tensor $T =  \sum_{i=1}^m a_i^{\otimes 3}$ sampled from distribution $\mathcal{D}_{m,n}$, when $m \ll n^{3/2}$  there is an algorithm that runs in time $n^{O(\log n)}$ and with high probability returns a decomposition $T \approx  \sum_{i=1}^m \hat{a}_i^{\otimes 3}$ that is $0.1$-close to the true decomposition. \end{theorem}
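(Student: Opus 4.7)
The plan is to set up a sum-of-squares relaxation that (pseudo)-maximizes the cubic form $T(x,x,x) = \sum_i \langle a_i, x\rangle^3$ subject to $\|x\|^2 = 1$, and then round the resulting degree-$O(\log n)$ pseudo-expectation $\pE$ to recover a single component $a_j$. The overall algorithm then subtracts $\hat a_j^{\otimes 3}$ from $T$ and iterates until all $m$ components are recovered. The quasi-polynomial runtime $n^{O(\log n)}$ comes directly from solving the degree-$O(\log n)$ SoS SDP.

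The technical core, and the main obstacle, is the following SoS certificate: with high probability over $T \sim \mathcal{D}_{m,n}$, there is a degree-$O(\log n)$ SoS proof of the inequality
\[
T(x,x,x) \;=\; \sum_{i=1}^m \langle a_i, x\rangle^3 \;\sosle\; (1+o(1))\cdot \|x\|^{3}.
\]
Equivalently, the SoS hierarchy certifies an $(1+o(1))$ upper bound on the injective norm of a random tensor drawn from $\mathcal{D}_{m,n}$ when $m \ll n^{3/2}$. I would derive this by expanding $(\sum_i \langle a_i, x\rangle^3)^{2k}$ for $k = \Theta(\log n)$, unfolding into a matrix polynomial in $x^{\otimes k}$, and showing that this matrix is dominated (as a quadratic form) by a multiple of $(x^\t x)^{3k}$. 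The standard matrix Rosenthal / Khintchine bounds are not tight enough here once $m$ approaches $n^{3/2}$ because cross-terms $\langle a_i, a_j\rangle^3$ accumulate; this is where the decoupling step announced in the abstract becomes essential, allowing one to replace dependent copies of the $a_i$'s by independent ones inside the moment expansion and thereby gain the needed polynomial factor.

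Given this injective-norm certificate, an unknown pseudo-distribution $\pE$ that nearly saturates $\pE[T(x,x,x)] \ge 1 - o(1)$ must be ``concentrated'' near the set $\{\pm a_1,\dots,\pm a_m\}$, in the precise pseudo-distribution sense that the SoS certificate forces most of the pseudo-mass to lie in the direction of some $a_j$. I would round by the standard reweighted-Jennrich device: draw $g \sim \mathcal{N}(0,I_n)$, form $M_g = \pE\!\left[\langle g,x\rangle^{2k}\, xx^\t\right]$ for $k = \Theta(\log n)$, and output its top eigenvector $\h a$. The $\langle g, x\rangle^{2k}$ weighting amplifies whichever component $a_j$ happens to align best with $g$, so with inverse-polynomial probability over $g$ the top eigenvector of $M_g$ is $0.01$-close to some true $a_j$. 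Repeating this with fresh $g$'s recovers each of the $m$ components with high probability.

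Finally, for the iteration step I would show that subtracting $\h a_j^{\otimes 3}$ with $\|\h a_j - a_j\| \le 0.01$ leaves a residual $T' = \sum_{i \neq j} a_i^{\otimes 3} + E$ where the error $E$ has small Frobenius and spectral norm; the injective-norm certificate is robust to such perturbations, so the same SoS relaxation and rounding work on $T'$. The dominant difficulty throughout remains the degree-$O(\log n)$ SoS proof of the injective-norm bound up to $m = n^{3/2}/\polylog n$: this is where the combinatorics of the moment method, sharpened by the decoupling lemma, must be executed carefully.
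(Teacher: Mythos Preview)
Your high-level plan (SoS relaxation of $T(x,x,x)$, round, iterate) matches the paper, but two technical pieces diverge in ways that matter.

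\textbf{The injective-norm certificate.} You propose to get the SoS bound $T(x,x,x)\sosle 1+o(1)$ by expanding $(\sum_i\langle a_i,x\rangle^3)^{2k}$ for $k=\Theta(\log n)$ and controlling the resulting matrix in $x^{\otimes 3k}$. The paper instead proves a \emph{constant-degree} (degree-$12$) certificate. One Cauchy--Schwarz step gives $T(x,x,x)^2\sosle \sum_i\langle a_i,x\rangle^4+p(x)$ with $p(x)=\sum_{i\neq j}\langle a_i,a_j\rangle\langle a_i,x\rangle^2\langle a_j,x\rangle^2$; the $2$-to-$4$ term is a sharpening of \cite{BBH12}. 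For $p(x)$ the key idea is to unfold it as $y^\top M y$ in $y=x^{\otimes 2}$ using $M=\sum_{i\neq j}\langle a_i,a_j\rangle(a_i\otimes a_j)(a_i\otimes a_j)^\top$ rather than the naive $(a_i\otimes a_i)(a_j\otimes a_j)^\top$, and then bound $\|M\|$ by the decoupling theorem of \cite{decoupling} applied to auxiliary signs $\sigma_i$, followed by matrix Bernstein. So decoupling is invoked once, on a fixed $n^2\times n^2$ matrix, not inside a $2k$-fold moment expansion. The $O(\log n)$ degree in the paper is needed only for \emph{rounding}: SoS H\"older (Lemma~\ref{lem:condtion-for-theorem5.1}) lifts $\pE[\sum_i\langle a_i,x\rangle^6]\ge 1-\epsilon$ to the existence of a single $i$ with $\pE[\langle a_i,x\rangle^k]\ge e^{-O(\epsilon)k}$, which then feeds the BKS rounding (your Gaussian reweighting is essentially that black box). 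Your moment-expansion route to the certificate is not obviously wrong, but it is a substantially harder combinatorial object than what the paper actually needs, and you have misattributed where the $\log n$ degree is spent.

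\textbf{The iteration.} You subtract $\hat a_j^{\otimes 3}$ and re-solve on the residual. With $\|\hat a_j-a_j\|$ only a constant, after $m$ rounds the accumulated perturbation can have unfolding spectral norm of order $m\cdot 0.01$, far beyond the $O(1/\log n)$ the certificate tolerates; your robustness claim does not address this accumulation. The paper never modifies $T$: it instead adds SoS constraints $\langle s,x\rangle^2\le 1/8$ for each previously found $s$, which forces the next pseudo-distribution away from recovered components while leaving the degree-$12$ certificate on the original $T$ intact. Your subtraction route could be repaired by refining each $\hat a_j$ to $1/\poly(n)$ accuracy (Theorem~\ref{thm:refine}) before subtracting, but as written this step is a genuine gap.
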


Our result easily generalizes to many other distributions for $a_i$ (including a uniform random vector in unit sphere or a spherical Gaussian).

The algorithm does not output a very accurate solution (the accuracy can be improved to $\epsilon$ with an exponential dependency on $1/\epsilon$). However it is known that alternating minimization algorithms can refine the decomposition once we have a nice initial point\cite{AltTensorDecomp2014}:

\begin{theorem}[\cite{AltTensorDecomp2014}]\label{thm:refine} Given a tensor $T$ from distribution $\mathcal{D}_{m,n}$ ($m\ll n^{3/2}$), and an initial solution that is $0.1$-close to the true decomposition, then for any $\epsilon > 0$ (that may depend on $n$) there is an algorithm that runs in time $\poly(n,\log 1/\epsilon)$ that with high probability finds a refined decomposition that is $\epsilon$-close to the true decomposition.
\end{theorem}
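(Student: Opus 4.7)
My plan is to refine the initial estimate via alternating minimization: starting from the $0.1$-close initialization $\{\hat{a}_i^{(0)}\}$, I would iteratively refine each component using the current estimates of the others. A natural update is, at each iteration and for each $i$, to form the residual tensor $R_i = T - \sum_{j\neq i}(\hat{a}_j)^{\otimes 3}$ and to set $\hat{a}_i^{\text{new}} \propto R_i(\hat{a}_i,\hat{a}_i,\cdot)$, i.e.\ one step of tensor power iteration on $R_i$. The intuition is that, with good estimates for all but the $i$-th component, $R_i$ is approximately $a_i^{\otimes 3}$ plus a small error, so power iteration drives $\hat{a}_i$ toward $a_i$.

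First I would fix the correspondence: because the components $\{a_j\}$ are incoherent ($|\langle a_j,a_k\rangle|=O(\sqrt{\log n/n})$ for $j\neq k$ with high probability) and the initialization is $0.1$-close, each $\hat{a}_i^{(0)}$ can be uniquely matched to some $a_{\pi(i)}$, and this matching is preserved as long as the per-component error stays below, say, $0.2$. After relabeling, write $e_j=\hat{a}_j-a_j$. Expanding $(\hat{a}_j)^{\otimes 3}-a_j^{\otimes 3}$ to first order in $e_j$ gives $R_i = a_i^{\otimes 3}+N$ where $N$ is a sum over $j\neq i$ of rank-one tensors linear in $e_j$; contracting with $\hat{a}_i$ twice, the signal term is $\langle a_i,\hat{a}_i\rangle^2 a_i \approx a_i$, while the noise contributes (at leading order) $\sum_{j\neq i}\langle a_j,\hat{a}_i\rangle^2 e_j$ plus two symmetric terms. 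Then the main technical step is a contraction lemma: in the regime $m\ll n^{3/2}$, one would show
\[
\|e_i^{\text{new}}\|\ \le\ \rho\cdot\max_j\|e_j\|+O(\max_j\|e_j\|^2)
\]
for a constant $\rho<1$ independent of the iteration. Iterating $O(\log(1/\epsilon))$ times then yields $\epsilon$-closeness within time $\poly(n,\log(1/\epsilon))$.

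The hard part will be the contraction lemma itself in the overcomplete regime. Since $\sum_j\langle a_j,\hat{a}_i\rangle^2$ scales like $m/n\gg 1$ for $m\gg n$, a naive triangle-inequality bound on $\|\sum_{j\neq i}\langle a_j,\hat{a}_i\rangle^2 e_j\|$ gives a useless $\rho\gg 1$. I would instead try to exploit cancellations: view the map $(e_j)_j\mapsto \sum_j\langle a_j,\hat{a}_i\rangle^2 e_j$ as a linear operator on the concatenated error vector and bound its operator norm via matrix-concentration inequalities (e.g.\ matrix Bernstein applied to the independent weighted rank-one terms), together with the incoherence of the $a_j$'s and spectral control on $\sum_j a_j a_j^{\scriptscriptstyle\top}$. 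A sharp version of this bound should give $\rho=o(1)$ as long as $m\ll n^{3/2}$, and this quantitatively delicate analysis is the content of \cite{AltTensorDecomp2014}, which we invoke as a black box.
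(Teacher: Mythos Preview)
The paper does not prove this theorem; it is stated as a result of \cite{AltTensorDecomp2014} and used as a black box (see the text immediately following the statement and Corollary~\ref{cor:refine}). Your proposal is consistent with this: you sketch the alternating-minimization/tensor-power-iteration mechanism that underlies \cite{AltTensorDecomp2014} and then, correctly, defer the quantitatively delicate contraction bound in the overcomplete regime to that reference. So at the level of what the present paper actually does, you and the paper agree.

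One remark on your sketch itself: the step where you propose to bound $\bigl\|\sum_{j\neq i}\langle a_j,\hat a_i\rangle^2 e_j\bigr\|$ by applying matrix Bernstein to ``independent weighted rank-one terms'' is not quite right as stated, because the errors $e_j$ are deterministic functions of $T$ (hence of all the $a_k$'s) and are not independent of the $a_j$'s or of each other. The analysis in \cite{AltTensorDecomp2014} handles this by conditioning and by bounding the relevant linear map uniformly over all error vectors in a small ball (i.e., controlling an operator norm that depends only on the $a_j$'s), rather than by treating the $e_j$ as random. Since you ultimately invoke the reference for exactly this step, this does not affect your conclusion, but it is worth flagging.
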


Combining the two results we have an algorithm that runs in time $n^{O(\log n)}\poly\log (1/\epsilon)$ that recovers a decomposition that is component-wise $\epsilon$-close to the true decomposition.

\begin{corollary}
\label{cor:refine}
Given a tensor $T =  \sum_{i=1}^m a_i^{\otimes 3}$ sampled from distribution $\mathcal{D}_{m,n}$, when $m \ll n^{3/2}$  for any $\epsilon > 0$ there is an algorithm that runs in time $n^{O((\log n))}\poly\log(1/\epsilon)$ and with high probability returns a decomposition $T \approx  \sum_{i=1}^m \hat{a}_i^{\otimes 3}$ that is $\epsilon$-close to the true decomposition.
\end{corollary}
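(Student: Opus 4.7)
The plan is to simply compose the two algorithms from Theorem \ref{thm:main} and Theorem \ref{thm:refine}. First I would run the SoS-based algorithm from Theorem \ref{thm:main} on the input tensor $T$. This produces, in time $n^{O(\log n)}$ and with high probability, a decomposition $\{\hat a_i^{(0)}\}_{i \in [m]}$ that is $0.1$-close to the true decomposition $\{a_i\}_{i \in [m]}$.

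Second, I would feed this initial solution into the alternating-minimization refinement procedure guaranteed by Theorem \ref{thm:refine}. Since the hypothesis of Theorem \ref{thm:refine} requires exactly a $0.1$-close initial point, the output of the first stage is a valid input for the second. For the target accuracy $\epsilon$, this refinement stage runs in time $\poly(n, \log 1/\epsilon)$ and with high probability outputs a decomposition $\{\hat a_i\}_{i \in [m]}$ that is $\epsilon$-close to $\{a_i\}_{i \in [m]}$.

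To conclude, the total running time is
\[
n^{O(\log n)} + \poly(n, \log(1/\epsilon)) \;=\; n^{O(\log n)} \, \poly\log(1/\epsilon),
\]
and the overall success probability remains high by a union bound over the two stages (each of which succeeds with high probability). Composing the closeness guarantees yields an $\epsilon$-close decomposition of $T$, which is exactly the statement of Corollary \ref{cor:refine}. There is no real obstacle here: both theorems are invoked as black boxes, and the only thing to check is that the accuracy threshold $0.1$ produced by Theorem \ref{thm:main} matches the initialization radius required by Theorem \ref{thm:refine}, which it does by construction.
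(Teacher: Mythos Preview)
Your proposal is correct and matches the paper's approach exactly: the corollary is stated immediately after Theorems~\ref{thm:main} and~\ref{thm:refine} with the remark that ``combining the two results'' gives the claimed algorithm, and no further proof is given. Your two-stage composition and union bound is precisely what the paper intends.
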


The main idea in proving Theorem~\ref{thm:main} is the observation that when the tensor is generated randomly from $\mathcal{D}_{m,n}$, the true components are close to the maximizers of the multilinear form $T(x,x,x) = \sum_{i,j,k\in[n]} T_{i,j,k} x_ix_jx_k = \sum_{i=1}^m \inner{a_i,x}^3$. The maximum value of $T(x,x,x)$ on unit vectors $\|x\| = 1$ is known as the injective norm of the tensor. Computing or even approximating the injective norm is known to be hard \cite{Gurvits,harrow2013testing}. A key component of our approach is a sum-of-square algorithm (see Section~\ref{sec:prelim} for preliminaries about sum-of-square algorithms) that certifies that the injective norm of a random tensor from $\mathcal{D}_{m,n}$ is small.

\begin{theorem}
\label{thm:maininj}
For a tensor $T$ in distribution $\mathcal{D}_{m,n}$, when $m \ll n^{3/2}$ with high probability the injective norm of $T$ is bounded by $1+o(1)$. Further, this can be certified in polynomial time.
\end{theorem}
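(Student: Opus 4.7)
The plan is to construct a constant-degree (degree $6$ in $x$) sum-of-squares certificate of the polynomial inequality $T(x,x,x)\le 1+o(1)$ on the sphere $\{x:\|x\|^2=1\}$. Almost all the work is a single matrix concentration bound, while the SoS manipulations themselves are short.

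\emph{Step 1: reduction to a quartic spectral bound.} Let $w(x)=\sum_i \langle a_i,x\rangle^2 a_i\in\R^n$ so that $T(x,x,x)=\langle x,w(x)\rangle$. The Lagrange identity $\|x\|^2\|w\|^2-\langle x,w\rangle^2=\tfrac12\sum_{j,k}(x_jw_k-x_kw_j)^2$ is an SoS proof of $T(x,x,x)^2\le \|x\|^2\|w(x)\|^2$, so it suffices to SoS-certify $\|w(x)\|^2\le(1+o(1))\|x\|^4$. Expanding,
\[
\|w(x)\|^2=\sum_{i,i'}\langle a_i,a_{i'}\rangle\langle a_i,x\rangle^2\langle a_{i'},x\rangle^2=(x^{\otimes 2})^\top N\,(x^{\otimes 2}),
\]
for a fixed symmetric matrix $N\in\R^{n^2\times n^2}$ (depending only on the $a_i$'s). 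A matrix bound $N\preceq (1+o(1))I_{n^2}$ then immediately gives a degree-$4$ SoS inequality in $x$.

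\emph{Step 2: handling the mean.} A direct moment computation, using that Rademacher entries kill all odd multinomials, yields $E[N]=(m/n^2)(I+P+vv^\top)$, where $P$ is the swap operator on $\R^n\otimes\R^n$ and $v=\mathrm{vec}(I_n)$. Although $\|E[N]\|_{\mathrm{op}}=\Theta(m/n)$ due to the rank-one direction $vv^\top$, that direction is harmless for SoS: the algebraic identities
\[
(x^{\otimes 2})^\top I(x^{\otimes 2}) \;=\; (x^{\otimes 2})^\top P(x^{\otimes 2}) \;=\; (x^{\otimes 2})^\top vv^\top(x^{\otimes 2}) \;=\; \|x\|^4
\]
let us replace the entire expected part by the scalar $(3m/n^2)\|x\|^4 = o(\|x\|^4)$ for any $m=o(n^2)$. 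Writing $\Delta := N-E[N]$, the task reduces to showing $\|\Delta\|_{\mathrm{op}}=o(1)$ with high probability.

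\emph{Main obstacle: concentration of $\Delta$.} Naive matrix Bernstein on $\Delta$ gives only $\tildO{\sqrt{m/n}}=\tildO{n^{1/4}}$ at $m\sim n^{3/2}$, which is too weak by a polynomial factor. To sharpen it, I would split $\Delta=\Delta_{\mathrm{diag}}+\Delta_{\mathrm{off}}$ according to $i=i'$ versus $i\ne i'$, and handle the $U$-statistic-type piece
\[
\Delta_{\mathrm{off}} \;=\; \sum_{i\ne i'}\langle a_i,a_{i'}\rangle\,a_i^{\otimes 2}(a_{i'}^{\otimes 2})^\top
\]
by decoupling: replace one factor with an independent copy $b_{i'}$, condition on $b$, and apply matrix Bernstein to the resulting sum of $m$ independent rank-one contributions in the $b$-variables; a standard decoupling inequality then transfers the bound back to $\Delta_{\mathrm{off}}$. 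A similar combinatorial moment/decoupling argument handles $\Delta_{\mathrm{diag}}$. This decoupling step is the heart of the proof and exactly where the ``tools for decoupling random variables to prove better matrix concentration bounds'' advertised in the abstract enter; the SoS packaging in Steps~1--2 is essentially algebraic.
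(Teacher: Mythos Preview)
Your Step~1 (the Cauchy--Schwarz reduction $T(x,x,x)^2\sosle_r\|w(x)\|^2$) and the identification of decoupling as the key concentration tool both match the paper exactly. The gap is in Step~2/3: the quantity $\|\Delta\|_{\mathrm{op}}$ that you propose to bound is \emph{not} $o(1)$, so no concentration argument can close the proof along this route.

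Concretely, test $\Delta_{\mathrm{off}}=\sum_{i\ne j}\langle a_i,a_j\rangle\,a_i^{\otimes 2}(a_j^{\otimes 2})^\top$ against $v=\mathrm{vec}(I_n)$. Since $(a_j^{\otimes 2})^\top v=\|a_j\|^2=1$, one gets $\Delta_{\mathrm{off}}^\top v=\sum_i\beta_i\,a_i^{\otimes 2}$ with $\beta_i=\sum_{j\ne i}\langle a_i,a_j\rangle$. Each $|\beta_i|\asymp\sqrt{m/n}$ w.h.p., and the $a_i^{\otimes 2}$ are nearly orthonormal, so $\|\Delta_{\mathrm{off}}^\top v\|\asymp m/\sqrt{n}$ while $\|v\|=\sqrt{n}$; hence $\|\Delta_{\mathrm{off}}\|\gtrsim m/n$, which is $\Theta(\sqrt{n})$ at $m\sim n^{3/2}$. (Your subtraction of $E[N]$ does not touch this piece, since $E[\Delta_{\mathrm{off}}]=0$.) Similarly $\|\Delta_{\mathrm{diag}}\|\gtrsim\sqrt{m/n}=\Theta(n^{1/4})$, via $\Delta_{\mathrm{diag}}v=\mathrm{vec}(AA^\top-(m/n)I)$.

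The paper confronts exactly this obstruction and resolves it by changing the \emph{unfolding}, not by sharper concentration. For the off-diagonal part it uses
\[
M=\sum_{i\ne j}\langle a_i,a_j\rangle\,(a_i\otimes a_j)(a_i\otimes a_j)^\top,
\]
which has the same value on $x^{\otimes 2}$ as your $\Delta_{\mathrm{off}}$ but factors as $(a_ia_i^\top)\otimes(\cdot)$ after the sign-symmetrization; decoupling plus matrix Bernstein then gives $\|M\|=\widetilde O(m/n^{3/2})$. For the diagonal part $\sum_i\langle a_i,x\rangle^4$ the paper does \emph{not} use a degree-4 spectral bound at all; it iterates Cauchy--Schwarz once more to reach $\sum_i\langle a_i,x\rangle^6$ and closes with a degree-12 SoS argument. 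So the missing idea is precisely the choice of unfolding (and, for the diagonal term, going to higher SoS degree), not a finer decoupling of your $\Delta$.
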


Our results (Theorem~\ref{thm:main} and \ref{thm:maininj}) still hold when we are given a tensor $\tilde{T}$ that is $1/\poly(n)$-close to $T$ in the sense that the spectral norm of an unfolding of $\tilde{T} - T$ is $O(1/\poly\log(n))$. Theorem~\ref{thm:refine} (and hence Corollary~\ref{cor:refine}) requires a tensor $\tilde{T}$ such that the unfolding of $\tilde{T} - T$ has spectral norm bounded by $\epsilon/\poly(n)$.

\paragraph{Organization} The rest of this paper is organized as follows: In Section~\ref{sec:prelim} we introduce tensor notations and SoS hierarchies. Then we describe the main idea of the proof which relates tensor decomposition to the injective norm of tensor (Section~\ref{sec:mainidea}). In Section~\ref{sec:certify} we give a polynomial time algorithm for certifying the injective norm of a random 3rd order tensor. Using this as a key tool in Section~\ref{sec:alg} we present the quasi-polynomial time algorithm that can decompose randomly generated tensors when $m \ll n^{3/2}$. 

\section{Preliminaries}
\label{sec:prelim}
\paragraph{Notations} In this paper we use $\|\cdot \|$ to denote the $\ell_2$ norm of vectors and the spectral norm of matrices. That is, $\|v\| = \sqrt{\sum_i v_i^2}$ and $\|A\| = \sup_{\|u\| = 1} \|Au\|$. Note that we will be using the sum-norm instead of expectation norm $\|v\|_{exp} = \sqrt{\E_i[v_i^2]}$ because the scaling of sum-norm is more natural for the tensor decomposition setting. We use $\inner{u,v}$ to denote the inner product of $u$ and $v$. When $A$ and $B$ are two matrices, we use standard notation $A\preceq B$ to denote the fact that $B-A$ is a positive semidefinite. For a $m\times n$ matrix $U$ and a $p\times q$ matrix $V$, we define the Kronecker product $U\otimes V$ as the $mp\times nq$ block matrix 
$$U\otimes V = \left[\begin{matrix}
U_{1,1}V &  \cdots & U_{1,n}V \\
\vdots  & \ddots & \vdots  \\
U_{m,1}V &  \cdots & U_{m,n}V
\end{matrix}\right]$$

We use $\tilde{O}$ notations to hide dependencies on $\polylog$ factors in $n$ and $m$. When we write $f \ll g$ we mean $f \le g/O(\poly\log n)$. Throughout the paper high probability means the probability is at least $1- n^{-\omega(1)}$.

\paragraph{Tensors} Tensors are multi-dimensional arrays. In this paper for simplicity we only consider 3rd order symmetric tensors and their symmetric decompositions. For a third order symmetric tensor $T$, the value of $T_{i,j,k}$ only depends on the multi-set $\{i,j,k\}$, so $T_{i,j,k} = T_{j,i,k} = T_{k,i,j}$ (and more generally all the 6 permutations are equal). For a vector $v\in \R^n$, we use $v^{\otimes 3} \in \R^{n^3}$ to denote the symmetric third order tensor such that $v^{\otimes 3}_{i,j,k} = v_iv_jv_k$. Our goal is to decompose a tensor $T$ as $T = \sum_{i=1}^m a_i^{\otimes 3}$. 

There is a bijection between 3rd order symmetric tensors and homogeneous degree 3  polynomials. In particular, for a tensor $T$ we define its corresponding polynomial $T(x,x,x) = \sum_{i,j,k=1}^n T_{i,j,k} x_ix_jx_k$. It is easy to verify that if $T = \sum_{i=1}^m a_i^{\otimes3}$ then $T(x,x,x) = \sum_{i=1}^m \inner{a_i,x}^3$.

The injective norm $\|T\|_{inj}$ is defined to be the maximum value of the corresponding polynomial on the unit sphere, that is:
$$
\|T\|_{inj} := \sup_{\|x\| = 1} T(x,x,x).
$$

It is not hard to prove when $m \ll n^{3/2}$, and the tensor $T$ is chosen from the distribution $\mathcal{D}_{m,n}$, with high probability $1-o(1) \le \|T\|_{inj} \le 1+o(1)$, and in fact the value $T(x,x,x)$ is only close to $1$ if $x$ is close to one of the components $a_i$. We will give a (SoS) proof of this fact in Section~\ref{sec:alg}

\paragraph{Sum-of-Square Algorithms and Proofs} Here we will only briefly introduce the notations and key concepts that are used in this paper, for more detailed discussions and references about SoS proofs we refer readers to \cite{BarakS14} (especially Section 2). 

Sum-of-squares proof system is a proof system for polynomial equalities and inequalities. Given a set of constraints $\{r_i(x) = 0\}$, and a degree bound $d$, we say there is a degree $d$ SoS proof for $p(x) \ge q(x)$ if $p(x) - q(x)$ can be written as a sum of squares of polynomials modulo $r_i(x) = 0$, as defined formally below. 

\begin{definition} [SoS proof of degree $d$]
	For a set of constraints $R = \{r_1(x) = 0,\dots,r_{t}(x) = 0\}$, and an integer $d$, we write $$p(x)\sosge_{R,d}q(x)$$ if there exists polynomials $h_i(x)$ for $i = 0,1,\dots, \ell$ and $g_j(x)$ for $j = 1,\dots,t$ such that $\deg(h_0^2(p(x)-q(x))) \le d$, $\deg(h_i)\le d/2$ (for $i > 0$) and $\deg(g_jr_j)\le d$ that satisfy
	$$h_0(x)^2(p(x)-q(x)) = \sum_{i=1}^{\ell}h_i(x)^2 + \sum_{j=1}^{t}r_j(x)g_j(x),$$
	We will drop the subscript $d$ when it is clear form the context. % and when $R =\{r(x) = 0\}$, we will simply write $\sosge_r$ instead of $\sosge_R$. 
\end{definition}

Note that the constraints set can be easily generalized to a set of inequalities by adding auxiliary variables. For example, constraint $r(x)\ge 0$ can be implemented as $r(x) = z^2$ where $z$ is an auxiliary variable. 

%We say degree $d$ SoS {\em refutes} a set of constraints $R$ if $-1\sosge_{R,d} 0$. It is well-known\cite{parrilo2000structured,lasserre2001global} that a refutation of degree $d$ can be found in $\poly(tn^d)$ time (where $n$ is the number of variables). 
Many well-known inequalities can be proved using a low degree SoS proof, among them the most useful and important one is Cauchy-Schwarz inequality, which can be proved via degree-2 sum of squares.  Another one is that $x^TAx \sosle \|A\|\|x\|^2$. This is pretty useful when $A$ is a random matrix where we can use random matrix theory to bound the spectral norm of $A$. %We state several others of them in Appendix~\ref{} that will be useful in later proofs.

In order to turn an SoS arguments into an algorithm, we often consider the {\em pseudo-expectation}. Just as we have expectations for real distributions, we think of pseudo-expectation as expectations for pseudo-distributions that cannot be distinguished from true expectations using low degree polynomials. Pseudo-expectation can be viewed as a dual of SoS refutations.

\begin{definition} [pseudo-expectation]
	A degree $d$ pseudo-expectation $\pE$ is a linear operator that maps degree $d$ polynomials to reals. The operator satisfies $\pE[1] = 1$ and $\pE[p^2(x)] \ge 0$ for all polynomials $p(x)$ of degree at most $d/2$.
	We say a degree-$d$ pseudo-expectation $\pE$ satisfies a set of equations $\{r_i(x):i=1\dots,\ell\}$ if for any $i$ and any $q(x)$ such that $\deg(r_iq)\le d$, 
	$$\pE\left[r_i(x)q(x)\right] = 0$$
	%\Tnote{Need a version of inequality constraints.. }
\end{definition}

By definition, if $p(x)\sosle_{R,d} q(x)$, and degree-$d$ pseudo-expectation satisfies $R$, then we can take pseudo-expectation on both sides and obtain $\pE\left[p(x)\right]\le \pE\left[q(x)\right]$. We will use this property of pseudo-expectation many times in the proofs. 

The relationship between pseudo-expectations and SoS refutations can be summarized in the following informal lemma:

\begin{lemma}[~\cite{parrilo2000structured,lasserre2001global}, c.f.~\cite{BarakS14}, informal stated] For a set of constraints $R$, either there is an SoS refutation of degree $d$ that refutes $R$, or there is a degree $d$ pseudo-expectation that satisfies $R$. Such a refutation/pseudo-expectation can be found in $\poly(tn^d)$ time.
\label{lem:sos}
\end{lemma}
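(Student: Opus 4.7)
The plan is to realize the set of degree-$d$ pseudo-expectations satisfying $R$ as the feasible region of an explicit semidefinite program of size $\poly(tn^d)$, and then invoke semidefinite programming duality together with the ellipsoid (or interior-point) method. Since both of the two alternatives in the lemma are captured directly by the primal/dual sides of this SDP, the dichotomy and the runtime bound fall out of a single SDP solve.

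First, I would parametrize a candidate pseudo-expectation $\pE$ by its values $m_\alpha = \pE[x^\alpha]$ on all monomials $x^\alpha$ with $|\alpha| \le d$; this gives $N = \binom{n+d}{d} = O(n^d)$ scalar variables. The condition $\pE[p(x)^2] \ge 0$ for every $p$ of degree at most $d/2$ is equivalent to the moment matrix $M(m)$ with $M_{\alpha,\beta} = m_{\alpha+\beta}$ (indexed by $|\alpha|,|\beta| \le d/2$) being positive semidefinite; this is a single PSD constraint on an $O(n^{d/2}) \times O(n^{d/2})$ matrix. The normalization $\pE[1]=1$ is one linear equation, and each constraint $r_i \in R$ becomes the family of linear equations $\pE[r_i(x) q(x)] = 0$ as $q$ ranges over monomials with $\deg(r_i q) \le d$, a total of $O(tn^d)$ equations. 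Together these describe an SDP of size $\poly(tn^d)$ whose feasible points are exactly the degree-$d$ pseudo-expectations satisfying $R$.

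Second, I would apply SDP duality to produce the dichotomy. If the SDP is feasible, any feasible point is by construction the desired pseudo-expectation. If the SDP is infeasible, a semidefinite Farkas-type theorem of alternatives produces a separating certificate: a PSD matrix $Q$ in the dual PSD cone together with multipliers $\{g_j(x)\}$ for the equality constraints. Writing $Q = \sum_i v_i v_i^\t$ and identifying each $v_i$ with the coefficient vector of a polynomial $h_i(x)$ of degree at most $d/2$, the dual certificate rearranges into an identity of the form $-1 = \sum_i h_i(x)^2 + \sum_j r_j(x) g_j(x)$ as a polynomial of degree at most $d$, which is exactly a degree-$d$ SoS refutation of $R$ (in the notation of the paper's definition, take $h_0 \equiv 1$ and $p - q \equiv -1$, so that $\sosge_{R,d}$ yields $0 \sosge_{R,d} 1$, a contradiction on any real point of $R$). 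The ellipsoid method applied to this SDP, with explicit a priori bounds on the variables coming from normalization and PSDness, returns in time $\poly(tn^d)$ either a primal feasible point (the pseudo-expectation) or a dual infeasibility certificate (the SoS refutation).

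The main obstacle is strong duality: SDP satisfies only weak duality in general, and strong duality requires a Slater-type constraint qualification that may fail here, so pure infeasibility of the primal SDP need not yield an exact SoS refutation. This is precisely why the statement is labelled informal. The standard workaround, as laid out in the Barak--Steurer survey \cite{BarakS14}, is to parametrize an error $\epsilon>0$ and consider $\epsilon$-approximate pseudo-expectations satisfying $\pE[p(x)^2] \ge -\epsilon \|p\|^2$; solving this relaxed SDP to accuracy $\epsilon$ costs only a $\poly(\log(1/\epsilon))$ overhead and guarantees a clean primal/dual dichotomy at the $\epsilon$-approximate level. One then verifies (as the paper implicitly does at every invocation of the lemma) that this approximate version suffices for all downstream uses, so that the polynomial dependence on $\log(1/\epsilon)$ is absorbed into the stated $\poly(tn^d)$ runtime. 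Apart from this technicality, everything reduces to routine manipulation of the SDP primal and its dual.
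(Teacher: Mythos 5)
The paper does not prove this lemma; it cites it (to Parrilo, Lasserre, and the Barak--Steurer survey) and explicitly flags it as informally stated. Your proposal is a correct reconstruction of the standard argument behind that citation: parametrize $\pE$ by its moment vector, encode PSD-ness of the moment matrix plus the linear constraints $\pE[r_i q]=0$ as an SDP of size $\poly(t n^d)$, and read off the refutation from a dual Farkas certificate. You are also right that the only real subtlety is strong duality/numerical precision, which is exactly why the lemma is labelled informal, and the $\epsilon$-approximate pseudo-expectation workaround you describe is the standard fix used in the survey. In short, your argument is sound and matches the cited proof; there is nothing in the paper itself to compare it against.
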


\section{Relating Tensor Decompositions and Injective Norm}
\label{sec:mainidea}
In this section we introduce the main idea of our proof. Given a tensor $T = \sum_{i=1}^m a_i^{\otimes 3}$ from distribution $\mathcal{D}_{m,n}$, we first make some observations about its corresponding polynomial $T(x,x,x) = \sum_{i=1}^m \inner{a_i,x}^3$.

When $x = a_1$, we know $T(a_1,a_1,a_1) = 1 + \sum_{i=2}^m \inner{a_i,a_1}^3$. Here conditioned on $a_1$, the second term is a sum of independent random variables ($\inner{a_i,a_1}^3$). By the distribution $\mathcal{D}_{m,n}$ we know these variables have mean 0 and absolute value around $1/n^{3/2}$. Standard concentration bounds show when $m \ll n^{3/2}$ with high probability $T(a_1,a_1,a_1) = 1\pm o(1)$. 

On the other hand, suppose $x$ is a random vector in the unit sphere, then $T(x,x,x) = \sum_{i=1}^m \inner{a_i,x}^3$ is again a sum of random variables. By concentration bounds we know for any particular $x$, when $m\ll n^{3/2}$ with high probability $T(x,x,x) = o(1)$. This can actually be generalized to all vectors $x$ that do not have large correlation with $a_i$'s using $\epsilon$-net arguments. %Therefore we observe

\newtheorem*{observation}{Observation}

\begin{observation} For a random tensor $T\sim \mathcal{D}_{m,n}$, when $m = n^{3/2}$ with high probability $T(x,x,x) \le 1+o(1)$ for $\|x\| = 1$. Further when $T(x,x,x)$ is close to $1$ the vector $x$ is close to one of the components $a_i$'s. 
\end{observation}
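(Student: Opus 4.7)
I would prove the two assertions in tandem by partitioning the unit sphere into a ``peaked'' region $\mathcal{P}_\delta := \{x : \max_i |\inner{a_i, x}| \ge 1 - \delta\}$ and a ``spread'' region $\mathcal{S}_\delta$, and targeting $T(x,x,x) \le 1 + o(1)$ on $\mathcal{P}_\delta$ together with $T(x,x,x) = o(1)$ on $\mathcal{S}_\delta$. The latter bound directly yields the ``close to $1$ implies close to some $a_j$'' claim, while the matching lower bound $\|T\|_{\mathrm{inj}} \ge 1 - o(1)$ is witnessed by plugging in $x = a_j$.

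\emph{Step 1 (lower bound and peaked region).} Evaluating at $x = a_j$ gives $T(a_j, a_j, a_j) = 1 + \sum_{i \ne j}\inner{a_i, a_j}^3$. Conditional on $a_j$, each $\inner{a_i, a_j}$ is a Rademacher sum of variance $1/n$, so its cube is mean zero with variance $O(1/n^3)$ and magnitude at most $1$, and Bernstein forces the error to $o(1)$ since $m \ll n^3$. For a general peaked $x$, decompose $x = \alpha a_j + \beta v$ with $v \perp a_j$, $\alpha \ge 1 - \delta$, $\beta = O(\sqrt{\delta})$, and expand $T(x,x,x)$ by trilinearity: the leading term is $\alpha^3 T(a_j, a_j, a_j) \le 1 + o(1)$, and each cross term carries a factor of $\beta$ and reduces to a sum $\sum_i \inner{a_i, a_j}^p \inner{a_i, v}^q$ with $p + q = 3$, $p \ge 1$, that can be controlled by Bernstein (for the fully off-diagonal contribution) combined with spectral bounds on $\sum_i a_i a_i^\t \approx (m/n) I$.

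\emph{Step 2 (spread region).} For fixed $x \in \mathcal{S}_\delta$, $\sum_i \inner{a_i, x}^3$ is a sum of independent bounded mean-zero variables of per-term variance $O(1/n^3)$ and total variance $O(m/n^3) = o(1)$ when $m \ll n^{3/2}$, so per-point Bernstein concentrates it at $o(1)$. To make the bound uniform I would union-bound over a $1/\poly(n)$-net of the sphere and absorb the Lipschitz error using the spectral norm of the $n \times n^2$ unfolding of $T$, which is $O(\sqrt{m/n}) = O(n^{1/4})$ for $m \ll n^{3/2}$ and hence negligible against a polynomially small net spacing.

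\textbf{Main obstacle.} Step~2 is where the analysis is delicate: the cube $\inner{a_i, x}^3$ is heavy-tailed relative to its variance, so Bernstein alone gives only sub-exponential (not sub-Gaussian) tails, and at the $o(1)$ scale a naive Chernoff+net argument may not beat the $\exp(\Omega(n))$ entropy of the unit sphere. The cleanest remedy, which the paper pursues in Section~\ref{sec:certify} via the SoS framework, is to bound the injective norm through a spectral analysis of an unfolding of a suitable degree-$4$ polynomial in the entries of $T$; this reduces the non-uniform concentration problem to a single random-matrix concentration bound on a structured matrix, and as an added bonus produces an \emph{efficiently verifiable} certificate, which is exactly what Theorem~\ref{thm:maininj} requires.
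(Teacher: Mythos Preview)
Your self-diagnosis in the ``Main obstacle'' paragraph is correct, and it is fatal to Step~2 as written. For a fixed unit $x$ the summands $\inner{a_i,x}^3$ have variance $O(1/n^3)$ but are only bounded by~$1$, so Bernstein yields a per-point tail of order $\exp(-O(t))$ at any constant scale $t$; this cannot survive a union bound over an $\exp(\Theta(n\log n))$-sized net, and truncating at a level $s\ll 1$ does not help because the complementary event $\{\max_i|\inner{a_i,x}|>s\}$ then has probability far above $e^{-n}$. A smaller point: the $n\times n^2$ unfolding of $T$ has spectral norm $\Theta(m/n)=\Theta(n^{1/2})$ here, not $O(n^{1/4})$, so the $\beta^3\,T(v,v,v)$ cross-term in your Step~1 expansion is only bounded a priori by $O(\delta^{3/2}n^{1/2})$ and is circular without further input.

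The paper does not try to repair the net argument. Even for the bare existential statement it goes through Claim~\ref{clm:cs},
\[
T(x,x,x)^2 \ \le\ \sum_i\inner{a_i,x}^4 \ +\ \sum_{i\neq j}\inner{a_i,a_j}\,\inner{a_i,x}^2\inner{a_j,x}^2,
\]
and bounds each piece by the spectral norm of an explicit $n^2\times n^2$ random matrix (Lemmas~\ref{lem:2-4norm} and~\ref{lem:px}, the latter via decoupling plus matrix Bernstein on $M=\sum_{i\neq j}\inner{a_i,a_j}(a_i\otimes a_j)(a_i\otimes a_j)^\t$). Uniformity over $x$ is thus absorbed into a single operator-norm estimate, sidestepping the entropy obstruction entirely; the SoS layer on top is what converts this into an efficient certificate. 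The second assertion is likewise obtained from this machinery --- the final claim in Appendix~\ref{sec:appendix:algmain} applies Lemma~\ref{lem:condtion-for-theorem5.1} to the point-mass pseudo-expectation at $c$ --- rather than from a perturbative expansion around $a_j$. So your outline correctly locates the obstruction and names the right kind of remedy, but the route the paper actually takes is structurally different from the peaked/spread dichotomy on both halves of the Observation.
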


Later we will give a SoS proof for this observation. Based on this observation, if we want to find a component, then it suffices to find a vector $x$ such that $T(x,x,x)$ is close to $1$. Using the idea of pseudo-expectations, we can do this in two steps:

\begin{enumerate}
\item Find a pseudo-expectation $\pE[x]$ that satisfies the constraint $\|x\|^2 - 1 = 0$ and maximizes $\pE[T(x,x,x)]$.
\item ``Sample'' from this pseudo-distribution with psuedo-expectations $\pE$ to get a vector $x$ such that $T(x,x,x) \approx 1$, in particular $x$ will be close to one of the components $a_i$'s.
\end{enumerate}

In Section~\ref{sec:certify} we will prove the first part of the observation. In particular we show even though we are maximizing over pseudo-expectation $\pE[x]$ (instead of real distributions over $x$), we can still guarantee the maximum value $\pE[T(x,x,x)]$ is at most $1+1/\log n$ with high probability.

In Section~\ref{sec:alg} we give algorithms for finding a component given a pseudo-expectation $\pE$ with $\pE[T(x,x,x)] \approx 1$. The main idea of our algorithm is similar to the robust tensor decomposition algorithm in \cite{BKS14}: first we show there must be a component $a_i$ such that $\pE[\inner{a_i,x}^d]$ is large for a large $d$, then we use ideas in \cite{BKS14} to find the component $a_i$.

\section{Certifying Injective Norm}
\label{sec:certify}
\begin{algorithm}\caption{Certifying Injective Norm}\label{alg:certifying-norm}
	\begin{algorithmic}[]
		\REQUIRE A random 3-tensor $T$
		\ENSURE If $\|T\|_{\textrm{inj}} > 1+1/\log n$, return NO. If $T\sim \mathcal{D}_{m,n} (m\ll n^{3/2})$, then w.h.p. return YES. \\
		
		\STATE Solve the following optimization and obtain optimal value $\textrm{OPT}$
		\begin{eqnarray}
		\textrm{Maximize} && \pE \left[T(x,x,x)\right] \nonumber\\
		\textrm{Subject to } && \pE \textrm{ is a degree-12 pseudo-expectation}\\
		&& \textrm{that satisfies $\{r(x) = \|x\|^2 - 1 = 0\}$}
		%\pE(r(x)g(x)) = 0 \textrm{ for } r(x) = \|x\|^2 - 1 \textrm{ and any $g$ with $\deg(g)\le 10$}
		%\textrm{that is consistent with $r(x) = \|x\|^2  - 1 = 0$}
		\end{eqnarray}
		
		\RETURN YES if $ \textrm{OPT} \le 1+1/\log n$ and NO otherwise. 
		\end{algorithmic}
	\end{algorithm}
	%pseudo 
%%\textbf{Initialize} $\tilA^
%$ that is $(\delta_0, 2)$-near to $\trA$
%
%\textbf{Repeat}  for $s = 0, 1, ..., T$
%\vspace{-0.1in}
%\begin{align*}
%\textbf{\textbf{Decode: }} & \tilx = \textrm{threshold}_{C/2}((\tilA^s)^Ty)  \textrm{ for each sample $y$}\nonumber\\
%%\textrm{ for $i = 1, 2, ... , p$}\nonumber\\
%\textbf{\textbf{Update: }} & \tilA^{s+1} = \tilA^s - \eta g^{s} \mbox{ where } g^{s} = \Exp[(y - \tilA^s \tilx) \tilx^T] \quad \nonumber\\
%%= \frac{1}{p}\sum_{i=1}^p (y^{(i)} - \tilA^s \tilx^{(i)}) (\tilx^{(i)})^T \quad \nonumber\\
%\textbf{\textbf{Project: }}& \tilA^{s+1} = \mbox{Proj}_{\mathcal{B}} \tilA^{s+1} \mbox{(where $\mathcal{B}$ is defined in Definition~\ref{def:whiten})}\nonumber
%\end{align*}

%\textbf{Input: } A random 3-tensor $T$\\
%
%\textbf{Output: } If $|T|_{\textrm{inj}} > 1+\epsilon$, returns 0. If $T\sim \mathcal{D}_{m,n}$, then with high probability returns 1. \\
%\vspace{0.1in}
%\noindent Solve the convex optimization problem: 

In this section, we give Algorithm~\ref{alg:certifying-norm} based on SoS hierarchy that certifies the injective norm of random tensor. In particular, we will prove Theorem~\ref{thm:maininj} which we restate in more details here.

\begin{theorem}\label{thm:maininjdet}
Algorithm~\ref{alg:certifying-norm}  always returns NO when $\|T\|_{\textrm{inj}} > 1+1/\log n$. When $T\sim \mathcal{D}_{m,n}$ and $m\ll n^{3/2}$, Algorithm~\ref{alg:certifying-norm} returns YES with high probability over the randomness of $T$. Further, the same guarantee holds given an approximation $\tilde{T}$ where if $M \in \R^{n\times n^2}$ is an unfolding of $T - \tilde{T}$, $\|M\| \le 1/2\log n$. 
\end{theorem}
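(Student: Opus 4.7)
The statement has three components, and I would tackle them in order. First, for completeness: if $\|T\|_{\textrm{inj}} > 1 + 1/\log n$, pick a unit vector $x^\star$ with $T(x^\star,x^\star,x^\star) > 1+1/\log n$; the Dirac functional $\pE[p(x)] := p(x^\star)$ is a valid degree-$12$ pseudo-expectation satisfying $\|x\|^2 - 1 = 0$, so $\textrm{OPT}$ is at least $1 + 1/\log n$ and the algorithm returns NO. This part is immediate and does not use the random structure of $T$. For the soundness direction on random $T$, by SoS duality (Lemma~\ref{lem:sos}) it suffices to exhibit, with high probability over $T\sim \mathcal{D}_{m,n}$, a degree-$12$ SoS proof of $T(x,x,x) \sosle 1 + 1/\log n$ modulo $\|x\|^2 - 1 = 0$.

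My plan for the SoS proof is to convert $T(x,x,x)$ into bilinear forms in tensor powers of $x$, whose values can be bounded by spectral norms of explicit random matrices. Writing $T(x,x,x) = \langle x, T(x,x,\cdot)\rangle$ with $T(x,x,\cdot) = \sum_i \langle a_i,x\rangle^2 a_i$, SoS Cauchy--Schwarz gives
\[
T(x,x,x)^2 \sosle \|x\|^2 \cdot \|T(x,x,\cdot)\|^2 = \|x\|^2 \Bigl(\sum_i \langle a_i,x\rangle^4 + \sum_{i\ne j}\langle a_i,a_j\rangle \langle a_i,x\rangle^2 \langle a_j,x\rangle^2 \Bigr).
\]
The first sum is the bilinear form of $M := \sum_i (a_i^{\otimes 2})(a_i^{\otimes 2})^\t$ evaluated at $x^{\otimes 2}$, and the second is the bilinear form of $N := \sum_{i\ne j}\langle a_i,a_j\rangle (a_i^{\otimes 2})(a_j^{\otimes 2})^\t$ at $x^{\otimes 2}$. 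Both translate into degree-$4$ SoS inequalities in $x$ via the generic bound $y^\t H y \sosle \|H\| \|y\|^2$. A single round of Cauchy--Schwarz is unlikely to suffice: the correct value at $x = a_1$ is $\approx 1$, but the crude bound $T(x,x,x) \le \|T(x,x,\cdot)\| = O(\sqrt{m/n})$ loses an $n^{1/4}$ factor. I therefore expect to iterate, squaring once more to analyze $T(x,x,x)^4$, which produces bilinear forms in $x^{\otimes 4}$ of degree $8$ and demands degree-$12$ SoS certificates overall---matching the degree in Algorithm~\ref{alg:certifying-norm}.

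The main obstacle is getting matrix concentration bounds strong enough to push the argument through. Naive bounds are too weak: $\|M\|$ is dominated by the symmetric ``trace'' direction $v = \frac{1}{\sqrt n}\sum_j e_j\otimes e_j$ and only satisfies $\|M\| = \Theta(m/n)$. The remedy is that $x\otimes x$ has overlap only $\|x\|^2/\sqrt n = 1/\sqrt n$ with $v$, so the bilinear form $(x\otimes x)^\t M (x\otimes x)$ is actually $O(m/n^2) = o(1)$ for $m \ll n^{3/2}$. Concretely I would split $M = \lambda vv^\t + M'$, handle the rank-one piece by hand (it contributes $\lambda\|x\|^4/n$ modulo $\|x\|^2=1$), and bound $\|M'\|$ by matrix concentration, presumably after the decoupling trick alluded to in the abstract; an analogous split controls $N$, whose expectation vanishes by oddness of the random signs. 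Once the principal bound is in place, the robustness claim follows cheaply: writing $\tilde T(x,x,x) - T(x,x,x) = x^\t M_\Delta (x\otimes x)$ for $M_\Delta$ the $n \times n^2$ unfolding of $\tilde T - T$, SoS Cauchy--Schwarz gives $(x^\t M_\Delta (x\otimes x))^2 \sosle \|M_\Delta\|^2 \|x\|^6$, so $|\pE[\tilde T(x,x,x) - T(x,x,x)]| \le \|M_\Delta\| \le 1/(2\log n)$, and the threshold $1+1/\log n$ is preserved with room to spare.
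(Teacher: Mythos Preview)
Your treatment of the NO case, the reduction to an SoS certificate via Lemma~\ref{lem:sos}, the opening Cauchy--Schwarz step (Claim~\ref{clm:cs} in the paper), and the noise-robustness argument are all correct and match the paper. The gap is in how you propose to bound the two resulting terms.

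\textbf{The cross term.} You propose to control $p(x)=\sum_{i\ne j}\langle a_i,a_j\rangle\langle a_i,x\rangle^2\langle a_j,x\rangle^2$ through the spectral norm of $N=\sum_{i\ne j}\langle a_i,a_j\rangle(a_i^{\otimes 2})(a_j^{\otimes 2})^\t$, with a rank-one split and decoupling. This cannot work. Write $N=BGB^\t$ with $B=[a_1^{\otimes 2}\,|\,\cdots\,|\,a_m^{\otimes 2}]$ and $G=A^\t A-I$. The matrix $G$ has an entire $n$-dimensional eigenspace (the row space of $A$) with eigenvalue $\approx m/n$, and for any unit $w$ in that eigenspace orthogonal to $\mathbf{1}$ one has $\|Bw\|\approx 1$ (since $B^\t B\approx I+\tfrac{1}{n}J$), giving $\|N\|\gtrsim m/n\approx n^{1/2}$ when $m\approx n^{3/2}$. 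The bad set of directions is $(n-1)$-dimensional, not rank one, so no single subtraction rescues the bound; and decoupling the sign $\sigma_i\sigma_j$ does not touch this obstruction because the factors $a_i^{\otimes 2}$, $a_j^{\otimes 2}$ are sign-invariant. The paper's key step---which your plan is missing---is to \emph{re-unfold} the same polynomial as
\[
p(x)=(x\otimes x)^\t\Bigl(\sum_{i\ne j}\langle a_i,a_j\rangle\,(a_i\otimes a_j)(a_i\otimes a_j)^\t\Bigr)(x\otimes x),
\]
and it is \emph{this} matrix whose spectral norm is $\tilde O(m/n^{3/2})$ after the sign-decoupling of~\cite{decoupling} and matrix Bernstein (Lemma~\ref{lem:px}). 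The choice of unfolding, not just decoupling, is the idea doing the work.

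\textbf{The diagonal term.} Your claim that $(x\otimes x)^\t M(x\otimes x)=O(m/n^2)$ is not right: at $x=a_1$ this form equals $\sum_i\langle a_i,a_1\rangle^4\ge 1$. What is true is that the trace-direction \emph{contribution} is $O(m/n^2)$; you would then still need $\|M-\lambda vv^\t\|\le 1+o(1)$, which is not a one-liner (the off-diagonals of the associated Gram matrix are $\langle a_i,a_j\rangle^2-1/n$, correlated across $i,j$). The paper sidesteps this by applying Cauchy--Schwarz once more to $\sum_i\langle a_i,x\rangle^4$ itself, reducing to $\sum_i\langle a_i,x\rangle^6$, which is controlled by the Gram matrix of $\{a_i^{\otimes 3}\}$ via Gershgorin (Lemma~\ref{lem:2-4norm}); the resulting cross term there is handled by a crude Cauchy--Schwarz. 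This second squaring is also what forces the degree-$12$ certificate, not a squaring of $T(x,x,x)^2$ as you suggest.
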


When $\|T\|_{\textrm{inj}} > 1+1/\log n$, then by definition there must be a vector $x^*$ that satisfies $\|x^*\| = 1$ and $T(x^*,x^*,x^*) > 1/\log n$. We can take $\pE$ to be the expectation of a distribution that is only supported on $x^*$ (i.e. with probability $1$ $x = x^*$). Clearly this pseudo-expectation is valid, and OPT will be at least larger than $1/\log n$. Hence the algorithm returns NO.

For random tensor $T$, we hope to show that with high probability, the tensor norm is less than $1+1/\log n$ can be proved via SoS. 
\begin{theorem}\label{thm:proof}
With high probability over the randomness of the tensor $T$, for $r(x) = \|x\|^2 - 1$, % for $\delta = $, %there exists polynomial $g(x)$, and $h_1(x),\dots,h_{\ell}(x)$  with degree at most $??$ such that 
\begin{equation}
T(x,x,x) \sosle_{r,12} 1+\widetilde{O}(m/n^{3/2}) \label{eqn:sos-proof-main}
\end{equation}
\end{theorem}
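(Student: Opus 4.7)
The plan is to prove $T(x,x,x) \sosle_{r,12} 1 + \tilde O(m/n^{3/2})$ by bounding $T(x,x,x)^4$ as a quadratic form on a flattening of the 6-tensor $T\otimes T$, and then extracting a bound on $T(x,x,x)$ itself via the elementary SoS identities $2p \sosle p^2 + 1$ and $2p^2 \sosle p^4 + 1$ (both instances of $(p-1)^2 \sosge 0$ and $(p^2-1)^2 \sosge 0$). Composed, these give $T \sosle (T^4 + 3)/4$, which is a degree-12 certificate since $T(x,x,x)$ has degree $3$.

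First, I reshape the problem into a 3-tensor on $\R^{n^2}$. Setting $y = x\otimes x$, the constraint $\|x\|^2 = 1$ yields $\|y\|^2 = 1$, and
$$T(x,x,x)^2 \;=\; \sum_{i,j}\langle a_i\otimes a_j,\,y\rangle^3 \;=\; T'(y,y,y),$$
where $T' = \sum_{i,j}(a_i\otimes a_j)^{\otimes 3}$ is a random $3$-tensor in $\R^{n^2}$ whose $m^2$ components $a_i\otimes a_j$ are unit vectors. Let $M' \in \R^{n^2\times n^4}$ be the $(1;2)$-unfolding of $T'$, so that $T'(y,y,y) = y^\top M'(y\otimes y)$. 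One application of SoS Cauchy--Schwarz (degree $2$) gives
$$T(x,x,x)^4 \;=\; T'(y,y,y)^2 \;\sosle\; \|y\|^2 \cdot (y\otimes y)^\top (M')^\top M'(y\otimes y) \;=\; (x^{\otimes 4})^\top Q\, (x^{\otimes 4}),$$
where $Q := (M')^\top M' \in \R^{n^4\times n^4}$.

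Next, decompose $Q = Q_\mathrm{sig} + Q_\mathrm{noise}$. The expectation $\mathbb E[Q]$ is dominated by a spike in the direction $v := \operatorname{vec}(I_{n^2}) \in \R^{n^4}$, which has $\|v\|^2 = n^2$ so that $vv^\top$ has operator norm $n^2$. The crucial point, however, is that $\langle v, x^{\otimes 4}\rangle = \|x\|^4 = 1$ modulo $\|x\|^2 - 1 = 0$, so the spike contributes only a constant $\sim 1$ to the quadratic form we care about, rather than its enormous operator norm. After peeling off this spike and a few lower-order symmetric pieces (which are handled analogously using the constraint), the task reduces to showing the spectral-norm bound $\|Q_\mathrm{noise}\| \le \tilde O(m^2/n^3)$, which is $o(1)$ exactly when $m \ll n^{3/2}$.

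The main obstacle is this last spectral-norm bound. The rank-one summands making up $Q$ have the form $\langle a_i,a_k\rangle \langle a_j,a_l\rangle \cdot ((a_i\otimes a_j)^{\otimes 2})((a_k\otimes a_l)^{\otimes 2})^\top$, and they share underlying $a_i$'s across many indices, so direct matrix Bernstein is far too lossy (it would give a bound of order $\tilde O(\sqrt{m/n})$, useless at $m \sim n^{3/2}$). I expect to follow the decoupling strategy alluded to in the introduction: introduce independent copies $a_i',a_j',a_k',a_l'$ of the random components, use a symmetrization argument to bound $\|Q_\mathrm{noise}\|$ in terms of sums in which the factors across the Kronecker products are genuinely independent, and then apply matrix Bernstein to these decoupled sums. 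Once this yields $\|Q_\mathrm{noise}\| \le \tilde O(m^2/n^3)$ with high probability, combining with the signal bound gives $T(x,x,x)^4 \sosle 1 + \tilde O(m^2/n^3)$, and applying the SoS identity $T \sosle (T^4+3)/4$ then yields the claimed $T(x,x,x) \sosle 1 + \tilde O(m/n^{3/2})$ at degree $12$.
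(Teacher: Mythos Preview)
Your degree-12 bookkeeping via $T\sosle(T^4+3)/4$ is fine, but the central spectral claim $\|Q_{\mathrm{noise}}\|\le\widetilde{O}(m^2/n^3)$ is false. Consider the unit vector $u=a_1^{\otimes 4}$. Since $(x^{\otimes 4})^\top Q\,x^{\otimes 4}=\bigl(\sum_{i,k}\langle a_i,a_k\rangle\langle a_i,x\rangle^2\langle a_k,x\rangle^2\bigr)^2$, plugging in $x=a_1$ gives $u^\top Q u=1+o(1)$. On the other hand, your signal terms contribute almost nothing in this direction: the spike gives $\alpha\langle v,u\rangle^2=\alpha\cdot\|a_1\|^8=O(m^2/n^4)$, and any additional deterministic ``symmetric'' pieces (permutation operators, multiples of the identity or of the symmetric projector, etc.) either carry coefficient $O(m^2/n^4)$, or, if taken with coefficient $\Theta(1)$ to match $u$, create an equally large discrepancy on any unit $w\in\ker(M')$ (nontrivial since $\mathrm{rank}\,M'\le m^2<n^4$), where $Qw=0$. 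Hence $\|Q-Q_{\mathrm{sig}}\|\gtrsim 1$ regardless of how you assemble $Q_{\mathrm{sig}}$ from such pieces. The root cause is that $Q$ has $m$ distinct near-unit eigendirections $a_i^{\otimes 4}$, one per component; these are essentially orthogonal to $v$ and cannot be captured by instance-independent signal terms.

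The paper sidesteps this by squaring only once and splitting the resulting degree-4 form as $T(x,x,x)^2\sosle\sum_i\langle a_i,x\rangle^4+p(x)$ with $p(x)=\sum_{i\ne j}\langle a_i,a_j\rangle\langle a_i,x\rangle^2\langle a_j,x\rangle^2$ (Claim~\ref{clm:cs}). The first term is precisely the piece responsible for the unit-size eigendirections above; it is \emph{not} bounded by an operator-norm argument but by a separate bootstrap to degree~6 (Lemma~\ref{lem:2-4norm}). Only the cross term $p(x)$ is handled spectrally, and the key idea there is the choice of unfolding: one writes $p(x)=(x\otimes x)^\top M(x\otimes x)$ with $M=\sum_{i\ne j}\langle a_i,a_j\rangle(a_i\otimes a_j)(a_i\otimes a_j)^\top$ rather than the naive $(a_i\otimes a_i)(a_j\otimes a_j)^\top$, and for this $M$ decoupling plus matrix Bernstein genuinely yield $\|M\|\le\widetilde{O}(m/n^{3/2})$ (Lemma~\ref{lem:px}). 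Your matrix $Q$ conflates both pieces, and the 2-to-4 norm piece cannot be controlled by operator norm alone.
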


Note that taking pseudo-expectation $\pE$ on both hand sides of~\eqref{eqn:sos-proof-main}, for any degree-12 pseudo-expectation $\pE$ that is consistent with $r(x)$, 

$$\pE\left[T(x,x,x)\right] \le 1 + \widetilde{O}(m/n^{3/2})$$

That is, when $m\ll n^{3/2}$, the objective value of the convex program in Algorithm~\ref{alg:certifying-norm} is less than $1+1/\log n$ with high probability for random tensor. 

Now we need to prove Theorem~\ref{thm:proof}. We first use Cauchy-Schwarz inequality to transform LHS of~\eqref{eqn:sos-proof-main} to a degree-4 polynomial, which would then correspond to 4th order tensors and enable non-trivial unfoldings. % is a degree-3 polynomial. Typically even degree polynomial is easier for us to handle because we could potentially view it as a degree two polynomial and apply random matrix theory. %To %this end, we note that if $p(x)^2\sosle_{R,k} a^2$ for positive real number $a$, then by simple manipulations we have $$p(x) - a \sosle_{R,k}\frac{1}{2a}(p(x)-a)^2\sosle_{R,k'}0$$
%where $k' = \max\{k,2\deg(p)\}$. 
%By Lemma~\ref{lem:square-root}, it suffices to bound the square of the LHS of~\eqref{eqn:sos-proof-main} by $1+o(1)$ using SoS proofs. All of our proofs in this section only involve constant degrees. So we drop the subscript for degree in this section. 

\begin{claim}\label{clm:cs}
\begin{equation}
[T(x,x,x)]^2 \sosle_{r,12} \underbrace{\sum_{i=1}^m \inner{a_i,x}^4}_{\textrm{2-4 norm}} + \underbrace{\sum_{i\neq j}\inner{a_i,a_j}\inner{a_i,x}^2\inner{a_j,x}^2}_{:= p(x)}. \label{eqn:two-terms}
\end{equation}
\end{claim}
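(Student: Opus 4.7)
The plan is to recognize that both sides of the desired inequality admit a compact reformulation in terms of a single vector, after which the claim reduces to an SoS Cauchy--Schwarz step combined with the constraint $\|x\|^2 = 1$.

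First I would introduce the auxiliary (polynomial-valued) vector
\[
v(x) \;=\; \sum_{i=1}^m \inner{a_i,x}^2 \, a_i \;\in\; \R^n,
\]
whose entries are polynomials of degree $2$ in $x$. A direct expansion gives
\[
\inner{v(x), x} \;=\; \sum_{i=1}^m \inner{a_i,x}^2 \inner{a_i,x} \;=\; \sum_{i=1}^m \inner{a_i,x}^3 \;=\; T(x,x,x),
\]
so the left-hand side of~\eqref{eqn:two-terms} equals $\inner{v(x),x}^2$ identically. Similarly,
\[
\|v(x)\|^2 \;=\; \sum_{i,j} \inner{a_i,a_j}\inner{a_i,x}^2\inner{a_j,x}^2 \;=\; \sum_{i=1}^m \inner{a_i,x}^4 \;+\; \sum_{i\neq j}\inner{a_i,a_j}\inner{a_i,x}^2\inner{a_j,x}^2,
\]
using $\|a_i\|^2 = 1$ on the diagonal terms. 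Thus the right-hand side of~\eqref{eqn:two-terms} is exactly $\|v(x)\|^2$, and the whole claim collapses to proving
\[
\inner{v(x),x}^2 \;\sosle_{r,12}\; \|v(x)\|^2.
\]

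Next I would invoke the SoS form of Cauchy--Schwarz: for any two vectors $u,y\in\R^n$,
\[
\|u\|^2\|y\|^2 - \inner{u,y}^2 \;=\; \sum_{i<j}(u_i y_j - u_j y_i)^2,
\]
which is a genuine sum of squares. Applying this with $u = v(x)$ (entries of degree $2$) and $y = x$ (entries of degree $1$), each polynomial $v_i(x) x_j - v_j(x) x_i$ has degree $3$, so the squares have degree $6$, giving
\[
\inner{v(x),x}^2 \;\sosle_{\emptyset, 6}\; \|v(x)\|^2 \, \|x\|^2.
\]
Finally, I would absorb the factor $\|x\|^2$ using the constraint $r(x) = \|x\|^2 - 1 = 0$: since $\|v(x)\|^2\|x\|^2 - \|v(x)\|^2 = \|v(x)\|^2 \cdot r(x)$, and $\deg(\|v(x)\|^2 \cdot r(x)) = 6 \le 12$, we obtain $\|v(x)\|^2\|x\|^2 \sosle_{r,12} \|v(x)\|^2$. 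Chaining the two steps yields~\eqref{eqn:two-terms}.

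There is no real obstacle here: the degree bound $12$ is quite loose (degree $6$ suffices), and both reductions are algebraic identities combined with the standard SoS certificate for Cauchy--Schwarz. The only thing to watch is the bookkeeping on degrees, namely that the squares $(v_i x_j - v_j x_i)^2$ and the multiplier $\|v(x)\|^2$ of the constraint $r(x)$ all sit comfortably within the allotted degree budget.
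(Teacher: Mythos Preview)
Your proof is correct and is essentially identical to the paper's own argument: both rewrite $T(x,x,x)=\langle v(x),x\rangle$ with $v(x)=\sum_i\inner{a_i,x}^2 a_i$, apply SoS Cauchy--Schwarz to get $\inner{v(x),x}^2\sosle \|v(x)\|^2\|x\|^2$, use the constraint $\|x\|^2=1$, and then expand $\|v(x)\|^2$ using $\|a_i\|=1$. Your version is actually a bit more explicit about the degree bookkeeping than the paper's.
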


\begin{proof}
This is a direct application of Cauchy-Schwarz inequality:
%We start by writing Cauchy–Schwarz and %and reduce the degree of the polynomial that we want to bound to 4: 
%
\begin{align*}
\left(T\cdot x^{\otimes 3}\right)^2 &=  \left(\sum_{i=1}^m \inner{a_i,x}^3\right)^2 = \left\langle\sum_{i=1}^m\inner{a_i,x}^2a_i, x\right\rangle^2 \sosle \left\|\sum_{i=1}^m\inner{a_i,x}^2a_i\right\|^2\|x\|^2 \sosle_r \left\|\sum_{i=1}^m\inner{a_i,x}^2a_i\right\|^2
\end{align*}

Expanding this quantity, and using the fact that $\|a_i\| = 1$, we get 

\begin{align}
\left\|\sum_{i=1}^m\inner{a_i,x}^2a_i\right\|^2 & = \sum_{i=1}^m \inner{a_i,x}^4 + \sum_{i\neq j}\inner{a_i,a_j}\inner{a_i,x}^2\inner{a_j,x}^2.
\end{align}
\end{proof}

The first term is closely related to $2$-to-$4$ norm of random matrices: let $A\in \R^{m\times n}$ be a matrix whose rows are equal to $a_i$'s, then $\|A\|_{2\to 4} = \sup_{\|x\| = 1} \|Ax\|_4$. Clearly, $\|A\|_{2\to 4}^4 = \sup_{\|x\| = 1} \sum_{i=1}^m \inner{a_i,x}^4$ is the maximum value of the first term. This is considered in \cite{BBH12} where they gave a SoS proof that when $m\ll n^2$ the first term is bounded by $O(1)$. Here we are in the regime $m \ll n^{3/2}$ so we can improve the bound to $1+o(1)$ (The proof is deferred to Appendix~\ref{sec:append:2to4}):

%We bound the two terms in RHS of~\eqref{eqn:two-terms} separately by the following two lemmas. Note that the first term on RHS is just 2$\rightarrow$4 norm of the random matrix with $a_i$ as rows. \cite{BBH12} showed that when $m \ll n^2$, 
%\begin{equation}
%\sum_{i=1}^m \inner{a_i,x}^4\le O(1) \label{eqn:weak-2-4-norm}
%\end{equation}
%
%We improve this bound to $1+o(1)$ when $m\ll n^{3/2}$. The proof of the following lemma is given in Section~\ref{sec:2-4norm}. 
%
\begin{lemma}\label{lem:2-4norm}With high probability over the randomness of $a_i$'s, 
	\begin{equation}
	\sum_{i=1}^m \inner{a_i,x}^4\sosle_{r,12} 1+\widetilde{O}(m/n^{3/2}) \label{eqn:2-4norm}\end{equation}
\end{lemma}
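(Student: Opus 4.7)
The plan is to rewrite the quantity as $\sum_{i=1}^m \inner{a_i,x}^4 = \|A_2 y\|^2$, where $y = x^{\otimes 2} \in \R^{n^2}$ and $A_2 \in \R^{m \times n^2}$ is the matrix whose $i$-th row is $a_i^{\otimes 2}$. A direct SoS Cauchy--Schwarz gives $\|A_2 y\|^2 \sosle_{r,12} \|A_2\|^2 \|y\|^2 \sosle_r \|A_2\|^2$, but this is insufficient since $\|A_2\|^2 \approx m/n$: the spectrum of $A_2^T A_2$ is dominated by a rank-one component in the ``identity direction'' $v := \sum_j e_j \otimes e_j$. The key observation is that $y = x^{\otimes 2}$ is almost orthogonal to $v$, since $\inner{v,y} = \|x\|^2$ is $1$ modulo $r$ whereas $\|v\| = \sqrt{n}$.

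Exploiting this, I would decompose $A_2 = \tfrac{1}{n} \mathbf{1} v^T + \tl A$, where $\tl A$ has rows $c_i = \mathrm{vec}(a_i a_i^T - \tfrac{1}{n} I_n)$; each $c_i$ has only off-diagonal $\pm 1/n$ entries and $\tl A v = 0$. Expanding gives
\begin{align*}
\|A_2 y\|^2 = \tfrac{m}{n^2}(v^T y)^2 + \tfrac{2}{n}(v^T y)(\mathbf{1}^T \tl A y) + \|\tl A y\|^2.
\end{align*}
The first term is $\sosle_r m/n^2$ since $v^T y = \|x\|^2 \sosle_r 1$. For the cross term, observe that $\mathbf{1}^T \tl A y = x^T(A^T A - \tfrac{m}{n} I) x$; SoS Cauchy--Schwarz bounds this by $\|A^T A - \tfrac{m}{n} I\|_{\mathrm{op}}$, which is $\tildO{\sqrt{m/n}}$ with high probability by standard random-matrix theory for the zero-diagonal Wigner-type matrix $A^T A - \tfrac{m}{n} I$. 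Thus the cross term is $\sosle_r \tildO{\sqrt{m}/n^{3/2}}$, which is $o(1)$ whenever $m \ll n^{3/2}$.

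The main term $\|\tl A y\|^2 \sosle_r \|\tl A\|^2 = \|\tl A \tl A^T\|$, the spectral norm of the $m \times m$ Gram matrix with diagonal $\|c_i\|^2 = 1 - 1/n$ and off-diagonal $(i,j)$ entries $\inner{c_i, c_j}$ of mean zero and variance $\Theta(1/n^2)$. Since $\E[\tl A \tl A^T] = (1-1/n) I_m$, the task reduces to a spectral-norm estimate on the zero-diagonal random matrix $O$ with entries $O_{ij} = \inner{c_i, c_j}$ for $i \ne j$.

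The hard part will be exactly this last step: the entries of $O$ are \emph{not} independent across pairs $(i,j)$ because the randomness in $a_i$ is shared along an entire row, so a direct matrix Bernstein gives a suboptimal bound. To get the sharper rate I plan to use the decoupling tools advertised in the abstract: introduce an independent copy $\{a_j'\}$ of $\{a_j\}$ and apply a standard decoupling inequality to dominate, up to constants and polylogarithmic factors, $\E\|O\|$ by $\E\|O'\|$, where $O'_{ij} = \inner{c_i, c_j'}$. Conditional on $\{a_j'\}$, the matrix $O'$ is a sum of independent rank-one terms in its rows, to which matrix Bernstein applies cleanly and yields the target $\tildO{m/n^{3/2}}$ bound on the spectral norm. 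Assembling the three contributions then gives $\sum_i \inner{a_i,x}^4 \sosle_{r,12} 1 + \tildO{m/n^{3/2}}$ with high probability.
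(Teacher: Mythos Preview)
Your approach is correct and workable, but it takes a genuinely different route from the paper. Instead of centering $A_2$ and bounding the spectral norm of the degree-4 Gram matrix $\tilde A\tilde A^T$, the paper \emph{squares} once more via Cauchy--Schwarz to pass to degree~6: it writes $(\sum_i\inner{a_i,x}^4)^2 \sosle_r \|\sum_i\inner{a_i,x}^3 a_i\|^2 = \sum_i\inner{a_i,x}^6 + \sum_{i\ne j}\inner{a_i,a_j}\inner{a_i,x}^3\inner{a_j,x}^3$. The first term is $\|B(x^{\otimes 3})\|^2$ for $B$ the matrix with rows $a_i^{\otimes 3}$, and now plain Gershgorin on $BB^T$ suffices because its off-diagonal entries are $\inner{a_i,a_j}^3 = \tilde O(n^{-3/2})$, giving $1+\tilde O(m/n^{3/2})$ with no decoupling at all. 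The cross term is handled by a second Cauchy--Schwarz together with the cruder bound $\sum_i\inner{a_i,x}^4 = O(1)$ from \cite{BBH12}, and then one takes an SoS square root. Your route keeps the SoS degree lower but pays for it in the probabilistic analysis: since the entries of $O$ are $\inner{a_i,a_j}^2-1/n = \tilde O(1/n)$, Gershgorin is too weak and you genuinely need the decoupling machinery. One point your sketch glosses over: after decoupling, the second-moment matrix $\sum_i\E_{a_i}[r_ir_i^T]$ (conditional on $\{a_j'\}$) has $(j,k)$ entry proportional to $\inner{a_j',a_k'}^2-1/n$, so it equals $\tfrac{2m}{n^2}\bigl((1-\tfrac1n)I+O'\bigr)$ for the \emph{same kind} of matrix $O'$ built from the independent copies; you close the loop by feeding in the crude Gershgorin bound $\|O'\|=\tilde O(m/n)$, which gives variance $\tilde O(m/n^2+m^2/n^3)$ and hence $\|O\|=\tilde O(\sqrt{m}/n+m/n^{3/2})$. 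This matches the target $\tilde O(m/n^{3/2})$ precisely in the overcomplete regime $m\ge n$ of interest. A side benefit of the paper's route is that the degree-6 bound $\sum_i\inner{a_i,x}^6\le 1+\tilde O(m/n^{3/2})$ is itself reused downstream in the decomposition algorithm.
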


The harder part of the proof is to deal with the second term $p(x)$ on the RHS of~\eqref{eqn:two-terms}. The naive idea would be to let $y= x^{\otimes 2}$ and view $p(x)$ as a degree-2 polynomial of $y$, 
\begin{equation}
q(y)=\sum_{i\neq j}\inner{a_i,a_j}\inner{a_i\otimes a_i,y}\inner{a_j\otimes a_j,y} = y^TNy. \label{eqn:view}
\end{equation}
Here $N$ is an $n^2$ by $n^2$ random matrix that depends on $a_i$'s. Suppose $N$ has spectral norm less than $o(1)$, then we have $y^TNy \sosle \|N\|\|y\|^2$, and by replacing $y = x\otimes x$ we obtain $p(x) = q(x\otimes x)\sosle o(1)$. However, in our case the matrix $N$ have spectral norm much larger than $o(1)$. %Note that the problem is that we are trying to prove a bound $q(y)\sosle o(1)$ while actually we only wanted to prove a weaker result $q(x\otimes x)\sosle o(1)$, and the former one turns out to be not true. 

Our key insight is that we could have different ways to unfold $p(x)$ into a degree-2 polynomial. In particular, we use the following way of unfolding: 

\begin{equation}
q'(y) = \sum_{i\neq j}\inner{a_i,a_j}\inner{a_i\otimes a_j,y}\inner{a_i\otimes a_j,y} = y^TMy \label{eqn:better-view}
\end{equation}

where $M$ is the $n^2$ by $n^2$ matrix that encodes the coefficients of $q'(y)$, 
$$M        = \sum_{i\neq j}\inner{a_i,a_j} (a_i\otimes a_j) (a_i\otimes a_j)^T $$

It turns out that $q'(y)$ still have the property that $q'(x\otimes x) = p(x)$. The matrix $M$ has much better spectral norm bound, which leads us to the bound for $p(x)$. 

\begin{lemma}\label{lem:px}
When $m\ll n^{3/2}$, the matrix $M = \sum_{i\neq j}\inner{a_i,a_j} (a_i\otimes a_j) (a_i\otimes a_j)^T$ has spectral norm at most $\widetilde{O}(m/n^{3/2})$ and as a direct consequence, 
$$p(x)\sosle_{r,4} \widetilde{O}(m/n^{3/2})$$
\end{lemma}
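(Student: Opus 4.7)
The conclusion $p(x) \sosle_{r,4} \widetilde{O}(m/n^{3/2})$ follows immediately from the spectral bound on $M$. Setting $y = x \otimes x$ gives $p(x) = q'(y) = y^{T} M y$, and the generic degree-$2$ SoS fact $y^{T} M y \sosle \|M\| \cdot \|y\|^{2}$ (valid because $\|M\| I - M$ is PSD and therefore a sum of squares) becomes a degree-$4$ SoS inequality in $x$ after substitution: $p(x) \sosle \|M\| \cdot \|x\|^{4}$. Since $\|x\|^{4} = (1 + r(x))^{2}$ equals $1$ modulo $r(x) = \|x\|^{2} - 1$, this yields $p(x) \sosle_{r,4} \|M\|$. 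The real task is therefore the high-probability spectral bound $\|M\| \le \widetilde{O}(m/n^{3/2})$.

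The matrix $M = \sum_{i \neq j} \langle a_i, a_j \rangle (a_i a_i^{T}) \otimes (a_j a_j^{T})$ is a sum of $m(m-1)$ random matrices with $\E[M] = 0$ (from the odd-moment identity $\E[a_i(k) a_i(p) a_i(q)] = 0$ under the $\pm 1/\sqrt{n}$ distribution), but its summands share the $a_i$'s cubically, precluding direct matrix Bernstein. The plan is a two-stage decoupling followed by two applications of matrix Bernstein. A first, standard de la Pe\~na--Gin\'e type decoupling replaces the second-slot variables $a_j$ by i.i.d.\ copies $b_j$ to produce $\widetilde{M} = \sum_i (a_i a_i^{T}) \otimes S_i$ with $S_i = \sum_{j \neq i} \langle a_i, b_j \rangle b_j b_j^{T}$, whose moments control those of $M$ up to absolute constants. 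Since $a_i$ still appears both in the outer factor $a_i a_i^{T}$ and in the scalar $\langle a_i, b_j \rangle$, a second decoupling replaces the inner $a_i$ by an independent copy $c_i$, giving $\widetilde{M}'' = \sum_i (a_i a_i^{T}) \otimes S_i''$ with $S_i'' = \sum_{j \neq i} \langle c_i, b_j \rangle b_j b_j^{T}$ and $\{a_i\}, \{b_j\}, \{c_i\}$ mutually independent.

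With the decoupled form in hand, conditioning on $\{c_i\}$ makes $S_i''$ a sum of $m$ mean-zero independent matrices in $\{b_j\}$; short moment computations give $\|\E[(\langle c_i, b_j \rangle b_j b_j^{T})^{2}]\| = O(1/n^{2})$ and $\|\langle c_i, b_j \rangle b_j b_j^{T}\| \le \widetilde{O}(1/\sqrt{n})$ w.h.p., so matrix Bernstein yields $\|S_i''\| = \widetilde{O}(\sqrt{m}/n)$ uniformly in $i$. Conditioning further on $\{b_j\}$ and writing $a_i a_i^{T} = (a_i a_i^{T} - I/n) + I/n$, matrix Bernstein on the centered piece $\sum_i (a_i a_i^{T} - I/n) \otimes S_i''$ has variance $(1/n) \| \sum_i (S_i'')^{2} \| \le \widetilde{O}(m^{2}/n^{3})$, yielding a fluctuation of $\widetilde{O}(m/n^{3/2})$, while the deterministic piece $(I/n) \otimes \sum_i S_i''$ has norm $(1/n) \|\sum_i S_i''\|$. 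One more matrix Bernstein---applied to $\sum_i S_i'' = \sum_j b_j b_j^{T} \langle \sum_{i \neq j} c_i, b_j \rangle$, which is a sum of independent mean-zero matrices in $\{b_j\}$ after conditioning on $\{c_i\}$---gives $\|\sum_i S_i''\| = \widetilde{O}(m/n)$, contributing $\widetilde{O}(m/n^{2}) \le \widetilde{O}(m/n^{3/2})$. Undoing the two decouplings gives $\|M\| \le \widetilde{O}(m/n^{3/2})$ w.h.p.

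The main obstacle is justifying the second decoupling. The first is a textbook decoupling of a multi-indexed sum over two independent families; the second is less standard because $a_i$ appears cubically inside each summand. What makes it work (and lose only an absolute constant) is that the conditional mean $\E_{a_i}[\text{summand}]$ vanishes once the outer $a_i a_i^{T}$ factor is frozen, again thanks to $\E[a_i(k) a_i(p) a_i(q)] = 0$. The $i = j$ diagonal corrections introduced by the successive decouplings are routine and contribute only at lower order; the rest of the argument is bookkeeping for the two matrix Bernstein applications.
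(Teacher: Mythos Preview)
Your overall strategy---decoupling followed by two nested applications of matrix Bernstein---matches the paper's, and the SoS consequence in the first paragraph is fine. The substantive gap is your ``second decoupling.'' de~la~Pe\~na--type results let you replace one \emph{index family} in a sum $\sum_{i\neq j} f_{ij}(X_i,X_j)$ by an independent copy; they do \emph{not} let you replace one \emph{occurrence} of a random variable inside a single summand by an independent copy. After your first (legitimate) decoupling you have $\widetilde{M}=\sum_i (a_ia_i^T)\otimes S_i(a_i)$ with $a_i$ appearing cubically in each term, and you propose to swap the inner $a_i$ for an independent $c_i$. The justification you give---that the conditional mean vanishes once $a_ia_i^T$ is ``frozen''---is a necessary mean-zero condition but not a decoupling theorem; a scalar toy model $\sum_i a_i^2\cdot a_i$ versus $\sum_i a_i^2\cdot c_i$ already shows that mean-zero alone does not force the tails to match, and for vectors the joint law of the entries of $(a_ia_i^T)\otimes S_i(a_i)$ and of $(a_ia_i^T)\otimes S_i(c_i)$ differ. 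So this step, as written, is unproven.

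The paper sidesteps the issue entirely by exploiting the sign symmetry of the $\pm 1/\sqrt{n}$ distribution: since $a_i\stackrel{d}{=}\sigma_i a_i$, one first rewrites $M\stackrel{d}{=}\sum_{i\neq j}\sigma_i\sigma_j\,\langle a_i,a_j\rangle(a_i\otimes a_j)(a_i\otimes a_j)^T$, conditions on the $a_i$'s (so only incoherence and $\|A\|$ are needed), and then applies de~la~Pe\~na to the \emph{scalar} product $\sigma_i\sigma_j\mapsto\sigma_i\tau_j$---which \emph{is} a textbook instance of the theorem. From there, bounding $T_i=\sum_{j\neq i}\tau_j Q_{ij}\preceq \widetilde{O}(\sqrt{m}/n)\,(a_ia_i^T)\otimes I$ by matrix Bernstein in $\tau$, and then $\|\sum_i\sigma_i T_i\|$ by matrix Bernstein in $\sigma$, gives the $\widetilde{O}(m/n^{3/2})$ bound with a single, standard decoupling. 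Your argument can be repaired along exactly these lines: after your first decoupling, introduce signs via $a_i\stackrel{d}{=}\epsilon_i a_i$ (noting $a_ia_i^T$ is sign-invariant while $S_i$ is odd), condition on everything but $\epsilon$, and apply matrix Bernstein directly in $\epsilon_i$---no ``second decoupling'' needed, and in fact your first decoupling then becomes redundant as well.
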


First we give an informal and suboptimal bound for intuition. Let $B$ be the $n^2\times m^2$ matrix whose $(i,j)$-column $(i,j\in [m])$ is $a_i\otimes a_j$ (viewed as an $n^2$ dimensional vector). Then $M$ can be written as $M = B\diag(\inner{a_i,a_j})_{i\neq j}B^T$. Note that $B$ can also be written as $A\otimes A$ where $\otimes$ is the Kronecker product of two matrices, so we have $\|B\|= \|A\|^2 \lesssim m/n$. Then we can bound the norm of $M$ by $\|M\| \le \|B\|\|\diag(b)\|\|B\|\le (m/n)\cdot \max_{i,j}|\inner{a_i,a_j}|\cdot (m/n) \lesssim m^2/n^{5/2}$, where we used the incoherence of $a_i$'s, that is, $|\inner{a_i,a_j}|\lesssim 1/\sqrt{n}$. 
This will only be $o(1)$ when $m \lesssim n^{1.25}$. %which is what we could prove just using incoherence.
%using the form (\ref{eqn:form-M}) of $M$, we can have a simple bound which gives suboptimal result: We have $\|M\| \le \|B\|\|\diag(b)\|\|B\|$. Recalling that $B = A\otimes A$, It is not hard to see that $\|B\| = \|A\|^2\le m/n$. Recall that $b_{ij} = \inner{a_i,a_j}$ is less than $1/\sqrt{n}$ by incoherence, then we have $\|\diag(b)\|\lesssim 1/\sqrt{n}$. Therefore, we have $\|M\|\le m^2/n^{5/2}$. 

Intuitively, this proof is not tight because we ignored potential cancellation caused by the randomness of $ \inner{a_i,a_j}$. Note that $\inner{a_i,a_j}$ have expectation 0, but  we treated them all as positive $1/\sqrt{n}$. If we assume that $\inner{a_i,a_j}$'s are independent $\pm 1/\sqrt{n}$, then $M = \sum_{i\neq j}\inner{a_i,a_j} (a_i\otimes a_j) (a_i\otimes a_j)^T $ would be a sum of PSD matrices with random weights and we can apply more standard matrix concentration bounds to make sure cancellations happen. 

%$$M = \sum_{i\neq j} b_{ij}B_{ij}B_{ij}^T$$

%%where $B_{ij}$ is the $(i,j)$-th column of $B$ (which is $a_i\otimes a_j$  actually). Therefore we can see that $M$ is a sum of random matrices, and $b_{ij}$ introduces some cancellation effect so that the spectral norm bound of $M$ is smaller than 
%Then one should expect that the cancellation effect caused by $\inner{a_i,a_j}$ will make the norm of $M$ smaller than the case when $\inner{a_i,a_j}$ are all of the value $1/\sqrt{n}$. 

However, $\inner{a_i,a_j}$ are of course not independent and our key idea is to decouple the randomness of $\inner{a_i,a_j}$. 

%The main difficulty in proving this matrix concentration bound is that $M$ is the sum of $m^2$ terms that are highly {\em correlated}. In order to remove this correlation we exploit the symmetric properties of the distribution, and use a {\em decoupling} technique developed in \cite{decoupling}.

\begin{proof}(Sketch)
We first replace the vectors $a_i$'s with $\sigma_i a_i$ where $\sigma_i$ is a random $\pm 1$ variable. This is OK because the distribution of $a_i$ and $\sigma_i a_i$ are the same. Now we first sample the $a_i$'s, conditioned on the samples $M = \sum_{i\neq j} \sigma_i\sigma_j\inner{a_i,a_j} (a_i\otimes a_j) (a_i\otimes a_j)^T$ (where only $\sigma_i$'s are still random). Now since the vectors $a_i$'s are all fixed, the correlation between different terms only depends on scalar variables $\sigma_i\sigma_j$, and we never use the term $\sigma_i^2$ (because $i\ne j$).

By a result of \cite{decoupling}, in this case we can decouple the product $\sigma_i\sigma_j$. In particular, in order to prove concentration properties for $M$, it suffices to prove concentration for a different matrix $\sum_{i\neq j} \sigma_i\tau_j\inner{a_i,a_j} (a_i\otimes a_j) (a_i\otimes a_j)^T$. Here $\tau\in\{\pm 1\}^m$ is an independent copy of $\sigma_i$'s. In this way we have decoupled the randomness in $\sigma_i$ and $\tau_i$, and the rest of the Lemma can follow from careful matrix concentration analysis.
\end{proof}

We give the full proof of Lemma~\ref{lem:px} in Appendix~\ref{sec:px}.

\paragraph{Proof Sketch of Main Theorem}

Theorem~\ref{thm:proof} follows directly from Lemma~\ref{lem:2-4norm} and Lemma~\ref{lem:px}. Using Lemma~\ref{lem:sos}, we get the main Theorem~\ref{thm:maininjdet} in the noiseless case. When there is noise, since we have bounds on spectral norm of an unfolding of $\tilde{T}-T$, it implies (by Lemma~\ref{lem:sosspectral}) $[\tilde{T}-T](x,x,x) \sosle_{r,12} 1/2\log n$.it is easy to verify that $\tilde{T}(x,x,x) = T(x,x,x)+[\tilde{T}-T](x,x,x) \sosle_{r,12} 1+ 1/\log n$, so Theorem~\ref{thm:maininjdet} still holds. We give more details in Appendix~\ref{sec:append:certifymain}.
%Note that the first term on RHS is just 2$\rightarrow$4 norm of the random matrix with $a_i$ as rows. ~\cite{DBLP:journals/corr/abs-1205-4484} showed that 
%
%\begin{equation}
%\sum_{i=1}^m \inner{a_i,x}^4 \sosle_r O(1) \label{eqn:BGH-2-4norm}
%\end{equation}

%However, this is not tight enough for our purpose. We pro
%we really want the right hand side to be $1+o(1)$. We are going to have a tighter bound in Section~\ref{sec:2-4norm} and then we bound the second term $p(x)$  by $o(1)$ in Section~\ref{sec:px}

\section{Quasi-polynomial Time Algorithm for Tensor Decomposition}
\label{sec:alg}
In this section we give a quasi-polynomial time algorithm for decomposing random 3rd order tensors in distribution $\mathcal{D}_{m,n}$. In particular, we prove Theorem~\ref{thm:main} which we restate with more details below:

\begin{theorem}\label{thm:quasi-decomp-algo}
Let $T$ be a tensor chosen from $\mathcal{D}_{m,n}$, when $m\ll n^{3/2}$ with high probability over the randomness of $T$  Algorithm~\ref{alg:tensor-decomp}	returns $\{\hat{a}_i\}$ that is $0.1$-close to $\{a_i\}$ in time $n^{O(\log n)}$. Further, the same guarantee holds given an approximation $\tilde{T}$ where if $M \in \R^{n\times n^2}$ is an unfolding of $T - \tilde{T}$, $\|M\| \le 1/10\log n$. 
\end{theorem}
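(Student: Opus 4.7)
The plan is to solve one sum-of-squares program of degree $d = \Theta(\log n)$, producing a pseudo-expectation $\pE$ with $\pE[T(x,x,x)] \ge 1 - o(1)$, and then extract components one at a time by a randomized Gaussian rounding in the style of \cite{BKS14}. After recovering an approximate component I would refine it via Theorem~\ref{thm:refine}, subtract its cube from $T$ at very high precision, and iterate.

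First I would extend Theorem~\ref{thm:proof} from degree $12$ to degree $d = \Theta(\log n)$. Repeatedly squaring the degree-$12$ inequality $T(x,x,x)^2 \sosle_{r,12} 1 + \tilde O(m/n^{3/2})$ gives $T(x,x,x)^{2^j} \sosle_{r,12\cdot 2^j}\bigl(1 + \tilde O(m/n^{3/2})\bigr)^{2^j}$, and choosing $j = \Theta(\log\log n)$ yields a degree-$\Theta(\log n)$ SoS certificate whose right-hand side is still $1+o(1)$. Feasibility of the SDP with objective $\ge 1-o(1)$ is witnessed by the point mass at $x=a_1$, so the optimum is $1\pm o(1)$. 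Combining the upper bound with pseudo-Cauchy-Schwarz ($\pE[T(x,x,x)]^{2^j}\le \pE[T(x,x,x)^{2^j}]$) gives $\pE[T(x,x,x)^k] = 1\pm o(1)$ for $k=\Theta(\log n)$; expanding the $k$-th power as a sum over index sequences and using incoherence $|\inner{a_i,a_j}| \lesssim 1/\sqrt n$ to kill the off-diagonal terms shows that $\pE[x^{\otimes d}]$ is close (in a suitable sense) to a nonnegative combination $\sum_i \lambda_i\, a_i^{\otimes d}$.

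Next I would round by sampling $g \sim \N(0, I_n)$ and forming
\[
M_g \;=\; \pE\!\left[\,\inner{g,x}^{d-2}\, x x^{\top}\right] \;\in\; \R^{n \times n}.
\]
Under the mixture approximation, $M_g \approx \sum_i \lambda_i \inner{g, a_i}^{d-2} a_i a_i^{\top}$; since the $\inner{g,a_i}$ are approximately i.i.d.\ standard Gaussians and $d-2$ is logarithmic, standard order-statistic estimates say that with inverse-polynomial probability one index $i^{*}$ dominates the sum by a constant factor, so the top eigenvector of $M_g$ is $0.1$-close to $\pm a_{i^{*}}$. Running this with $\poly(n)$ fresh $g$'s produces candidates for each component; I would polish each candidate with Theorem~\ref{thm:refine} to accuracy $1/\poly(n)$ and then subtract the refined cubes one at a time from $T$. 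The unfolding of $\hat a^{\otimes 3} - a^{\otimes 3}$ has spectral norm $O(\|\hat a - a\|) = 1/\poly(n)$, comfortably within the $1/(10\log n)$ noise budget of Theorem~\ref{thm:maininjdet}, so the residual tensor still satisfies the assumptions for the next SDP solve and the iteration finds a new component. After $m$ rounds the set $\{\hat a_i\}$ is $0.1$-close to $\{a_i\}$. The noisy-input clause is handled exactly as in the certification proof: Lemma~\ref{lem:sosspectral} converts $\|M\|\le 1/(10\log n)$ into $(\tilde T - T)(x,x,x) \sosle_{r,12} 1/(10\log n)$, which is absorbed into the slack. The $n^{O(\log n)}$ runtime comes from computing the degree-$d$ pseudo-moments.

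The main obstacle I anticipate is justifying that the top eigenvector of $M_g$ is genuinely close to a \emph{single} $a_{i^{*}}$ rather than to a spurious mixture of components. The scalar bound $\pE[T(x,x,x)^k] \le 1+o(1)$ is not enough on its own: one must control the ``off rank-one'' part of $M_g$ uniformly over a non-negligible event in $g$. Concretely this amounts to applying the extended SoS certificate against test polynomials of the form $\inner{g,x}^{d-2}h(x)^2$ and carefully tracking how the contributions of the $m$ components cancel in expectation over $g$, which is where the bulk of the technical work will sit and where the choice $d=\Theta(\log n)$ (rather than a constant) is essential.
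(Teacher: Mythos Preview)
Your proposal diverges from the paper in two structural ways and has two genuine gaps.

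\paragraph{How the paper proceeds (for contrast).} The paper never subtracts components. Instead, each iteration solves an SDP with additional \emph{exclusion constraints} $\{\inner{s,x}^2 \le 1/8 : s\in S\}$ for the already-found vectors $S$. The key technical step (Lemma~\ref{lem:consistent}) is not a ``mixture'' statement about $\pE[x^{\otimes d}]$; it is the scalar bound: from $\pE\bigl[\sum_i \inner{a_i,x}^4\bigr]$ and $\pE\bigl[\sum_i \inner{a_i,x}^6\bigr]$ both being $1\pm o(1)$, an SoS H\"older inequality
\[
\Bigl(\sum_i \inner{a_i,x}^6\Bigr)^k \;\sosle\; \Bigl(\sum_i \inner{a_i,x}^{2k}\Bigr)\Bigl(\sum_i \inner{a_i,x}^4\Bigr)^k
\]
gives $\pE\bigl[\sum_i \inner{a_i,x}^{2k}\bigr]\ge (1-\epsilon)^k$, and then a plain averaging over $i\in[m]$ (absorbing the factor $m$ into $e^{-\epsilon k}$ since $k=\Theta(\log m)$) produces a single index $i$ with $\pE[\inner{a_i,x}^{2k}]\ge e^{-O(\epsilon)k}$. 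This is exactly the hypothesis of Theorem~\ref{thm:BKS}, which is then invoked as a black box.

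\paragraph{Gap 1: the mixture claim.} Your assertion that ``expanding the $k$-th power \dots\ and using incoherence to kill the off-diagonal terms shows that $\pE[x^{\otimes d}]$ is close to $\sum_i \lambda_i a_i^{\otimes d}$'' is the heart of the matter and is not supported. Incoherence bounds $|\inner{a_i,a_j}|$, but the off-diagonal terms in $(\sum_i \inner{a_i,x}^3)^k$ are of the form $\pE\bigl[\prod_j \inner{a_{i_j},x}^3\bigr]$ with distinct $i_j$'s; these involve the pseudo-expectation, not just the $a_i$'s, and there are $m^k$ of them. You have no mechanism to control them. The paper's route avoids this entirely: it never opens up $T(x,x,x)^k$; instead it passes through the degree-4 and degree-6 sums (where Lemma~\ref{lem:2-4norm} and Lemma~\ref{lem:px} do the work) and boosts the exponent via H\"older, which is an inequality that holds for \emph{any} pseudo-expectation. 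Without an argument of that shape, your rounding analysis has nothing to stand on, and you yourself flag this as ``the main obstacle''; it is not a detail but the central lemma.

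\paragraph{Gap 2: the subtract-and-refine loop.} You invoke Theorem~\ref{thm:refine} to polish a \emph{single} candidate to $1/\poly(n)$ accuracy before subtracting. But Theorem~\ref{thm:refine} as stated requires an initial solution that is $0.1$-close to the \emph{entire} decomposition $\{a_1,\dots,a_m\}$, not a single vector; you cannot call it after the first round. And without refinement, subtracting a merely $0.1$-accurate $\hat a^{\otimes 3}$ injects an error whose unfolding has spectral norm $\Theta(1)$, blowing the $1/(10\log n)$ budget immediately. The paper's exclusion-constraint device is precisely what makes the iteration go through without any mid-stream refinement or subtraction.

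In short: the missing idea is the SoS H\"older-plus-averaging step that isolates a single $a_i$ with large $\pE[\inner{a_i,x}^k]$, and the iteration mechanism (exclusion constraints rather than peel-off) that keeps the noise from accumulating.
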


A key component of our algorithm is a way of sampling pseudo-distributions given in \cite{BKS14}:

\begin{theorem}[Theorem 5.1 in \cite{BKS14}]\label{thm:BKS} For every $k\ge 0$, there exists a randomized algorithm with running time $n^{O(k)}$ and success probability $2^{-k/\poly(\epsilon)}$ for the following problem: Given a degree-$k$ pseudo distribution $\{u\}$ over $\R^n$ that satisfies the polynomial constraint $\|u\|^2 = 1$ and the condition $\pE[\inner{c,u}^k] \ge e^{-\epsilon k}$ for some unit vector $c\in \R^n$, output a unit vector $c'\in \R^n$ with $\inner{c,c'} \ge 1-O(\epsilon)$.
\end{theorem}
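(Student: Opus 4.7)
My plan is to prove the theorem by the Gaussian reweighting scheme that is now standard for SoS-based tensor recovery. Draw $g\sim N(0,I_n)$ and return $c'=v/\|v\|$, where
$$
v \;=\; \pE\bigl[\inner{g,u}^{k-1}u\bigr]\in\R^n.
$$
Since the pseudo-distribution has degree $k$, every coordinate of $v$ is a pseudo-moment of degree exactly $k$, so $v$ is computable from the SDP solution in $n^{O(k)}$ time, matching the claimed run-time bound. Independent re-sampling of $g$ boosts the success probability.

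\textbf{Decompose $g$ and pick a good event.} Write $g=tc+g^{\perp}$ with $t=\inner{c,g}\sim N(0,1)$ and $g^{\perp}\sim N(0,I-cc^{\t})$ independent, and condition on the event $\mathcal E_{\tau}=\{t\ge\tau\}$ for a threshold $\tau=\Theta(\sqrt{k/\epsilon^{a}})$, where the constant $a\ge 1$ will be fixed by the cross-term analysis. Since $\Pr[\mathcal E_{\tau}]=\exp(-\Theta(\tau^{2}))=2^{-k/\poly(\epsilon)}$, this matches the success probability required by the theorem, and a single draw of $g$ suffices on average.

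\textbf{Main term vs.\ cross terms.} Binomially expand $\inner{g,u}^{k-1}=(t\inner{c,u}+\inner{g^{\perp},u})^{k-1}$, giving
$$
\inner{c,v}\;=\;\sum_{j=0}^{k-1}\binom{k-1}{j}t^{k-1-j}\,\pE\bigl[\inner{c,u}^{k-j}\inner{g^{\perp},u}^{j}\bigr].
$$
The $j=0$ term equals $t^{k-1}\pE[\inner{c,u}^{k}]\ge t^{k-1}e^{-\epsilon k}$ by hypothesis and is the target contribution. For $j\ge 1$, I would bound each cross term by a mixture of pseudo-expectation Cauchy--Schwarz (valid for $j\le k/2$), combined with the SoS consequence $\inner{c,u}^{2(k-j)}\sosle \|u\|^{2(k-j)}=1$ of the constraint $\|u\|^{2}=1$, and then average over the Gaussian randomness of $g^{\perp}$ to replace the pessimistic $\|g^{\perp}\|^{j}\sim n^{j/2}$ deterministic bound by its typical size $O(\sqrt{j})^{j}$ after $\pE_u$-contraction against the unit-sphere constraint. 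Repeating the same expansion for the coordinates of $v$ orthogonal to $c$ yields $\|v-\inner{c,v}c\|\le t^{k-1}\cdot o(1)$ on $\mathcal E_{\tau}$, which after normalization delivers $\inner{c,c'}\ge 1-O(\epsilon)$.

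\textbf{Main obstacle.} The delicate step is the range $j>k/2$, where the pseudo-Cauchy--Schwarz bound $|\pE[\inner{c,u}^{k-j}\inner{g^{\perp},u}^{j}]|\le \pE[\inner{c,u}^{2(k-j)}]^{1/2}\pE[\inner{g^{\perp},u}^{2j}]^{1/2}$ is unavailable because $2j>k$ exceeds the pseudo-expectation's degree. I would handle this by re-grouping the factor as $\inner{g^{\perp},u}^{j}=\inner{g^{\perp},u}^{\lfloor k/2\rfloor}\cdot\inner{g^{\perp},u}^{j-\lfloor k/2\rfloor}$, applying pseudo-CS with the constraint $\|u\|^{2}=1$ to the first block, and absorbing the remaining factor into the test polynomial; alternatively, one can replace the first-moment rounding by a top-eigenvector extraction from the reweighted moment matrix $M:=\pE[uu^{\t}\inner{g,u}^{2(k-1)}]$ and argue that the top eigenvector is $O(\epsilon)$-close to $c$ on $\mathcal E_{\tau}$. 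The remaining quantitative balancing act is the choice of the exponent $a$ in $\tau=\Theta(\sqrt{k/\epsilon^{a}})$: too small and the cross terms dominate, too large and the success probability falls below $2^{-k/\poly(\epsilon)}$, so $a$ is fixed at the smallest value for which the $j=0$ term beats the cross-term sum.
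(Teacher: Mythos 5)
The paper does not prove this statement: it is Theorem 5.1 of \cite{BKS14}, invoked here as a black box, so there is no in-paper argument to compare against. Judging your sketch against the BKS14 proof, the central gap is one you already identified but did not close. For the first-moment rounding $v=\pE[\inner{g,u}^{k-1}u]$, the cross terms $\pE[\inner{c,u}^{k-j}\inner{g^\perp,u}^j]$ with $j>k/2$ cannot be bounded by pseudo-Cauchy--Schwarz in the form you want because the needed degree-$2j$ moments do not exist. The even re-grouping you propose is available for all $j$ (split the $k$ factors into two blocks of degree $k/2$ each), but it leaves you with factors of the form $\pE[\inner{g^\perp,u}^{2a}\inner{c,u}^{2b}]^{1/2}$, and taking $\E_{g^\perp}$ of a square root is not the square root of the Gaussian moment, so the $O(\sqrt j)^j$ bound you invoke is not justified as written. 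Even granting it, you would still need the entire binomial sum (with coefficients up to $2^{k}$) to be subordinate to the $j=0$ term \emph{and} need the same machinery to control $\|v-\inner{c,v}c\|$ from above; nothing in the sketch establishes that a threshold $\tau$ with your exponent $a$ exists that accomplishes both while keeping $\Pr[\mathcal E_\tau]\ge 2^{-k/\poly(\epsilon)}$. There is also a parity snag: if $k$ is odd and the pseudo-distribution is symmetric under $u\mapsto -u$, then $v\equiv 0$.

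The BKS14 argument is fundamentally different and avoids all of this. It reweights by $W(u)=\inner{g,u}^{k-2}$ --- a perfect square (hence SoS) when $k$ is even, and of degree $k-2$ so that $\pE[W\,uu^{\t}]$ stays within the degree-$k$ budget; your alternative reweighting $\inner{g,u}^{2(k-1)}$ requires degree $2k$ pseudo-moments, which the theorem's hypothesis does not provide. The normalized operator $\pE'[\cdot]:=\pE[W\,\cdot]/\pE[W]$ is then a valid degree-$2$ pseudo-expectation, and the algorithm outputs the top eigenvector of $\pE'[uu^{\t}]$. The analysis proceeds at the level of this reweighted moment matrix, conditioned on $\inner{g,c}$ being large: the PSD structure of $\pE'$ handles the correlation with $c$ and the mass on $c^\perp$ simultaneously, with no term-by-term cross-term accounting and no degree overrun. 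The lemma to reconstruct, if you pursue this, is the one that lower-bounds $\pE'[\inner{c,u}^2]$ in terms of the hypothesis $\pE[\inner{c,u}^k]\ge e^{-\epsilon k}$ on the good event for $g$.
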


The basic idea of Algorithm~\ref{alg:tensor-decomp} is as follows. At each iteration, the algorithm tries to find a new vector $\hat{a}_i$. As we discussed in Section~\ref{sec:mainidea}, in order to find a vector close to $a_i$ it finds a vector $x$ with large $T(x,x,x)$ value. Moreover, It enforces that the new vector is different from all previous found vectors by the set of polynomial equations $\{\inner{s,x}^2\le 1/8: s\in S\}$. Intuitively, if we haven't found all of the vectors $a_i$'s any of the remaining $a_i$'s will satisfy the set of constraints  $\{\inner{s,x}^2\le 1/8: s\in S\}$ and $T(x,x,x)\ge 1-1/\log n$. Therefore each time we can find a valid pseudo-expectation $\pE$. 

What we need to prove is for any pseudo-expectation $\pE$ we found, it always satisfies $\pE[\inner{a_i,x}^k]\ge e^{-\epsilon k}$ for some $k = O((\log n)/\epsilon)$ for some small enough constant $\epsilon$. Then by Theorem~\ref{thm:BKS}we can obtain a new vector that is $O(\epsilon)$-close to one of the $a_i$'s. We formalize this in the following lemma:

\begin{algorithm}\caption{Overcomplete Random 3-Tensor Decomposition}\label{alg:tensor-decomp}
	\begin{algorithmic}[1]
		\REQUIRE Random 3-tensor $T = \sum_{i=1}^m a_i^{\otimes 3}\sim \mathcal{D}_{m,n}$. %, parameters $\epsilon,\tau$. 
		\ENSURE $\hat{a}_1,\dots,\hat{a}_m \in \mathbb{R}^n$ s.t.  $\{\hat{a}_i\}$ is $0.1$-close to $\{a_i\}$
		\STATE $S\leftarrow \emptyset$
		\REPEAT 
			\STATE Using semidefinite programming to find a degree $k=O(\log n)$ pseudo-expectation $\pE$ that satisfies the constraints $\{T(x,x,x) \ge 1 - 1/\log n, \|x\|^2 = 1\}$ and $\{\inner{s,x}^2 \le 1/8: s\in S\}$. 
			\STATE Run the algorithm in Theorem 5.1 of~\cite{BKS14} (for $n^{O(k)}$ times) with input $\pE$ and obtain vector $c$ such that $T(c,c,c)\ge 0.99.$ %or sufficiently small constant $\tau$.  
			\STATE add vector $c$ to $S$. 
		\UNTIL{$|S| = m$}
		\RETURN $\{\hat{a}_i\} = S$. 
		%\STATE Solve the following optimization and obtain the optimal pseudo-distribution $\pEstar$
%		\begin{eqnarray}
%		\textrm{Maximize} && \pE \left[T\cdot x^{\otimes 3}\right] \nonumber\\
%		\textrm{Subject to } && \pE \textrm{ is a degree-12 pseudo-expectation}\\
%		&& \pE(r(x)g(x)) = 0 \textrm{ for } r(x) = \|x\|^2 - 1 \textrm{ and any $g$ with $\deg(g)\le 10$}
%		\end{eqnarray}
%		\STATE 
	\end{algorithmic}
\end{algorithm}

\begin{lemma} \label{lem:consistent}
When $T$ is chosen from $\mathcal{D}_{m,n}$ where $m\ll n^{3/2}$, with high probability over the randomness of $T$, the pseudo-expectation found in Step 3 of Algorithm~\ref{alg:tensor-decomp} satisfies the following: there exists an $a_i$ such that $\tilde{E}[\inner{a_i,x}^k]\ge e^{-\epsilon k}$ for sufficiently small constant $\epsilon$ (where the pseudo-expectation has degree  $4k$ and $k = O((\log n)/\epsilon)$). In particular, applying Theorem~\ref{thm:BKS}, repeat the algorithm for $n^{O(k)}$ time will give a vector $c$ such that $\inner{c,a_i} \ge 1-O(\epsilon)$.
\end{lemma}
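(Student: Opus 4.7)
To prove the lemma, the plan is to combine the SoS injective norm certification of Theorem~\ref{thm:proof} with the constraints imposed on the pseudo-expectation $\pE$ in order to produce some component $a_{i^*}\notin\bar{S}$ (where $\bar{S}$ denotes the $a_j$'s that are close to some $s\in S$) whose pseudo-moment $\pE[\inner{a_{i^*},x}^k]$ is at least $\Omega(1/m)=\Omega(n^{-3/2})$. For $k=\Theta(\log n/\epsilon)$ with a sufficiently large implicit constant (so that $\epsilon k\ge \tfrac{3}{2}\log n$), this quantity dominates $e^{-\epsilon k}$, and Theorem~\ref{thm:BKS} then recovers a vector close to $a_{i^*}$.

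The first step is to translate the constraint $T(x,x,x)\ge 1-1/\log n$ into lower bounds on moments of $X_i:=\inner{a_i,x}$. Combining the SoS inequality $T(x,x,x)^2\sosle \sum_i X_i^4 + o(1)$ (from Claim~\ref{clm:cs} together with Lemma~\ref{lem:px}) with the SoS consequence $T^2\sosge (1-1/\log n)^2$ of the constraint and taking pseudo-expectations yields $\pE[\sum_i X_i^4]\ge 1-o(1)$. Iterating by repeatedly multiplying by $T^2$ (which is SoS-nonnegative) gives $T^{2d}\sosle (\sum_i X_i^4+o(1))^d$, and combining with the SoS lower bound $T^{2d}\sosge (1-1/\log n)^{2d}$ yields $\pE[(\sum_i X_i^4)^d]\ge 1-o(1)$ for any $d=O(\log n)$.

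Next, I would use the constraints $\inner{s,x}^2\le 1/8$ for $s\in S$ to exclude already-found components. By the inductive guarantee that each $s\in S$ satisfies $\inner{s,a_{j(s)}}\ge 1-O(\epsilon)$ for a distinct $a_{j(s)}$, one can SoS-derive $\inner{a_{j(s)},x}^2\sosle 1/4$ for sufficiently small $\epsilon$ (via a Cauchy--Schwarz-style manipulation of $s=\alpha a_{j(s)}+\beta v$ with $\alpha\ge 1-O(\epsilon)$). Raising to power $2d$ gives $\pE[X_{j(s)}^{4d}]\le (1/4)^{2d}=16^{-d}$, so $\sum_{j\in\bar{S}}\pE[X_j^{4d}]\le m\cdot 16^{-d}=o(1)$ for $d$ a sufficiently large multiple of $\log n$.

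The main obstacle will be the third step: converting $\pE[(\sum_i X_i^4)^d]\ge 1-o(1)$ into the individual-component bound $\pE[\sum_i X_i^{4d}]\ge \Omega(1)$. A naive Jensen/power-mean bound $(\sum_i X_i^4)^d\sosle m^{d-1}\sum_i X_i^{4d}$ loses a prohibitive factor of $m^{d-1}$. My plan is to show that the off-diagonal terms in the multinomial expansion of $(\sum_i X_i^4)^d$ have small pseudo-expectation via a matrix-concentration argument in the spirit of Lemma~\ref{lem:px}, applied to appropriate higher-order tensor products of the $a_i$'s; a subtlety is that the relevant matrix involves only even powers of the $a_i$'s, so the sign-randomization/decoupling trick does not directly apply and will likely need to be replaced with a more careful iterative argument that preserves the identity of a ``dominant'' component throughout the expansion. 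Once $\pE[\sum_i X_i^{4d}]\ge \Omega(1)$ is established, subtracting the $\bar{S}$ contribution and applying pigeonhole produces some $i^*\notin\bar{S}$ with $\pE[X_{i^*}^{4d}]\ge \Omega(1/m)$, which for $k=4d$ a large enough multiple of $\log n/\epsilon$ exceeds $e^{-\epsilon k}$, completing the proof.
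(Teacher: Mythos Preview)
Your first two steps are sound and match the paper's setup: from Claim~\ref{clm:cs} and Lemma~\ref{lem:px} one indeed obtains $\pE[\sum_i\inner{a_i,x}^4]\ge 1-o(1)$, and the paper also separately derives $\pE[\sum_i\inner{a_i,x}^6]\ge 1-o(1)$ by a second Cauchy--Schwarz pass (equation~\eqref{eqn:two-terms-2}). Your treatment of the $S$-constraints is unnecessary for this lemma as stated (the paper defers that to a later claim), but it does no harm.

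The genuine gap is exactly where you flag it: your Step~3. Bounding the off-diagonal terms of $(\sum_i X_i^4)^d$ by matrix concentration cannot work, because every such term $X_{i_1}^4\cdots X_{i_d}^4$ is a square and hence SoS-nonnegative; there is no sign randomness to decouple, and no cancellation to harvest. Your proposed ``iterative argument that preserves the identity of a dominant component'' is not a concrete plan, and I do not see how to make it one without already knowing which $a_i$ dominates.

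The paper sidesteps this entirely with an SoS H\"older inequality (equation~(2.5) of~\cite{BKS14}). Applied with $v_i=\inner{a_i,x}^2$ and $d=3$, it reads
\[
\Bigl(\sum_i \inner{a_i,x}^6\Bigr)^k \;\sosle\; \Bigl(\sum_i \inner{a_i,x}^{2k}\Bigr)\cdot\Bigl(\sum_i \inner{a_i,x}^4\Bigr)^k.
\]
The crucial point is that Lemma~\ref{lem:2-4norm} gives the \emph{SoS upper bound} $\sum_i\inner{a_i,x}^4\sosle 1+o(1)$, so the second factor on the right is at most $(1+o(1))^k$. Taking pseudo-expectations and using $\pE[\sum_i\inner{a_i,x}^6]\ge 1-o(1)$ on the left (via Jensen for pseudo-moments) yields $\pE[\sum_i\inner{a_i,x}^{2k}]\ge (1-o(1))^k$, after which a simple averaging over $i$ and the choice $k\ge (\log m)/\epsilon$ finishes. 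In other words, the conversion from a power of a sum to a sum of powers is achieved not by killing cross terms but by leveraging that both the $4$th and $6$th moment sums are pinned near~$1$; this is the idea your proposal is missing.
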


The main intuition is to use Cauchy-Schwarz and H\"older inequalities (like what we used in Claim~\ref{clm:cs}) to raise the power in the sum $\sum_{i=1}^m \inner{a_i,x}^d$ (we start with $d=3$ and hope to get to $d = k$). When the degree is high enough we can afford to do an averaging argument and lose a factor of $m$ to go from the sum to a individual vector, because $e^{-\epsilon k} = \poly(m)$. The detailed proof is given in Appendix~\ref{sec:appendix:consistent}.
%We use the following polynomial program and its relaxation

Now we are ready to prove Theorem~\ref{thm:quasi-decomp-algo}. 

\begin{proof} (sketch)
We prove Theorem~\ref{thm:quasi-decomp-algo} by induction. Suppose $s$ already contains a set of vectors $\hat{a}_i$'s, where for each $\hat{a}_i$ there is a corresponding $a_j$ that satisfies $\|\hat{a}_i - a_j\|\le 0.1$. We would like to show with high probability in the next iteration, the algorithm finds a new component that is different from all the previously found $a_i$'s.

In order to do that, we need to show the following:
\begin{enumerate}
\item The SDP in Step 3 of Algorithm~\ref{alg:tensor-decomp} is feasible and gives a valid pseudo-expectation.
\item For any valid pseudo-expectation, with high probability we get an unit vector $c$ that satisfies $T(c,c,c) \ge 0.99$, and $c$ is far from all the previously found $a_i$'s.
\item For any unit vector $c$ such that $T(c,c,c)\ge 0.99$, there must be a component $a_i$ such that $\|a_i-c\| \le 0.1$.
\end{enumerate}

In these three steps, Step 1 follows because we can take $\pE$ to be the expectation of a true distribution: $x = a_i$ with probability $1$ for some unfound $a_i$. Step 2 is basically Lemma~\ref{lem:consistent}, when we choose $\epsilon$ to be a small enough constant, it is easy to prove that all the vectors that satisfy $\inner{c,a_i} \ge 1-O(\epsilon)$ must satisfy $T(c,c,c) \ge 0.99$. Step 3 is the second part of our observation in Section~\ref{sec:mainidea}, which we prove in the appendix.
\end{proof}

The details in this proof can be found in Appendix~\ref{sec:appendix:algmain}.

\section{Conclusion}
In this paper we give the first algorithm that can decompose an overcomplete 3rd order tensor when the rank $m$ is almost $n^{3/2}$ that matches the $n^{p/2}$ bounds for even order tensors. Our argument is based on a special unfolding of the tensor and a decoupling argument for matrix concentration. We feel such techniques can be useful in other settings.

Tensor decompositions are widely applied in machine learning for learning latent variable models. Although the SoS based algorithm have poor dependency on the accuracy $\epsilon$, in the case of tensor decomposition we can actually use SoS as an initialization algorithm. We hope such ideas can help solving more problems in machine learning.

\paragraph{Acknowledgment} We thank Anima Anandkumar, Boaz Barak, Johnathan Kelner, David Steurer, Venkatesan Guruswami for helpful discussions at various stages of this work.

\bibliography{ref}

\newcommand{\etalchar}[1]{$^{#1}$}
\begin{thebibliography}{BCMV14}

\bibitem[ABG{\etalchar{+}}13]{anderson2013more}
Joseph Anderson, Mikhail Belkin, Navin Goyal, Luis Rademacher, and James Voss.
\newblock The more, the merrier: the blessing of dimensionality for learning
  large gaussian mixtures.
\newblock {\em arXiv preprint arXiv:1311.2891}, 2013.

\bibitem[AFH{\etalchar{+}}13]{AnandkumarEtal:lda12}
A.~Anandkumar, D.~P. Foster, D.~Hsu, S.~M. Kakade, and Y.~K. Liu.
\newblock {Two SVDs Suffice: Spectral Decompositions for Probabilistic Topic
  Modeling and Latent Dirichlet Allocation}.
\newblock {\em to appear in the special issue of Algorithmica on New
  Theoretical Challenges in Machine Learning}, July 2013.

\bibitem[AFT11]{alexeev2011tensor}
Boris Alexeev, Michael~A Forbes, and Jacob Tsimerman.
\newblock Tensor rank: Some lower and upper bounds.
\newblock In {\em Computational Complexity (CCC), 2011 IEEE 26th Annual
  Conference on}, pages 283--291. IEEE, 2011.

\bibitem[AGH{\etalchar{+}}14]{AnandkumarEtal:tensor12}
A.~Anandkumar, R.~Ge, D.~Hsu, S.~M. Kakade, and M.~Telgarsky.
\newblock {Tensor Methods for Learning Latent Variable Models}.
\newblock {\em J. of Machine Learning Research}, 15:2773--2832, 2014.

\bibitem[AGHK13]{AnandkumarEtal:community12COLT}
A.~Anandkumar, R.~Ge, D.~Hsu, and S.~M. Kakade.
\newblock {A Tensor Spectral Approach to Learning Mixed Membership Community
  Models}.
\newblock In {\em Conference on Learning Theory (COLT)}, June 2013.

\bibitem[AGJ14]{AltTensorDecomp2014}
Anima Anandkumar, Rong Ge, and Majid Janzamin.
\newblock {Guaranteed Non-Orthogonal Tensor Decomposition via Alternating
  Rank-$1$ Updates}.
\newblock {\em arXiv preprint arXiv:1402.5180}, Feb. 2014.

\bibitem[AHK12]{AnandkumarHsuKakade:COLT12}
A.~Anandkumar, D.~Hsu, and S.~M. Kakade.
\newblock {A Method of Moments for Mixture Models and Hidden Markov Models}.
\newblock In {\em Proc. of Conf. on Learning Theory}, June 2012.

\bibitem[BBH{\etalchar{+}}12]{BBH12}
Boaz Barak, Fernando~G.S.L. Brandao, Aram~W. Harrow, Jonathan Kelner, David
  Steurer, and Yuan Zhou.
\newblock Hypercontractivity, sum-of-squares proofs, and their applications.
\newblock In {\em Proceedings of the Forty-fourth Annual ACM Symposium on
  Theory of Computing}, STOC '12, pages 307--326, New York, NY, USA, 2012. ACM.

\bibitem[BCMV14]{bhaskara2014smoothed}
Aditya Bhaskara, Moses Charikar, Ankur Moitra, and Aravindan Vijayaraghavan.
\newblock Smoothed analysis of tensor decompositions.
\newblock In {\em Proceedings of the 46th Annual ACM Symposium on Theory of
  Computing}, pages 594--603. ACM, 2014.

\bibitem[BKS14]{BarakKS14}
Boaz Barak, Jonathan~A. Kelner, and David Steurer.
\newblock Rounding sum-of-squares relaxations.
\newblock In {\em STOC}, pages 31--40, 2014.

\bibitem[BKS15]{BKS14}
Boaz Barak, Jonathan~A. Kelner, and David Steurer.
\newblock Dictionary learning and tensor decomposition via the sum-of-squares
  method.
\newblock In {\em Proceedings of the Forty-seventh Annual ACM Symposium on
  Theory of Computing}, STOC '15, 2015.

\bibitem[BM15]{BarakM15}
Boaz Barak and Ankur Moitra.
\newblock Tensor prediction, rademacher complexity and random 3-{XOR}.
\newblock 2015.

\bibitem[BS14]{BarakS14}
Boaz Barak and David Steurer.
\newblock Sum-of-squares proofs and the quest toward optimal algorithms.
\newblock In {\em Proceedings of International Congress of Mathematicians
  (ICM)}, 2014.
\newblock To appear.

\bibitem[Cha96]{Chang96}
Joseph~T. Chang.
\newblock Full reconstruction of {M}arkov models on evolutionary trees:
  Identifiability and consistency.
\newblock {\em Mathematical Biosciences}, 137:51--73, 1996.

\bibitem[Com14]{comontensorsurvey}
Pierre Comon.
\newblock Tensor: a partial survey.
\newblock {\em Signal Processing Magazine}, page~11, 2014.

\bibitem[DLCC07]{de2007fourth}
Lieven De~Lathauwer, Jos{\'e}phine Castaing, and Jean-Fran{\c{c}}ois Cardoso.
\newblock Fourth-order cumulant-based blind identification of underdetermined
  mixtures.
\newblock {\em Signal Processing, IEEE Transactions on}, 55(6):2965--2973,
  2007.

\bibitem[GHK15]{GHK15}
Rong Ge, Qingqing Huang, and Sham~M. Kakade.
\newblock Learning mixtures of gaussians in high dimensions.
\newblock In {\em Proceedings of the Forty-seventh Annual ACM Symposium on
  Theory of Computing}, STOC '15, 2015.

\bibitem[Gur03]{Gurvits}
Leonid Gurvits.
\newblock Classical deterministic complexity of edmonds' problem and quantum
  entanglement.
\newblock In {\em Proceedings of the Thirty-fifth Annual ACM Symposium on
  Theory of Computing}, STOC '03, pages 10--19, New York, NY, USA, 2003. ACM.

\bibitem[H{\aa}s90]{haastad1990tensor}
Johan H{\aa}stad.
\newblock Tensor rank is np-complete.
\newblock {\em Journal of Algorithms}, 11(4):644--654, 1990.

\bibitem[HL09]{TensorNPHard}
Christopher~J. Hillar and Lek-Heng Lim.
\newblock {Most tensor problems are NP hard}.
\newblock {\em arXiv preprint arXiv:0911.1393}, 2009.

\bibitem[HM13]{harrow2013testing}
Aram~W Harrow and Ashley Montanaro.
\newblock Testing product states, quantum merlin-arthur games and tensor
  optimization.
\newblock {\em Journal of the ACM (JACM)}, 60(1):3, 2013.

\bibitem[Kru77]{Kruskal:77}
J.B. Kruskal.
\newblock {Three-way arrays: Rank and uniqueness of trilinear decompositions,
  with application to arithmetic complexity and statistics}.
\newblock {\em Linear algebra and its applications}, 18(2):95--138, 1977.

\bibitem[Las01]{lasserre2001global}
Jean~B Lasserre.
\newblock Global optimization with polynomials and the problem of moments.
\newblock {\em SIAM Journal on Optimization}, 11(3):796--817, 2001.

\bibitem[MR06]{MR06}
Elchanan Mossel and S\'{e}bastian Roch.
\newblock Learning nonsingular phylogenies and hidden {M}arkov models.
\newblock {\em Annals of Applied Probability}, 16(2):583--614, 2006.

\bibitem[Par00]{parrilo2000structured}
Pablo~A Parrilo.
\newblock {\em Structured semidefinite programs and semialgebraic geometry
  methods in robustness and optimization}.
\newblock PhD thesis, California Institute of Technology, 2000.

\bibitem[PMS95]{decoupling}
Victor H. de~la Pena and S.~J. Montgomery-Smith.
\newblock Decoupling inequalities for the tail probabilities of multivariate
  u-statistics.
\newblock {\em The Annals of Probability}, 23(2):pp. 806--816, 1995.

\bibitem[Str73]{strassen1973vermeidung}
Volker Strassen.
\newblock Vermeidung von divisionen.
\newblock {\em Journal f{\"u}r die reine und angewandte Mathematik},
  264:184--202, 1973.

\bibitem[Tro12]{tropp2012user}
Joel~A Tropp.
\newblock User-friendly tail bounds for sums of random matrices.
\newblock {\em Foundations of Computational Mathematics}, 12(4):389--434, 2012.

\end{thebibliography}
\bibliographystyle{alpha}

\appendix
\newpage

\section{Omitted Proofs in Section~\ref{sec:certify}}

\subsection{Proof of Lemma~\ref{lem:2-4norm}}\label{sec:2-4norm}
\label{sec:append:2to4}
We first restate the lemma here.

\begin{lemma}With high probability over the randomness of $a_i$'s, 
	\begin{equation}
	\sum_{i=1}^m \inner{a_i,x}^4\sosle_{r,12} 1+\widetilde{O}(m/n^{3/2}) \label{eqn:inter10}
	\end{equation}
\end{lemma}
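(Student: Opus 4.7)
The plan is to view the LHS as a quadratic form in the lifted variable $y = x \otimes x \in \R^{n^2}$ and then bound the resulting matrix spectrally after a careful rank-one decomposition. Concretely, $\sum_{i=1}^m \langle a_i, x\rangle^4 = y^T M y$ with $M = \sum_{i=1}^m (a_i \otimes a_i)(a_i \otimes a_i)^T$. Since $\|y\|^2 = \|x\|^4 \sosle_{r,4} 1$, the SoS Rayleigh-Ritz bound $y^T M y \sosle \|M\|\cdot\|y\|^2$ reduces the problem to controlling a suitable spectral quantity of $M$. A naive bound on $\|M\|$ is too loose: $\|M\|=\|BB^T\|$, where $B$ has rows $(a_i\otimes a_i)^T$ and $(BB^T)_{ij}=\langle a_i,a_j\rangle^2$, so the Gram matrix has diagonal $1$ and off-diagonal $\approx 1/n$, giving $\|M\|\approx 1+m/n$, which is $1+\sqrt{n}$ when $m\approx n^{3/2}$.

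The fix is to identify and remove the single dominant eigendirection of $M$. A direct calculation shows that the top eigenvector is (up to lower-order corrections) $\hat{w}\propto\sum_i a_i\otimes a_i=\mathrm{vec}(A^T A)$, with eigenvalue $\lambda_1\approx 1+m/n$, while all other eigenvalues lie in $1\pm\widetilde{O}(\sqrt{m}/n)$. Decomposing $M=\lambda_1\hat{w}\hat{w}^T + M_{\mathrm{bulk}}$, the two parts are bounded separately. The rank-one piece exploits the structure $\hat{w}\propto\mathrm{vec}(A^T A)$: pairing with $y=x\otimes x$ evaluates to $\hat{w}^T y = c\cdot x^T A^T A\, x$ for a constant $c$ of order $\sqrt{n}/m$. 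A second application of SoS Rayleigh-Ritz yields $(x^T A^T A\, x)^2 \sosle \|A^T A\|^2 \|x\|^4 \sosle_{r,4} \|A^T A\|^2 = O((m/n)^2)$, so
\[\lambda_1(\hat{w}^T y)^2 \sosle_{r,4} (1+m/n)\cdot c^2 \cdot O((m/n)^2) = \widetilde{O}(m/n^2).\]
For the bulk term, $y^T M_{\mathrm{bulk}}\, y \sosle \|M_{\mathrm{bulk}}\|\cdot \|y\|^2 \sosle_{r,4} 1+\widetilde{O}(\sqrt{m}/n)$. Summing gives $\sum_i\langle a_i,x\rangle^4 \sosle_{r,12} 1+\widetilde{O}(\sqrt{m}/n + m/n^2)$, and both correction terms are $\widetilde{O}(m/n^{3/2})$ whenever $m\ge\sqrt{n}$ (the $m<\sqrt{n}$ case is easier and handled directly by Gershgorin on $BB^T$).

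The main technical obstacle is establishing the random matrix bound $\|M_{\mathrm{bulk}}\|\le 1+\widetilde{O}(\sqrt{m}/n)$, which reduces to controlling the ``off-diagonal noise'' matrix $N\in\R^{m\times m}$ with $N_{ij}=\langle a_i,a_j\rangle^2 - 1/n$ for $i\ne j$ and $N_{ii}=0$. Because $N_{ij}$ and $N_{ik}$ share the vector $a_i$, the entries are not independent, so standard matrix Bernstein does not apply directly. The plan is to use a decoupling argument analogous to Lemma~\ref{lem:px}: introduce independent Rademacher signs $\sigma_i\in\{\pm 1\}$ (which preserves the distribution of each $a_i$), invoke a decoupling inequality to replace the coupled product $\sigma_i\sigma_j$ by $\sigma_i\tau_j$ for an independent Rademacher copy $\tau$, and finally condition on the $a_i$'s so that matrix Bernstein (or a moment-method estimate of $\mathbb{E}\|N\|^{2p}$ for large $p$) can be applied to the resulting independent-sign sum, yielding the required $\widetilde{O}(\sqrt{m}/n)$ spectral bound.
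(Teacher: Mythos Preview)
Your overall strategy---lift to $y=x\otimes x$, split off the top eigendirection of $M=\sum_i(a_i\otimes a_i)(a_i\otimes a_i)^T$, and bound the bulk spectrally---is a reasonable plan and genuinely different from the paper's route. However, the concentration argument you propose for the bulk has a fatal flaw. You want to bound $\|N\|$ where $N_{ij}=\langle a_i,a_j\rangle^2-1/n$ by the sign-symmetrization trick from Lemma~\ref{lem:px}: replace $a_i\mapsto\sigma_i a_i$ and then decouple $\sigma_i\sigma_j\to\sigma_i\tau_j$. But $\langle\sigma_i a_i,\sigma_j a_j\rangle^2=\sigma_i^2\sigma_j^2\langle a_i,a_j\rangle^2=\langle a_i,a_j\rangle^2$, so $N$ is \emph{invariant} under this symmetrization; no $\sigma_i\sigma_j$ factor ever appears and there is nothing to decouple. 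The argument in Lemma~\ref{lem:px} works precisely because the coefficient there is $\langle a_i,a_j\rangle$ to an \emph{odd} power; here the power is even. One can still hope that $\|N\|\le\widetilde O(\sqrt{m}/n)$ holds (it is a second-order chaos in the $a_i$'s and can be attacked by decoupling at the level of the vectors $a_i$ themselves via Theorem~\ref{thm:decoupling}, or by a direct moment computation), but that is a separate and nontrivial random-matrix lemma which your proposal does not supply. A secondary arithmetic slip: $\sqrt{m}/n\le m/n^{3/2}$ requires $m\ge n$, not $m\ge\sqrt n$, so even granting the spectral bound your conclusion is $1+\widetilde O(\sqrt{m}/n)$ rather than $1+\widetilde O(m/n^{3/2})$ in the range $\sqrt n\le m\le n$ (still $1+o(1)$, but not the stated form).

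By contrast, the paper sidesteps this difficulty entirely. It squares $\sum_i\langle a_i,x\rangle^4$ and applies SoS Cauchy--Schwarz exactly as in Claim~\ref{clm:cs}, reducing to $\sum_i\langle a_i,x\rangle^6$ plus a cross term. The sixth-power sum is trivial: the Gram matrix of the vectors $a_i^{\otimes 3}$ has off-diagonal entries $\langle a_i,a_j\rangle^3=\widetilde O(n^{-3/2})$, so Gershgorin alone gives spectral norm $1+\widetilde O(m/n^{3/2})$---no cancellation or decoupling needed. The cross term is dispatched by one more Cauchy--Schwarz together with the crude bounds $\sum_i\langle a_i,x\rangle^2\sosle_r O(m/n)$ and the weak $2\!\to\!4$ bound $\sum_i\langle a_i,x\rangle^4\sosle_r O(1)$ from \cite{BBH12}. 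The price is that the SoS degree climbs to $12$; the payoff is that the only random-matrix input required is incoherence plus a Gershgorin bound.
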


%We are going to have an almost rigorous SoS proof. The slight non-rigorousness comes from the fact that we use some kind of recursive argument. 
%As in the $2\rightarrow 3$ norm case, we first use Chauchy-Shwarz to raise it into higher power: 

Recall \cite{BBH12} showed that when $m \ll n^2$, 

\begin{equation}
\sum_{i=1}^m \inner{a_i,x}^4\le O(1) \label{eqn:weak-2-4-norm}
\end{equation}

Here in order to improve this bound, we consider the square of the LHS of~\eqref{eqn:2-4norm} and apply Cauchy-Schwarz (similar to Claim~\ref{clm:cs}), 
\begin{align}
\left(\sum_{i=1}^m \inner{a_i,x}^4\right)^2 &= \left\langle\sum_{i=1}^m\inner{a_i,x}^3a_i, x\right\rangle^2\nonumber \\
&\sosle \left\|\sum_{i=1}^m\inner{a_i,x}^3a_i\right\|^2\|x\|^2 &\textrm{by Cauchy-Schwarz}\nonumber \\
&\sosle_r\left\|\sum_{i=1}^m\inner{a_i,x}^3a_i\right\|^2 = \sum_{i=1}^m \inner{a_i,x}^6 + \sum_{i\neq j}\inner{a_i,a_j}\inner{a_i,x}^3\inner{a_j,x}^3\label{eqn:two-terms-2}
\end{align}

We will bound the first term of~\eqref{eqn:two-terms-2} by $1+o(1)$. We simply let $y = x^{\otimes 3}$ and let $B$ be the matrix whose $i$th row is $a_i^{\otimes 3}$. Then $f(y) = \|By\|^2$ has the property that $f(x^{\otimes 3}) =  \sum_{i=1}^m \inner{a_i,x}^6$. Therefore it suffices to prove that $f(y)\sosle (1 + o(1)\|y\|^2$ or equivalently $\|B\|\le 1+o(1)$. %Then we can rewrite 

%$$\sum_{i=1}^m \inner{a_i,x}^6 = \|By\|^2$$

Consider the matrix $BB^T$. It is a $n$ by $n$ matrix with diagonal entries 1 and off diagonal entries of the form $\inner{a_i^{\otimes 3},a_j^{\otimes 3}} = \inner{a_i,a_j}^3$. 
By the incoherence of $a_i$'s, we have $\inner{a_i,a_j}^3 \lesssim 1/n^{3/2}$. Then by Gershgorin disk theorem, we have $\|BB^T\|\le 1+\widetilde{O}(m/n^{3/2})  = 1+\delta$. It follows that $\|B\|\le 1+\widetilde{O}(m/n^{3/2})$.  Therefore, 
\begin{equation}
\sum_{i=1}^m \inner{a_i,x}^6 = \|Bx^{\otimes 3}\|^2 \sosle (1+\widetilde{O}(m/n^{3/2}))\|x^{\otimes 3}\| \le_r 1+\widetilde{O}(m/n^{3/2}) \label{eqn:inter6}
\end{equation}

For the second term of~\eqref{eqn:two-terms-2}, we apply Cauchy-Schwarz again: 

\begin{align}
\left(\sum_{i\neq j}\inner{a_i,a_j}\inner{a_i,x}^3\inner{a_j,x}^3\right)^2 &\sosle   \left(\sum_{i\neq j}\inner{a_i,a_j}^2\inner{a_i,x}^2\inner{a_j,x}^2\right) \left(\sum_{i\neq j}\inner{a_i,x}^4\inner{a_j,x}^4\right) \nonumber\\
&\sosle \left(\frac{1}{n}\cdot \sum_{i}\inner{a_i,x}^2\sum_j \inner{a_j,x}^2\right) \left(\sum_{i}\inner{a_i,x}^4\sum_j \inner{a_j,x}^4\right) \label{eqn:inter4}
\end{align}

Note that the matrix $A =[a_1\vert \dots\vert a_m]$ has spectral norm bound  $\|A\|\lesssim \sqrt{m/n}$, and therefore 
$$\sum_{i}\inner{a_i,x}^2 = \|A^Tx\|^2 \sosle \|A\|^2 \|x\|^2 \sosle_r \|A\|^2$$
Then using Equation~\ref{eqn:weak-2-4-norm}, and the equation above, we have %the spectral norm of $[a_1\vert \dots\vert a_m]$, we have 
\begin{align}
%\left(\sum_{i\neq j}\inner{a_i,a_j}\inner{a_i,x}^3\inner{a_j,x}^3\right)^2 
%&\le \left(\frac{1}{n}\cdot \sum_{i}\inner{a_i,x}^2\sum_j \inner{a_j,x}^2\right) \left(\sum_{i}\inner{a_i,x}^4\sum_j \inner{a_j,x}^4\right)\\
\textrm{RHS of~\eqref{eqn:inter4} }& \sosle_r \frac{1}{n}\cdot \frac{m}{n} \cdot \frac{m}{n} \cdot O(1) \cdot O(1) \le \widetilde{O}(m^2/n^3) \label{eqn:inter5}
\end{align}

Then by ~\ref{eqn:inter4} and ~\ref{eqn:inter5} and Lemma~\ref{lem:square-root}, we have that 
\begin{equation}
\sum_{i\neq j}\inner{a_i,a_j}\inner{a_i,x}^3\inner{a_j,x}^3 \sosle_r \widetilde{O}(m^2/n^3) \label{eqn:inter7}
\end{equation}

Hence, combining equation~\eqref{eqn:inter7}, ~\eqref{eqn:inter6} and~\eqref{eqn:two-terms-2} we have that

\begin{align}
\left(\sum_{i=1}^m \inner{a_i,x}^4\right)^2 
&\sosle_r \sum_{i=1}^m \inner{a_i,x}^6 + \sum_{i\neq j}\inner{a_i,a_j}\inner{a_i,x}^3\inner{a_j,x}^3\label{eqn:inter11}\\
& \sosle_r 1+ \widetilde{O}(m/n^{3/2})  + \widetilde{O}(m/n^{3/2})   = 1+\widetilde{O}(m/n^{3/2}) \nonumber
\end{align}
Using Lemma~\ref{lem:square-root} again, we complete the proof of Lemma~\ref{eqn:2-4norm}. 

\subsection{Proof of Lemma~\ref{lem:px}}\label{sec:px}

We first restate the lemma:

\begin{lemma}
	When $m\ll n^{3/2}$, the matrix $M = \sum_{i\neq j}\inner{a_i,a_j} (a_i\otimes a_j) (a_i\otimes a_j)^T$ has spectral norm at most $\widetilde{O}(m/n^{3/2})$ and as a direct consequence, 
	$$p(x)\sosle_{r,4} \widetilde{O}(m/n^{3/2})$$
\end{lemma}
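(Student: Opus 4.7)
The plan is to follow the decoupling sketch in the paper and then invoke non-commutative Khintchine (matrix Rademacher) inequalities. \textbf{First}, symmetrize: since $a_i$ and $\sigma_i a_i$ are identically distributed for independent Rademachers $\sigma_i \in \{\pm 1\}$, I would replace each $a_i$ by $\sigma_i a_i$. The rank-one outer products $(a_i\otimes a_j)(a_i\otimes a_j)^T$ are invariant (the factors $\sigma_i^2\sigma_j^2 = 1$ cancel), while $\inner{a_i,a_j}$ picks up $\sigma_i\sigma_j$. Conditioning on the $a_i$'s henceforth, $M$ has the same distribution as $M_\sigma = \sum_{i\neq j}\sigma_i\sigma_j X_{ij}$ with $X_{ij} := \inner{a_i,a_j}(a_i\otimes a_j)(a_i\otimes a_j)^T$. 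Because the diagonal $i=j$ is excluded, the matrix decoupling inequality of de la Pe\~na--Montgomery-Smith (as in \cite{decoupling}) bounds $\|M_\sigma\|$, up to a universal constant, by $\|\widetilde M\|$, where $\widetilde M = \sum_{i,j}\sigma_i\tau_j X_{ij}$ and $\tau \in \{\pm 1\}^m$ is an independent Rademacher copy.

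\textbf{Second}, bound $\|\widetilde M\|$ by two applications of non-commutative Khintchine. Conditional on $\tau$ and the $a_i$'s, $\widetilde M = \sum_i \sigma_i Y_i(\tau)$ with $Y_i(\tau) = \sum_j \tau_j X_{ij}$, so matrix Khintchine yields $\|\widetilde M\| \lesssim \sqrt{\log n}\,\|\sum_i Y_i(\tau)^2\|^{1/2}$; an analogous Khintchine step in $\tau$ then controls $\|Y_i(\tau)\|$ and the variance proxy in terms of deterministic spectral norms built from the $a_i$'s. Using $P_{ij}^2 = P_{ij}$ for the rank-one projections $P_{ij} = (a_i\otimes a_j)(a_i\otimes a_j)^T$ (valid because $\|a_i\otimes a_j\| = 1$), these reduce to quantities of the form $\|\sum_{i,j}\inner{a_i,a_j}^2 P_{ij}\|$, which I would bound by combining $\|A\| = \widetilde{O}(\sqrt{m/n})$ (with $A = [a_1\,|\,\cdots\,|\,a_m]$), the Kronecker identity $\|A\otimes A\| = \|A\|^2$, and the incoherence bound $\max_{i\neq j}|\inner{a_i,a_j}| = \widetilde{O}(1/\sqrt n)$. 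Tracking the constants through the two Khintchine layers produces $\|M\| \le \widetilde{O}(m/n^{3/2})$. The SoS conclusion is then immediate from the unfolding $p(x) = y^T M y$ with $y = x\otimes x$ from \eqref{eqn:better-view}: the SoS-standard inequality $y^T M y \sosle \|M\|\,\|y\|^2$, together with $\|y\|^2 = \|x\|^4 \sosle_{r,4} 1$, yields $p(x) \sosle_{r,4} \widetilde{O}(m/n^{3/2})$.

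\textbf{The main obstacle} is getting the matrix-concentration rate sharp. The naive argument (treating $|\inner{a_i,a_j}|$ as a deterministic $1/\sqrt n$ and writing $M = B\,\diag(\inner{a_i,a_j})\,B^T$ with $B = A\otimes A$) only reaches $\widetilde{O}(m^2/n^{5/2})$, which is $o(1)$ solely up to $m\lesssim n^{5/4}$. Pushing to $n^{3/2}$ requires the decoupling step to genuinely expose the mean-zero cancellation in $\inner{a_i,a_j}$, and then the two Khintchine applications must be arranged so that each one pays only a $\sqrt{\log n}$ factor and a square-root of an incoherence-controlled variance proxy. The delicate bookkeeping is in ensuring that no spurious factor of $m$ creeps back into the deterministic spectral-norm bounds after the two layers of Khintchine are unwound.
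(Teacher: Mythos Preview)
Your outline matches the paper's proof almost exactly: symmetrize by $a_i\mapsto\sigma_ia_i$, decouple $\sigma_i\sigma_j\to\sigma_i\tau_j$ via de la Pe\~na--Montgomery-Smith, then run two nested layers of matrix concentration. The paper uses Matrix Bernstein where you use non-commutative Khintchine, which is an inessential difference.

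The one place your sketch is loose is the outer variance proxy. You assert that $\bigl\|\sum_i Y_i(\tau)^2\bigr\|$ ``reduces to'' $\bigl\|\sum_{i,j}\inner{a_i,a_j}^2 P_{ij}\bigr\|$, but that quantity is only $\E_\tau\bigl[\sum_i Y_i(\tau)^2\bigr]$; a second Khintchine step in $\tau$ does not directly control $\sum_i Y_i(\tau)^2$, which is quadratic in $\tau$. The paper's device here is the Kronecker factorization $Y_i(\tau)=(a_ia_i^T)\otimes Z_i(\tau)$ with $Z_i(\tau)=\sum_{j\ne i}\tau_j\inner{a_i,a_j}a_ja_j^T$. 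One first bounds $\|Z_i(\tau)\|\le\widetilde O(\sqrt m/n)$ for each $i$ (your inner concentration step), and then uses the PSD domination $Y_i(\tau)^2\preceq \|Z_i(\tau)\|^2\,(a_ia_i^T)\otimes I$ to get $\sum_i Y_i(\tau)^2\preceq \widetilde O(m/n^2)\,(AA^T)\otimes I$, whose norm is $\widetilde O(m^2/n^3)$. This factorization is exactly what stops the ``spurious factor of $m$'' you flag: without it, the naive bound $\bigl\|\sum_i Y_i(\tau)^2\bigr\|\le\sum_i\|Y_i(\tau)\|^2\le m\cdot\widetilde O(m/n^2)$ loses a factor of $n$ and only reaches $\|M\|\le\widetilde O(m/n)$. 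Once you insert this step, your argument and the paper's are identical.
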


\begin{proof}
	As suggested in the proof sketch, we first use a simple symmetrization which allows us to focus on the randomness of signs of $\inner{a_i,a_j}$.  For simplicity of notation, let $Q_{ij} := \inner{a_i,a_j}(a_i\otimes a_j) (a_i\otimes a_j)^T$.  Let $\sigma\in \{\pm 1\}^m$ be uniform random $\pm 1$ vector and define $M'$ as 
	$$M' =  \sum_{i\neq j} \sigma_i\sigma_jQ_{ij}.$$
	We claim that $M'$ has the same distribution as $M$, since $a_i$ has the same distribution as $\sigma_ia_i$. Then from now on we condition on the event that $a_i$'s have incoherence property and low spectral norm, that is, $\inner{a_i,a_j}\lesssim 1/\sqrt{n}$,   $\|A\| = \left\|[a_1\vert a_2 \dots \vert a_m]\right\|\lesssim \sqrt{m/n}$, and we will only focus on the randomness of $\sigma$. Ideally we want to write $M'$ as a sum of independent random matrices so that we can apply matrix Bernstein inequality. However, now the random coefficients are $\sigma_i\sigma_j$, and they are not independent with each other.
	
	A key observation here is that the sum is only over the indices $(i,j)$ with $i\neq j$, therefore we can use Theorem 1 of~\cite{decoupling} (restated as Theorem~\ref{thm:decoupling} in the end) to decouple the correlation first. 
	
	Theorem~\ref*{thm:decoupling} basically says that to study the concentration of  a sum of the form $\sum_{i\neq j}f_{ij}(X_i,X_j)$, it is up to constant factor similar to the concentration of the sum $\sum_{i\neq j}f_{ij}(X_i,Y_j)$ where $Y_i$ is an independent copy of $X_i$. Applying the theorem to our situation, we have that there exists absolute constant $C$ such that 
	\begin{equation}
	\Pr[\|M'\|\ge t] \le C\Pr[M''\ge t/C] \label{eqn:M'-M''}
	\end{equation}
	where $$M'':= \sum_{i\neq j}\sigma_i\tau_jQ_{ij},$$ and $\sigma,\tau$ are independently uniform over $\{-1,+1\}^m$. 
	
	%(and they are independent of each other as well).
	Now it suffices to bound the norm of $M''$. We proceed by rewriting $M''$ as 
	$$M'' = \sum_i {\sigma_i \sum_{j\neq i}\tau_jQ_{ij}} := \sum_i \sigma_i T_i,$$
	where \begin{equation}
	T_i :=  \sum_{j\neq i}\tau_jQ_{ij} \label{eqn:T_i}
	\end{equation}
	
	We study the properties of $T_i$ first. 
	
	\begin{claim}\label{claim:T}
		With high probability over the randomness of $a_i$'s, for all $i$, $T_i \preceq \tilde{O}(\sqrt{m}/n) (a_i a_i^T) \otimes I$.
	\end{claim}
	
	\begin{proof}	
		
		Recall that $Q_{ij} =  \inner{a_i,a_j}(a_i\otimes a_j) (a_i\otimes a_j^T)$. In the definition~\ref{eqn:T_i} of $T_i$, the index $i$ is fixed and we take sum over $j$. Therefore it will be convenient to write $Q_{ij}$ as $Q_{ij} = \inner{a_i,a_j}(a_ia_i^T)\otimes (a_ja_j)^T$ where $\otimes$ is the Kronecker product between matrices. Then $T_i$ can be written as 
		
		$$T_i = (a_ia_i^T)\otimes \left(\sum_j\tau_j\inner{a_i,a_j}a_ja_j^T\right).$$
		
		We apply the Matrix Bernstein inequality (Theorem~\ref{thm:matbernstein}) on the right factor. Matrix Bernstein bound requires spectral norm bound for individual matrices, and a variance bound.
		
		For the spectral norm of individual matrices, we check that $ \|\tau_j\inner{a_i,a_j}a_ja_j^T\|\lesssim 1/\sqrt{n}$ (by incoherence).
		For variance we know $$\|\Exp[\sum_{j}\tau_j^2(\inner{a_i,a_j}a_ja_j^T)^2]\| = \|A\diag(\inner{a_i,a_j}^2)_{j\neq i}A^T\|\lesssim m/n^{2},$$ 
		where we used the spectral norm of $A$ and the fact that $\inner{a_i,a_j}^2 \lesssim 1/n$.
		
		Therefore by Matrix Bernstein's inequality (Theorem~\ref{thm:matbernstein}) we have that  whp, over the randomness of $\tau$, 
		$$\|\sum_j\tau_j\inner{a_i,a_j}a_ja_j^T\| \le \widetilde{O}(\sqrt{m}/n).$$
		
		Using the fact that for two matrices $P$ and $Q$, if $P\preceq Q$ and $R$ is PSD, then $R\otimes P\preceq R\otimes Q$ (see Claim~\ref{lem:psdkronecker}),  it follows that 
		$$T_i \preceq (a_ia_i^T)\otimes (\widetilde{O}(\sqrt{m}/n)\cdot I).$$
		
		Finally we use union bound and conclude with high probability this is true for any $i$.
	\end{proof}
	
	%	Then we have $$M''  = \sum_i \sigma_i T_i \le \left(\sum_i \sigma_i a_ia_i^T\right)\otimes (\sqrt{m}/n\cdot I)$$
	
	Now we can apply matrix Bernstein for the sum $M'' = \sum_{i=1}^m \sigma_i T_i$. The individual spectral norm is bounded by $\tilde{O}(\sqrt{m}/n)$ by the Claim~\ref{claim:T}. The variance is
	$$
	\|\sum_{i=1}^m T_i^2\| \le \tilde{O}(m/n^2)\|\sum_{i=1}^m ((a_ia_i^T)\otimes I)^2\|
	= \tilde{O}(m/n^2)\|(AA^T)\otimes I\| = \tilde{O}(m^2/n^3).
	$$
	
	Using matrix Bernstein inequality, we know with high probability $\|M''\| \le \tilde{O}(m/n^{3/2})$.
	
	Using (\ref{eqn:M'-M''}), we get that whp, $\|M'\|\le \widetilde{O}(m/n^{3/2})$. Since $M'$ and $M$ has the same distribution, we conclude that  whp, $\|M\|\le \widetilde{O}(m/n^{3/2})$. 
\end{proof}

We complete the proof by providing the following claim about Kronecker products.

\begin{claim}
	\label{lem:psdkronecker}
	If $P\preceq Q$ and $R$ is psd, then $R\otimes P\preceq R\otimes Q$.
\end{claim}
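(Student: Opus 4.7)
The plan is to reduce the claim to the sub-lemma that the Kronecker product of two PSD matrices is itself PSD, and then to establish that sub-lemma via the mixed-product identity for Kronecker products. By bilinearity of $\otimes$ (which is immediate from the block-matrix definition of $\otimes$ given in the preliminaries), we have
\[
R \otimes Q - R \otimes P = R \otimes (Q - P).
\]
So it suffices to show $R \otimes (Q-P) \succeq 0$, and since $P \preceq Q$ gives $Q - P \succeq 0$ and $R \succeq 0$ by hypothesis, the entire claim reduces to: if $X \succeq 0$ and $Y \succeq 0$, then $X \otimes Y \succeq 0$.

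For this sub-lemma, the cleanest route is to take PSD square roots. Write $X = AA^\t$ and $Y = BB^\t$ for some real matrices $A,B$ (which exist because $X$ and $Y$ are PSD). Applying the mixed-product property $(A_1 \otimes B_1)(A_2 \otimes B_2) = (A_1 A_2) \otimes (B_1 B_2)$, together with the transpose rule $(A \otimes B)^\t = A^\t \otimes B^\t$, we compute
\[
X \otimes Y \;=\; (AA^\t) \otimes (BB^\t) \;=\; (A \otimes B)(A^\t \otimes B^\t) \;=\; (A \otimes B)(A \otimes B)^\t,
\]
which is PSD as a Gram matrix. Chaining this with the reduction above yields $R \otimes P \preceq R \otimes Q$, completing the proof.

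There is no substantive obstacle here: both the bilinearity of the Kronecker product and the mixed-product identity are purely algebraic consequences of the block-entry formula. The only care required is to verify the mixed-product identity (which can be done by comparing $(i,j)$-blocks: the $(i,j)$-block of $(A_1 \otimes B_1)(A_2 \otimes B_2)$ is $\sum_k (A_1)_{ik}(A_2)_{kj} B_1 B_2 = (A_1 A_2)_{ij} B_1 B_2$), after which the argument is one line.
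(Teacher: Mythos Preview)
Your proof is correct. It differs from the paper's argument, though both are short. The paper reduces to rank-one $R = uu^\t$ by spectrally decomposing $R$, and then for any test vector $y$ writes $y = u\otimes v + z$ with $z$ orthogonal to $u\otimes e_i$ for all $i$; since $(R\otimes P)z = 0$, the quadratic form collapses to $(u^\t R u)(v^\t P v) \le (u^\t R u)(v^\t Q v)$. Your route instead uses bilinearity to reduce to ``PSD $\otimes$ PSD is PSD'' and proves that via square roots and the mixed-product identity. Your argument is the more standard textbook one and is arguably cleaner, since it avoids the explicit coordinate decomposition of the test vector; the paper's version, on the other hand, works directly with quadratic forms and needs only the rank-one case rather than the full mixed-product identity. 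Either way the content is elementary and the two proofs are of comparable length.
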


\begin{proof}
	It suffices to prove this when $R = uu^T$ (as we can always decompose $R$ as sum of rank one components). In that case, for any $y \in \R^{n^2}$, we can write $y = u\otimes v + z$ where $z$ is orthogonal to $u\otimes e_i$ for all $i\in[n]$. Now $(R\otimes P) z = 0$, therefore
	$$
	y^T (R\otimes P)y = (u\otimes v)^T(R\otimes P)(u\otimes v) = (u^TRu) (v^TPv) \le (u^TRu)(v^TQv) = y^T (R\otimes Q)y.$$
	Therefore $R\otimes P\preceq R\otimes Q$.
\end{proof}

\subsection{Main Theorem for Certifying Injective Norm}
\label{sec:append:certifymain}
Now we are ready to prove Theorem~\ref{thm:maininjdet}.

\begin{theorem}
	Algorithm~\ref{alg:certifying-norm}  always returns NO when $\|T\|_{\textrm{inj}} > 1+1/\log n$. When $T\sim \mathcal{D}_{m,n}$ and $m\ll n^{3/2}$, Algorithm~\ref{alg:certifying-norm} returns YES with high probability over the randomness of $T$. Further, the same guarantee holds given an approximation $\tilde{T}$ where if $M \in \R^{n\times n^2}$ is an unfolding of $T - \tilde{T}$, $\|M\| \le 1/2\log n$. 
\end{theorem}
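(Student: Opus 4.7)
The plan is to split the argument into the three independent claims made in the theorem: soundness when $\|T\|_{\textrm{inj}}$ is large, completeness on random tensors, and robustness to noise. Each piece is essentially an application of the machinery already developed in Sections~\ref{sec:certify} (Theorem~\ref{thm:proof}, Lemma~\ref{lem:2-4norm}, Lemma~\ref{lem:px}) plus the generic SoS duality of Lemma~\ref{lem:sos}.

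For the soundness direction, suppose $\|T\|_{\textrm{inj}} > 1 + 1/\log n$. Pick a unit vector $x^*$ with $T(x^*,x^*,x^*) > 1 + 1/\log n$ and let $\pE$ be the degree-12 linear functional defined by $\pE[p(x)] = p(x^*)$ for every polynomial $p$ of degree at most $12$. This is an honest expectation with respect to the point-mass distribution on $x^*$, so it is a valid degree-12 pseudo-expectation; it satisfies $\|x\|^2 - 1 = 0$ since $\|x^*\| = 1$; and its objective value is exactly $T(x^*,x^*,x^*) > 1 + 1/\log n$. Hence $\textrm{OPT} > 1 + 1/\log n$ and the algorithm outputs NO, as required.

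For the completeness direction, I would appeal directly to Theorem~\ref{thm:proof}: with high probability over $T \sim \mathcal{D}_{m,n}$, we have
\[
T(x,x,x) \sosle_{r,12} 1 + \widetilde{O}(m/n^{3/2}).
\]
Taking any feasible degree-12 pseudo-expectation $\pE$ satisfying $r(x) = \|x\|^2 - 1 = 0$ and applying it to both sides (which is legal because the pseudo-expectation preserves SoS inequalities modulo the constraints it satisfies, as noted after the pseudo-expectation definition) yields $\pE[T(x,x,x)] \le 1 + \widetilde{O}(m/n^{3/2})$. Since our convention is that $m \ll n^{3/2}$ hides polylog factors, the error term is strictly smaller than $1/\log n$ for large enough $n$. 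Thus $\textrm{OPT} \le 1 + 1/\log n$ and the algorithm returns YES with high probability.

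For the noise extension, I would invoke Lemma~\ref{lem:sosspectral} (the standard fact that a spectral-norm bound on the natural unfolding of a 3-tensor translates into a degree-12 SoS proof that its cubic form is small on the unit sphere). Given $\|M\| \le 1/(2\log n)$ for the unfolding of $\tilde{T} - T$, this yields
\[
(\tilde{T} - T)(x,x,x) \sosle_{r,12} \tfrac{1}{2\log n}.
\]
Adding this to the SoS inequality for $T(x,x,x)$ obtained above gives
\[
\tilde{T}(x,x,x) = T(x,x,x) + (\tilde{T}-T)(x,x,x) \sosle_{r,12} 1 + \widetilde{O}(m/n^{3/2}) + \tfrac{1}{2\log n} \le 1 + \tfrac{1}{\log n},
\]
so the same argument as above with $\tilde{T}$ in place of $T$ shows that the algorithm returns YES. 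The main obstacle, of course, is the completeness direction, but that work is entirely absorbed into Theorem~\ref{thm:proof}; once that theorem is in hand, Theorem~\ref{thm:maininjdet} reduces to a short bookkeeping argument via Lemma~\ref{lem:sos} and Lemma~\ref{lem:sosspectral}.
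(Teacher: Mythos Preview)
Your proposal is correct and follows essentially the same route as the paper: the point-mass pseudo-expectation for soundness, a direct appeal to Theorem~\ref{thm:proof} (via Lemmas~\ref{lem:2-4norm} and~\ref{lem:px}) for completeness, and Lemma~\ref{lem:sosspectral} applied to the unfolding with $y=x\otimes x$ to handle the noise term. The only cosmetic difference is that the paper makes the slack explicit by writing the random-tensor bound as $1+1/(2\log n)$ before adding the $1/(2\log n)$ noise contribution, whereas you leave it as $\widetilde{O}(m/n^{3/2})$; both arrive at the same conclusion.
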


\begin{proof}
	We first prove whenever $\|T\|_{\textrm{inj}} > 1+1/\log n$, the algorithm returns NO. This is because a large injective norm implies there exists an unit vector $x^*$ with $T(x^*,x^*,x^*) = 1$. We can construct a pseudo-expectation $\pE$ as $\pE[p(x)] = p(x^*)$. Clearly this is a valid pseudo-expectation (it is even the expectation of a true distribution: $x = x^*$ with probability 1). Also, we know $\pE[T(x,x,x)] = T(x^*,x^*,x^*) > 1+1/\log n$, so in particular $OPT > 1+1/\log n$ and the algorithm must return NO.
	
	Next we show the algorithm returns YES with high probability when $T$ is chosen from $\mathcal{D}$. This follows directly from Theorem~\ref{thm:proof}, which in turn follows from Lemmas~\ref{lem:2-4norm} and \ref{lem:px}. In particular, we know there is a degree-12 SoS proof that shows $T(x,x,x) \le 1+\tilde{O}(m/n^{3/2}) \le 1+1/2\log n$, so by Lemma~\ref{lem:sos} this must also hold for any pseudo-expectation.
	
	When we are only given tensor $\tilde{T}$ such that the unfolding of $\tilde{T}-T$ has spectral norm $1/2\log n$. Let $M$ be the unfolding of $\tilde{T}-T$, and $y = x\otimes x$, then by Lemma~\ref{lem:sosspectral} we know $(x^TMy)^2 \preceq \|x\|^2 \|M\|^2 \|y\|^2$, which implies (by Lemma~\ref{lem:sosspectral}) $[\tilde{T}-T](x,x,x) = x^TMy \sosle_{r,12} \|M\| \le 1/2\log n$. Combining the two terms we know
	$$
	\tilde{T}(x,x,x) = T(x,x,x)+[\tilde{T}-T](x,x,x) \sosle_{r,12} 1+1/\log n.
	$$
\end{proof}

\section{Omitted Proof in Section~\ref{sec:alg}}

\subsection{Proof of Lemma~\ref{lem:consistent}}

\label{sec:appendix:consistent}

We first restate the lemma here:

\begin{lemma}
When $T$ is chosen from $\mathcal{D}_{m,n}$ where $m\ll n^{3/2}$, with high probability over the randomness of $T$, the pseudo-expectation found in Step 3 of Algorithm~\ref{alg:tensor-decomp} satisfies the following: there exists an $a_i$ such that $\tilde{E}[\inner{a_i,x}^k]\ge e^{-\epsilon k}$ for sufficiently small constant $\epsilon$ (where the pseudo-expectation has degree  $4k$ and $k = O((\log n)/\epsilon)$). In particular, applying Theorem~\ref{thm:BKS}, repeat the algorithm for $n^{O(k)}$ time will give a vector $c$ such that $\inner{c,a_i} \ge 1-O(\epsilon)$.
\end{lemma}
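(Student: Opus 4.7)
The plan is to show that $\pE[\sum_{i=1}^m \inner{a_i,x}^k] = \Omega(1)$ for $k = \Theta(\log n/\epsilon)$, then use the exclusion constraints $\{\inner{s,x}^2 \le 1/8 : s\in S\}$ to argue that essentially all this pseudo-mass is carried by an \emph{unfound} component, and finally apply pigeonhole. Since $e^{-\epsilon k} = n^{-\Theta(1/\epsilon)}$ matches $1/m$ when $k \geq (3/2)\log n/\epsilon$, the averaging step will naturally lose the factor of $m$ at exactly the right rate; the rounding step via Theorem~\ref{thm:BKS} then returns the unit vector $c$ with $\inner{c,a_i} \geq 1-O(\epsilon)$ after $n^{O(k)}$ repetitions.

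\textbf{From the constraint to a high-moment lower bound.} Since $1-1/\log n \geq 0$, every integer power of $T$ is SoS modulo the constraints, and induction on $j$ gives $T^j \sosge (1-1/\log n)^j$, hence $\pE[T^j] \geq e^{-j/\log n} = \Omega(1)$ for all $j \leq k$. The real work is converting this into a lower bound on $\pE[\sum_i \inner{a_i,x}^{3j}]$: morally, in the multinomial expansion $T^j = \sum_i \inner{a_i,x}^{3j} + (\text{cross terms})$, the cross terms should be small because of incoherence of the $a_i$'s. The base case $j=2$ is exactly Claim~\ref{clm:cs} combined with Lemmas~\ref{lem:2-4norm} and~\ref{lem:px}: $T^2 \sosle P_4 + p(x) \sosle 1 + \widetilde{O}(m/n^{3/2})$ where $P_d := \sum_i \inner{a_i,x}^d$. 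From here I would iterate: apply SoS Cauchy--Schwarz $(P_{d+1})^2 \sosle P_d P_{d+2}$ together with the upper bounds $P_{2\ell} \sosle 1+o(1)$ (which follow from Theorem~\ref{thm:proof} together with the monotonicity $\inner{a_i,x}^{2\ell} \sosle \inner{a_i,x}^{2\ell-2}$ under $\|x\|^2 = \|a_i\|^2 = 1$) to get a recursion of the form $\pE[P_{2\ell+2}] \geq \pE[P_{2\ell}]^2/(1+o(1))$, continuing until the exponent reaches $k$.

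\textbf{Exclusion, averaging, and rounding.} By the inductive invariant of Algorithm~\ref{alg:tensor-decomp}, every $s\in S$ is $0.1$-close to some true component $a_{i(s)}$. Combining $\inner{s,x}^2 \le 1/8$ with Cauchy--Schwarz gives the SoS bound $\inner{a_{i(s)},x}^2 \sosle (1/\sqrt{8}+0.1)^2 < 0.21$, and raising to the $(k/2)$-th power yields $\inner{a_{i(s)},x}^k \sosle 0.21^{k/2}$. Thus the found components contribute at most $|S|\cdot 0.21^{k/2} = o(1)$ to $\pE[P_k]$ once $k = \Omega(\log n)$ with a sufficiently large constant, so the remaining $\Omega(1)$ pseudo-mass is distributed over at most $m$ unfound indices. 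Pigeonhole then produces an unfound $i^\star$ with
\[ \pE[\inner{a_{i^\star},x}^k] \;\geq\; \frac{\pE[P_k]-o(1)}{m} \;\geq\; \frac{1}{2m} \;\geq\; e^{-\epsilon k}, \]
which is exactly the hypothesis of Theorem~\ref{thm:BKS}; plugging in $\pE$ and repeating $n^{O(k)} = n^{O(\log n/\epsilon)}$ times produces the required $c$.

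\textbf{Main obstacle.} The crux is the iterative step of Paragraph~2. The naive recursion $\pE[P_{2\ell+2}] \geq \pE[P_{2\ell}]^2/(1+o(1))$ roughly doubles the shortfall $1 - \pE[P_{2\ell}]$ at each step, so starting from an initial shortfall of $\widetilde{O}(m/n^{3/2}) = 1/\polylog(n)$ the bound remains $\Omega(1)$ only for $\ell = O(\log\log n)$ iterations, whereas I need $\ell = \Theta(\log n /\epsilon)$. Closing this exponential gap is the technical heart of the argument. One route is to bound the multinomial cross-terms of $T^j$ directly in SoS, generalizing the decoupling-based matrix concentration of Lemma~\ref{lem:px} to higher-order tensor unfoldings built out of $a_i^{\otimes 3}$ and thereby certifying $T^j \sosle P_{3j} + o(1)$ in one shot rather than iteratively. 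An alternative is to refine the induction so that each step loses only an \emph{additive} $o(1)$ by tracking both sides (upper and lower bounds on $\pE[P_{2\ell}]$) simultaneously and invoking higher-moment SoS identities. Either route requires careful propagation of the $\widetilde{O}(m/n^{3/2})$ error through $\Theta(\log n/\epsilon)$ iterations and is where most of the technical effort must be spent.
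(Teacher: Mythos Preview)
You have correctly identified both the overall strategy and the main obstacle: the iterated Cauchy--Schwarz recursion $\pE[P_{2\ell+2}]\ge \pE[P_{2\ell}]^2/(1+\delta)$ doubles the deficit at each step and therefore only survives $O(\log\log n)$ iterations, far short of the $\Theta(\log n/\epsilon)$ you need. However, neither of your two proposed escape routes is what the paper does, and both are left too vague to constitute a proof. Bounding the multinomial cross-terms of $T^j$ directly would require SoS certificates for degree-$3j$ random tensors, which is not developed anywhere in the paper; ``refining the induction to lose only an additive $o(1)$'' is a restatement of the goal rather than a mechanism.

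The missing idea is a single \emph{Hölder-type} SoS inequality (equation (2.5) of \cite{BKS14}) that jumps from the low moments $P_4,P_6$ directly to $P_{2k}$ in one shot. With $v_i=\inner{a_i,x}^2$ and the choice $d=3$, $t=k$, it reads
\[
\Bigl(\sum_{i=1}^m \inner{a_i,x}^6\Bigr)^{k}\;\sosle\;\Bigl(\sum_{i=1}^m \inner{a_i,x}^{2k}\Bigr)\cdot\Bigl(\sum_{i=1}^m \inner{a_i,x}^4\Bigr)^{k}.
\]
Since $P_4\sosle 1+\delta$ with $\delta=\widetilde O(m/n^{3/2})$ (Lemma~\ref{lem:2-4norm}), the second factor is at most $(1+\delta)^k$ in SoS, and taking pseudo-expectations together with $\pE[P_6^k]\ge \pE[P_6]^k\ge(1-\epsilon)^k$ gives
\[
\pE\bigl[P_{2k}\bigr]\;\ge\;\frac{(1-\epsilon)^k}{(1+\delta)^k}\;\ge\;e^{-O((\epsilon+\delta)k)}.
\]
The loss is now \emph{linear} in $k$ rather than exponential in the number of iterations, so it remains $\Omega(1)$ all the way out to $k=\Theta(\log n/\epsilon)$; averaging over $m$ indices then yields the desired $\pE[\inner{a_i,x}^{2k}]\ge e^{-O(\epsilon)k}/m\ge e^{-O(\epsilon)k}$. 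The paper establishes the needed two-sided bounds on $\pE[P_4]$ and $\pE[P_6]$ as a separate preliminary lemma, reusing the calculations behind Lemma~\ref{lem:2-4norm} and Lemma~\ref{lem:px}. Your exclusion argument in Paragraph~3 is fine but belongs to the proof of Theorem~\ref{thm:quasi-decomp-algo} rather than to this lemma; in the paper the present lemma only asserts existence of \emph{some} $a_i$, and the fact that it must be an unfound one is argued afterwards.
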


First we will show that for a valid pseudo-expectation, the sum of $\inner{a_i,x}^4$ and $\inner{a_i,x}^6$ are also bounded. This actually follows directly from the proof of Lemma~\ref{lem:2-4norm} and~\ref{lem:px}. 

%The following lemma is simply the pseudo-expectation version of Lemma~\ref{lem:2-4norm} and Lemma~\ref{lem:px}. The proof follows straightforwardly from taking pseudo-expectation over the proofs of Lemma~\ref{lem:2-4norm} and~\ref{lem:px}. 
\begin{lemma}
	With high probability over the randomness of $T$, we have that for any degree-12 pseudo expectation $\pE$ that satisfies the constraints $\{\|x\|^2 = 1, T(x,x,x) \ge 1-\tau\}$, it also satisfies 
	
	\begin{equation}
	1 + \epsilon\ge \pE\left[\sum_{i=1}^m \inner{a_i,x}^4\right] \ge 1 - \epsilon\label{eqn:quasi-4-lower-bound}
	\end{equation}
		\begin{equation}
		1 + \epsilon \ge \pE\left[\sum_{i=1}^m \inner{a_i,x}^6\right] \ge 1 - \epsilon \label{eqn:quasi-6-lower-bound}
		\end{equation}
	for $\epsilon = \widetilde{O}(m/n^{3/2}) + O(\tau)$.
\end{lemma}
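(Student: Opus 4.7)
The plan is to derive both upper and lower bounds entirely from SoS inequalities already established in the proof of Theorem~\ref{thm:proof}, by taking pseudo-expectations and invoking the Cauchy--Schwarz inequality for pseudo-expectations. The upper bounds are immediate. For the 4th power sum, Lemma~\ref{lem:2-4norm} gives $\sum_i \inner{a_i,x}^4 \sosle_{r,12} 1 + \widetilde{O}(m/n^{3/2})$; since $\pE$ has degree $12$ and is consistent with $r(x) = \|x\|^2 - 1$, taking pseudo-expectation yields $\pE[\sum_i \inner{a_i,x}^4] \le 1 + \epsilon$. For the 6th power sum, equation~\eqref{eqn:inter6} in the proof of Lemma~\ref{lem:2-4norm} shows $\sum_i \inner{a_i,x}^6 \sosle_r 1 + \widetilde{O}(m/n^{3/2})$, so the upper bound follows identically.

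For the lower bound on $\pE[\sum_i \inner{a_i,x}^4]$, I would combine Claim~\ref{clm:cs} with Lemma~\ref{lem:px} to obtain
$$T(x,x,x)^2 \sosle_{r,12} \sum_i \inner{a_i,x}^4 + \widetilde{O}(m/n^{3/2}).$$
Taking pseudo-expectation and rearranging gives $\pE[\sum_i \inner{a_i,x}^4] \ge \pE[T(x,x,x)^2] - \widetilde{O}(m/n^{3/2})$. Applying the pseudo-expectation Cauchy--Schwarz inequality to $f = T(x,x,x)$ (of degree $3 \le 12/2$) and $g = 1$, we get $\pE[T(x,x,x)^2] \ge \pE[T(x,x,x)]^2$. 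The constraint $T(x,x,x) \ge 1 - \tau$ yields $\pE[T(x,x,x)] \ge 1 - \tau$, so $\pE[T(x,x,x)^2] \ge (1-\tau)^2 \ge 1 - 2\tau$. Combining, $\pE[\sum_i \inner{a_i,x}^4] \ge 1 - 2\tau - \widetilde{O}(m/n^{3/2}) \ge 1 - \epsilon$.

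For the lower bound on $\pE[\sum_i \inner{a_i,x}^6]$, I would bootstrap from the 4th power lower bound by reusing the Cauchy--Schwarz step already performed inside the proof of Lemma~\ref{lem:2-4norm}: equations~\eqref{eqn:inter11} and~\eqref{eqn:inter7} together give
$$\left(\sum_i \inner{a_i,x}^4\right)^2 \sosle_r \sum_i \inner{a_i,x}^6 + \widetilde{O}(m^2/n^3).$$
Taking pseudo-expectation and applying pseudo-Cauchy--Schwarz to $\sum_i \inner{a_i,x}^4$ (of degree $4 \le 12/2$),
$$\pE\Big[\sum_i \inner{a_i,x}^6\Big] \ge \pE\Big[\big(\sum_i \inner{a_i,x}^4\big)^2\Big] - \widetilde{O}(m^2/n^3) \ge \big(\pE[\sum_i \inner{a_i,x}^4]\big)^2 - \widetilde{O}(m^2/n^3) \ge 1 - O(\epsilon).$$

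The main technical concern is bookkeeping: I need every SoS inequality I cite to have total degree at most $12$ in $x$ (which is the case, since the heaviest object $(\sum_i \inner{a_i,x}^4)^2$ has degree $8$), and the inequality constraint $T(x,x,x) \ge 1 - \tau$ must be interpreted in a way compatible with pseudo-expectations of $T(x,x,x)^2$. The standard device of introducing an auxiliary variable $z$ with $T(x,x,x) - (1-\tau) = z^2$ handles this without increasing the effective degree beyond what is available. All other ingredients are recycled verbatim from the proof of Theorem~\ref{thm:proof}.
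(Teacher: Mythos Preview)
Your proposal is correct and follows essentially the same route as the paper: upper bounds by taking pseudo-expectations of \eqref{eqn:inter10} and \eqref{eqn:inter6}, the lower bound on the 4th-power sum via Claim~\ref{clm:cs} and Lemma~\ref{lem:px} combined with pseudo-expectation Cauchy--Schwarz on $T(x,x,x)$, and the lower bound on the 6th-power sum via \eqref{eqn:inter11} and \eqref{eqn:inter7} plus pseudo-expectation Cauchy--Schwarz on $\sum_i\inner{a_i,x}^4$. If anything, your write-up is slightly more careful than the paper's in justifying $\pE[T(x,x,x)^2]\ge (1-\tau)^2$ explicitly rather than writing $1-\tau\le \pE[T(x,x,x)^2]$ directly.
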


\begin{proof}
	We essentially just take pseudo-expectation on the SoS proofs for Lemma~\ref{lem:2-4norm} and~\ref{lem:px}. The upper bounds follows directly by taking pseudo-expectation on equation~\eqref{eqn:inter10} and~\eqref{eqn:inter6}. Fo the lower bounds, by taking pseudo-expectation over the SoS equation in Lemma~\ref{lem:px}, we have that $\pE\left[p(x)\right] \le \widetilde{O}(m/n^{3/2})$. Taking pseudo-expectation over Claim~\ref{clm:cs}, using the assumption that $\pE$ satisfies $T(x,x,x)\ge 1-\tau$,  we have that 
	\begin{equation}
	1-\tau\le \pE\left[[T(x,x,x)]^2\right] \le \pE\left[\inner{a_i,x}^4\right] + \pE\left[p(x)\right] \le \pE\left[\inner{a_i,x}^4\right] + \widetilde{O}(m/n^{3/2}) %\label{eqn:two-terms}
	\end{equation}
	which implies 
	\begin{equation}
	\pE\left[\inner{a_i,x}^4\right] \ge 1-\tau-\widetilde{O}(m/n^{3/2}).\label{eqn:inter12}
	\end{equation}

	For proving the lower bounds in~\eqref{eqn:quasi-6-lower-bound}, we first  pseudo-expectation on  equation~\ref{eqn:inter7}, we have that 
	\begin{equation}
	\pE\left[\sum_{i\neq j}\inner{a_i,a_j}\inner{a_i,x}^3\inner{a_j,x}^3 \right]\le \widetilde{O}(m^2/n^3) \nonumber%\label{eqn:inter7}
	\end{equation}
	
	Then taking pseudo-expectation over equation~\eqref{eqn:inter11}, we obtain that 
	\begin{align}
	\pE\left[\left(\sum_{i=1}^m \inner{a_i,x}^4\right)^2\right] 
	\le \pE\left[\sum_{i=1}^m \inner{a_i,x}^6\right] + \pE\left[\sum_{i\neq j}\inner{a_i,a_j}\inner{a_i,x}^3\inner{a_j,x}^3\right]\nonumber %\\
	%& \le  1+ \widetilde{O}(m/n^{3/2})  + \widetilde{O}(m/n^{3/2})   = 1+\widetilde{O}(m/n^{3/2}) %\label{eqn:inter11}
	\end{align}
	
	Note that by equation~\eqref{eqn:inter12} and Cauchy-Schwarz, we have $$\pE\left[\left(\sum_{i=1}^m \inner{a_i,x}^4\right)^2\right]  \ge \left(\pE\left[\sum_{i=1}^m \inner{a_i,x}^4\right] \right)^2\ge 1-O(\tau)-\widetilde{O}(m/n^{3/2})$$
	Combining the two equations above, we obtain that 
	
	$$\pE\left[\sum_{i=1}^m \inner{a_i,x}^6\right] \ge 1-O(\tau)-\widetilde{O}(m/n^{3/2})$$
%{\sc RG: We still need to say something, just write the Cauchy-Schwartz again and explain that all the terms are bounded before and we just need to take pseudo-expectation.}
\end{proof}

Next we are going to prove that $\pE$ also satisfies the condition of Theorem 5.2 of~\cite{BKS14}.  %for the degree is $O(\log m)$. 
\begin{lemma}\label{lem:condtion-for-theorem5.1}
	For $k = O((\log n)/\epsilon)$ with constant $\epsilon < 1$, If $\pE$ is a degree-$k$ pseudo-expectation  that satisfies equation~\eqref{eqn:quasi-6-lower-bound} and~\eqref{eqn:quasi-4-lower-bound}, then there must exists $i\in [m]$ such that $\pE[\inner{a_i,x}^k]\ge e^{-(2\epsilon +\delta)k}$ with $\delta = \widetilde{O}(m/n^{3/2})$.
\end{lemma}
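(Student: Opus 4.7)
Plan. Let $A_i := \pE[\inner{a_i,x}^4]$ and $B_i := \pE[\inner{a_i,x}^6]$; the hypotheses give $\sum_i A_i,\sum_i B_i\in[1-\epsilon-\delta,\,1+\epsilon+\delta]$ with $\delta=\widetilde O(m/n^{3/2})$. I plan to combine two ingredients. First, for each fixed $i$ I will use log-convexity of the pseudo-moments $\pE[\inner{a_i,x}^{2j}]$ in $j$ to push control from the base cases $j=2,3$ up to $j=k/2$. Second, I will turn the global control on $\sum_i A_i,\sum_i B_i$ into a lower bound on a single component by a Jensen/averaging argument.

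The first preliminary is the pointwise upper bound $B_i\le A_i$: since $\|a_i\|=1$, the Cauchy--Schwarz identity $\|a_i\|^2\|x\|^2-\inner{a_i,x}^2\sosge 0$ is a degree-$2$ SoS inequality, so multiplying by $\inner{a_i,x}^4$ and reducing modulo $r(x)=\|x\|^2-1$ gives $\inner{a_i,x}^6\sosle_{r,8}\inner{a_i,x}^4$, whence $B_i\le A_i$ and $r_i:=B_i/A_i\in[0,1]$ is well-defined. The second preliminary is log-convexity: pseudo-expectation Cauchy--Schwarz applied to the product $\inner{a_i,x}^{j-1}\cdot\inner{a_i,x}^{j+1}$ yields
\begin{equation*}
\pE[\inner{a_i,x}^{2j}]^{\,2}\le \pE[\inner{a_i,x}^{2j-2}]\,\pE[\inner{a_i,x}^{2j+2}],
\end{equation*}
so $j\mapsto\log\pE[\inner{a_i,x}^{2j}]$ is convex, and iterating from the base pair $(j=2,\,j=3)$ gives
\begin{equation*}
\pE[\inner{a_i,x}^{2j}]\ge B_i^{\,j-2}/A_i^{\,j-3}=A_i\,r_i^{\,j-2}\qquad\text{for }2\le j\le 2k.
\end{equation*}
(The degree-$4k$ budget covers every moment used in the iteration.)

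Summing over $i$ and applying Jensen with the convex map $t\mapsto t^{j-2}$ on $[0,1]$ and weights $p_i=A_i/\sum_{i'}A_{i'}$,
\begin{equation*}
\sum_i A_i r_i^{\,j-2}\ge \Bigl(\sum_i A_i\Bigr)\Bigl(\sum_i p_i r_i\Bigr)^{j-2}=\frac{(\sum_i B_i)^{j-2}}{(\sum_i A_i)^{j-3}}\ge \frac{(1-\epsilon-\delta)^{j-2}}{(1+\epsilon+\delta)^{j-3}}.
\end{equation*}
By averaging there is an $i^*$ with $\pE[\inner{a_{i^*},x}^{2j}]\ge \tfrac{1}{m}(1-\epsilon-\delta)^{j-2}/(1+\epsilon+\delta)^{j-3}\ge \tfrac{1}{m}e^{-(2\epsilon+2\delta+O(\epsilon^2))\,j}$. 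Taking $k=2j=C(\log n)/\epsilon$ with $C$ large enough, the $1/m$ loss is absorbed into an $e^{-O(\epsilon)k}$ factor and the resulting exponent is at most $(2\epsilon+\delta)k$, as claimed; feeding the resulting pseudo-distribution into Theorem~\ref{thm:BKS} with $c=a_{i^*}$ then produces a unit vector with $\inner{c,a_{i^*}}\ge 1-O(\epsilon)$. The main obstacle is purely bookkeeping --- verifying that the degree-$4k$ budget really does cover every pseudo-Cauchy--Schwarz invocation in the log-convex extrapolation, and packaging the $\epsilon$, $\delta$, and Jensen slacks into the single exponent $(2\epsilon+\delta)k$ advertised in the statement.
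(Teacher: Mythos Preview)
Your proof is correct and takes a genuinely different route from the paper's. The paper works at the SoS level throughout: it invokes an SoS H\"older inequality (equation (2.5) of \cite{BKS14}) on the vector $v=(\inner{a_i,x}^2)_{i\in[m]}$ to obtain the polynomial inequality
\[
\Bigl(\sum_i \inner{a_i,x}^6\Bigr)^k \;\sosle\; \Bigl(\sum_i \inner{a_i,x}^{2k}\Bigr)\Bigl(\sum_i \inner{a_i,x}^4\Bigr)^k,
\]
then takes $\pE$ of both sides, plugs in the SoS upper bound $\sum_i\inner{a_i,x}^4\sosle 1+\widetilde O(m/n^{3/2})$ and the hypothesis $\pE[\sum_i\inner{a_i,x}^6]\ge 1-\epsilon$, and finally averages over $i$. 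You instead take $\pE$ first, componentwise, and argue purely with scalars: pseudo-Cauchy--Schwarz gives log-convexity of $j\mapsto\pE[\inner{a_i,x}^{2j}]$ for each fixed $i$, which lets you extrapolate from the pair $(A_i,B_i)$ to the $k$-th moment, and the aggregation over $i$ is handled by ordinary (scalar) Jensen with weights $A_i/\sum_{i'}A_{i'}$.

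Your route has two advantages: it avoids the external SoS H\"older lemma entirely, relying only on pseudo-Cauchy--Schwarz, and it is more degree-economical --- to extrapolate up to $\pE[\inner{a_i,x}^k]$ you only need the pseudo-expectation defined through degree $k$ (the last convexity step uses moments $k-2$, $k$), whereas the paper's polynomial inequality lives at degree $6k$ in $x$ before $\pE$ is applied. So your parenthetical about a ``degree-$4k$ budget'' and the range ``$2\le j\le 2k$'' are over-cautious; $j\le k/2$ and degree $k$ already suffice. The paper's approach, by contrast, keeps everything inside the SoS proof system until the final step, which meshes more directly with the surrounding machinery. One small point worth making explicit: if $A_i=0$ then $r_i$ is undefined, but pseudo-Cauchy--Schwarz then forces $B_i=0$ and all higher even moments to vanish, so those indices can simply be dropped from every sum.
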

\begin{proof}
	By equation (2.5) of~\cite{BKS14}, we the following SoS version of Holder inequality. For any integer $t,d$ and $k = t(d-2)$, 
	 $$\|v\|_{d}^{dt} \sosle_{k} \|v\|^k_k\cdot \|v\|^{2t}$$ %can be proved via SoS for $k = t(d-2)$ and any vector $v$. 
	Let $v_i = \inner{a_i,x}^2$, we have 
	\begin{equation}
	\left(\sum_{i=1}^m \inner{a_i,x}^{2d}\right)^t \sosle_{k} \sum_{i=1}^m \inner{a_i,x}^{2k}\cdot \left(\sum_{i=1}^m \inner{a_i,x}^{4}\right)^t\label{eqn:quasi-1}
	\end{equation}
	By Lemma~\ref{lem:2-4norm}, we have that with high probability over randomness of $a_i$'s, 	$\sum_{i=1}^m \inner{a_i,x}^{4}\sosle 1+\widetilde{O}(m/n^{3/2}) $, and it follows that 
	\begin{equation}
	\left(\sum_{i=1}^m \inner{a_i,x}^{4}\right)^t \le (1+  \widetilde{O}(m/n^{3/2}))^t \label{eqn:quasi-2}
	\end{equation}
	By picking $d = 3$, we have $t = k$. Taking $t = O(\log m/\epsilon)$ and combining equation~\eqref{eqn:quasi-1} and~\eqref{eqn:quasi-2}, we have that %and by the constraints $\sum_{i=1}^m \inner{a_i,x}^6 \ge 1-o(1)$, we have that 
		\begin{equation}
		\left(\sum_{i=1}^m \inner{a_i,x}^{6}\right)^k \sosle_{k} \sum_{i=1}^m \inner{a_i,x}^{2k}\cdot \left(\sum_{i=1}^m \inner{a_i,x}^{4}\right)^k \sosle_k(1+  \widetilde{O}(m/n^{3/2}))^k\sum_{i=1}^m \inner{a_i,x}^{2k} \nonumber
		\end{equation}
	Applying pseudo-expectation on both hands, we obtain, 
			\begin{equation}
			\pE\left[\left(\sum_{i=1}^m \inner{a_i,x}^{6}\right)^k\right] \le (1+  \widetilde{O}(m/n^{3/2}))^k\cdot \pE\left[\sum_{i=1}^m \inner{a_i,x}^{2k} \right]\nonumber
			\end{equation}
	Note that by Cauchy-Schwarz and equation~\eqref{eqn:quasi-6-lower-bound}, we have 
				\begin{equation}
				(1-\epsilon)^k\le \pE\left[\sum_{i=1}^m \inner{a_i,x}^{6}\right]^k \le\pE\left[\left(\sum_{i=1}^m \inner{a_i,x}^{6}\right)^k\right]\nonumber
				\end{equation}
				
	Combining the two equations above, we obtain that 	for $\delta = \widetilde{O}(m/n^{3/2})$, 
	
	\begin{equation}
	\pE\left[\sum_{i=1}^m \inner{a_i,x}^{2k}\right] \ge (1-\delta)^k (1-\epsilon)^k
	\end{equation}
%	$$\left(\sum_{i=1}^m \inner{a_i,x}^{2d}\right)^t \ge (1-o(1))^t$$
	
%	Note that when $d=3$, wwe have $k = t$ and therefore, we can lower bound $\sum_{i=1}^m \inner{a_i,x}^{2k}$  via SoS by  
	
	%$$\sum_{i=1}^m \inner{a_i,x}^{2k} \ge (1-\delta)^k$$

	Therefore by averaging argument, there exists $i$ such that 
	$$\pE[\inner{a_i,x}^{2k}]\ge (1-\delta)^k/m =  e^{-\delta k - \log m -\epsilon k }$$
	
	when $ k\ge (\log m)/\epsilon$, we have that $\pE[\inner{a_i,x}^{2k}]\ge e^{-(2\epsilon+\delta)k}$
\end{proof}

Lemma~\ref{lem:consistent} follows directly from the two lemmas above.

\subsection{Proof of Theorem~\ref{thm:quasi-decomp-algo}}

\label{sec:appendix:algmain}

In this section we prove the main theorem in Section~\ref{sec:alg}.

\begin{theorem}
Let $T$ be a tensor chosen from $\mathcal{D}_{m,n}$, when $m\ll n^{3/2}$ with high probability over the randomness of $T$  Algorithm~\ref{alg:tensor-decomp}	returns $\{\hat{a}_i\}$ that is $0.1$-close to $\{a_i\}$ in time $n^{O(\log n)}$. Further, the same guarantee holds given an approximation $\tilde{T}$ where if $M \in \R^{n\times n^2}$ is an unfolding of $T - \tilde{T}$, $\|M\| \le \sfrac{1}{10\log }n$. 
\end{theorem}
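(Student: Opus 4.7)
The plan is to argue by induction on the iterations of Algorithm~\ref{alg:tensor-decomp}, maintaining the invariant that after the $\ell$-th iteration $S$ consists of $\ell$ unit vectors, each $0.1$-close to a distinct component $a_i$. The inductive step reduces to three sub-claims: (i) the SDP in Step~3 is feasible, so a valid pseudo-expectation $\pE$ exists; (ii) among the $n^{O(k)}$ vectors produced by the BKS rounding in Step~4, with high probability at least one satisfies $T(c,c,c)\ge 0.99$ and corresponds to an unmatched component; and (iii) every unit $c$ with $T(c,c,c)\ge 0.99$ is $0.1$-close to some $a_i$.

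For (i), I would take $\pE$ to be the Dirac at $x=a_j$ for an unmatched $a_j$ (which exists whenever $|S|<m$). The constraint $\|x\|^2=1$ is immediate; $T(a_j,a_j,a_j)=1\pm o(1)\ge 1-1/\log n$ follows from the concentration noted at the start of Section~\ref{sec:mainidea}, since the cross-terms $\inner{a_i,a_j}^3$ ($i\ne j$) are independent zero-mean variables of magnitude $O(n^{-3/2})$ and sum to $\tilde O(\sqrt m/n^{3/2})=o(1)$ when $m\ll n^{3/2}$; and for each $s\in S$ matched to some $a_{i(s)}\ne a_j$ the estimate $|\inner{s,a_j}|\le|\inner{a_{i(s)},a_j}|+\|s-a_{i(s)}\|\lesssim 1/\sqrt n+0.1$ gives $\inner{s,a_j}^2\le 1/8$.

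For (ii), I would apply Lemma~\ref{lem:consistent} with a small constant $\epsilon$ and $k=O((\log n)/\epsilon)$ to obtain an index $i^*$ with $\pE[\inner{a_{i^*},x}^k]\ge e^{-\epsilon k}$, and then run Theorem~\ref{thm:BKS} $n^{O(k)}$ times to amplify its $2^{-k/\poly(\epsilon)}$ success probability to $1-o(1)$; this yields $c$ with $\inner{c,a_{i^*}}\ge 1-O(\epsilon)$, which together with the same concentration used in (i) gives $T(c,c,c)\ge 0.99$ after choosing $\epsilon$ small. The index $i^*$ must be unmatched: if $s\in S$ were $0.1$-close to $a_{i^*}$, then the BKS closeness of $c$ to $a_{i^*}$ forces $\inner{s,c}^2$ to be close to $1$, contradicting the SDP constraint $\pE[\inner{s,x}^2]\le 1/8$ combined with the concentration of $\pE$ on $a_{i^*}$ that underlies the BKS guarantee.

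For (iii), combining Claim~\ref{clm:cs}, Lemma~\ref{lem:2-4norm}, and Lemma~\ref{lem:px} at the actual unit vector $c$ gives $T(c,c,c)^2\le\sum_i\inner{a_i,c}^4+p(c)\le 1+\widetilde O(m/n^{3/2})$, so $\sum_i\inner{a_i,c}^4\ge 0.98$; an $\ell^4$-vs-$\ell^2$ comparison of the correlation profile $\{\inner{a_i,c}\}$, together with the companion bound $\|\sum_i\inner{a_i,c}^2 a_i\|^2\le 1+o(1)$, then isolates a dominant coordinate $|\inner{a_{i^*},c}|\ge 0.995$, whence $\|c-a_{i^*}\|\le 0.1$ after absorbing the sign. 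I expect (iii) to be the main obstacle, because $\sum_i\inner{a_i,c}^2$ can be as large as $\Theta(m/n)=\Theta(\sqrt n)$ in the overcomplete regime, so a naive pigeonhole does not suffice and one must carefully combine the SoS upper bounds with the lower bound $\sum_i\inner{a_i,c}^3=T(c,c,c)\ge 0.99$ and the incoherence of the components. The noise-robust version then follows because the hypothesis on $\tilde T-T$ yields $[\tilde T-T](x,x,x)\sosle_{r,12}1/(10\log n)$ via Lemma~\ref{lem:sosspectral}, perturbing every SoS inequality above only by $O(1/\log n)$ and leaving the constant-level thresholds $0.99$ and $1/8$ intact.
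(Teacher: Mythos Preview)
Your overall inductive structure and the three sub-claims match the paper exactly, and your argument for (i) is the same as the paper's.

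For (ii), your argument that $i^*$ is unmatched is muddled: the constraint $\pE[\inner{s,x}^2]\le 1/8$ is on the pseudo-expectation, not on the rounded vector $c$, so observing that $\inner{s,c}^2$ is close to $1$ does not by itself contradict anything. The paper instead argues directly at the level of $\pE$: if some $s\in S$ is $0.1$-close to $a_{i^*}$, then the SoS inequality $(A+B)^k\sosle_k 2^{k-1}(A^k+B^k)$ (Lemma~\ref{lem:sos-holder}) gives
\[
\pE[\inner{a_{i^*},x}^k]\;\le\;2^{k-1}\bigl(\pE[\inner{s,x}^k]+\pE[\inner{s-a_{i^*},x}^k]\bigr)\;\le\;2^{k-1}\bigl((1/8)^{k/2}+(0.1)^k\bigr)\;<\;e^{-\epsilon k},
\]
contradicting the conclusion of Lemma~\ref{lem:consistent}. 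Your parenthetical about ``concentration of $\pE$ on $a_{i^*}$'' gestures in this direction, but the route through $c$ is a detour.

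For (iii) your approach diverges from the paper's, and you correctly flag it as the obstacle. You try to extract a dominant coordinate from the $\ell^4/\ell^2$ profile of $\{\inner{a_i,c}\}$; this can in principle be pushed through by also bringing in $\sum_i\inner{a_i,c}^6$ (since $\sum_i v_i^6\le(\max_j v_j^2)\sum_i v_i^4$ pins down $\max_j v_j^2$ once both sums are near $1$), but the constants do not reach $0.995$ without iterating to higher powers, and your worry about $\sum_i\inner{a_i,c}^2=\Theta(m/n)$ is real for any argument that stops at degree $4$. The paper bypasses all of this with a one-line trick: define the Dirac pseudo-expectation $\pE^c[p(x)]=p(c)$, note that it satisfies the hypotheses of Lemma~\ref{lem:condtion-for-theorem5.1} (because $T(c,c,c)\ge 0.99$), and read off $\inner{a_i,c}^k=\pE^c[\inner{a_i,x}^k]\ge e^{-(2\epsilon+\delta)k}$ for some $i$ and $k=O(\log n)$, hence $\inner{a_i,c}\ge 0.999$. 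In other words, the same degree-boosting machinery that powers the rounding in (ii) already does all the work for (iii), once you notice that a point mass is a perfectly good pseudo-expectation of any degree. Your noise-robustness remark is correct and matches the paper.
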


As suggested in the proof sketch, we prove this theorem by induction. The induction hypothesis is that all vectors $s_i\in S$ are 0.1-close (in $\ell_2$ norm) to distinct components $a_i$'s. We break the proof into three claims:

\begin{claim}
With high probability over the tensor $T$, suppose all the previously found $s_i$'s are $0.1$-close (in $\ell_2$ norm) to some components $a_j$'s, then there exists a pseudo-expectation that satisfies Step 3 in Algorithm~\ref{alg:tensor-decomp}.
\end{claim}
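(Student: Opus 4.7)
The plan is to exhibit an explicit feasible pseudo-expectation: namely, take $\pE$ to be the linear operator induced by the point distribution concentrated at $x = a_i$ for a suitably chosen component $a_i$, i.e.\ $\pE[p(x)] := p(a_i)$ for every polynomial $p$ of degree at most $k$. Such a $\pE$ is trivially a valid pseudo-expectation of every degree, so it suffices to show that $a_i$ can be chosen to satisfy all three polynomial constraints with high probability.

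Choosing $a_i$: since $|S| < m$ whenever the loop has not yet terminated, and the inductive hypothesis says every $s \in S$ is $0.1$-close to a distinct $a_{\pi(s)}$, there must exist at least one index $i \in [m] \setminus \{\pi(s) : s \in S\}$. I pick this $a_i$. The norm constraint $\|a_i\|^2 = 1$ is immediate from the definition of $\mathcal{D}_{m,n}$ (each $a_i \in \{\pm 1/\sqrt{n}\}^n$).

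Verifying the constraint $T(x,x,x) \ge 1 - 1/\log n$: we have
\[
T(a_i,a_i,a_i) \;=\; \|a_i\|^6 + \sum_{j \ne i} \inner{a_j, a_i}^3 \;=\; 1 + \sum_{j \ne i} \inner{a_j, a_i}^3.
\]
Conditioned on $a_i$, the summands $\inner{a_j, a_i}^3$ are independent mean-zero random variables of magnitude $O(1/n^{3/2})$ with sub-Gaussian tails (by Hoeffding on the inner product of a fixed $\pm 1/\sqrt n$ vector with an independent one, cubed). Bernstein's inequality then gives that, with high probability over $T$, $\bigl|\sum_{j\ne i} \inner{a_j,a_i}^3\bigr| = \widetilde O(\sqrt{m}/n^{3/2}) = o(1/\log n)$ uniformly over all $i \in [m]$ (after a union bound), so $T(a_i,a_i,a_i) \ge 1 - 1/\log n$ comfortably.

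Verifying $\inner{s, a_i}^2 \le 1/8$ for every $s \in S$: fix any $s \in S$ and let $a_{\pi(s)}$ be the component it approximates, so $\|s - a_{\pi(s)}\| \le 0.1$ and $\pi(s) \ne i$ by construction. Then
\[
|\inner{s, a_i}| \;\le\; |\inner{s - a_{\pi(s)}, a_i}| + |\inner{a_{\pi(s)}, a_i}| \;\le\; \|s - a_{\pi(s)}\|\cdot \|a_i\| + |\inner{a_{\pi(s)},a_i}| \;\le\; 0.1 + \widetilde{O}(1/\sqrt{n}),
\]
using Cauchy--Schwarz and the high-probability incoherence bound $|\inner{a_j,a_i}| = \widetilde O(1/\sqrt n)$ for $j \ne i$ (valid simultaneously for all pairs by a union bound). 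Squaring gives $\inner{s, a_i}^2 \le (0.1 + o(1))^2 < 1/8$. The only step with any real content is the concentration bound of $T(a_i,a_i,a_i)$ around $1$; everything else is a direct combinatorial/triangle-inequality argument, and the robust version with an approximate tensor $\tilde T$ follows identically since the additional error term $(\tilde T - T)(a_i,a_i,a_i)$ is bounded by the spectral norm of the unfolding times $\|a_i\|^3 = 1$, which is $O(1/\log n)$ by assumption.
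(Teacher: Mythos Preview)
Your proposal is correct and follows essentially the same approach as the paper: both exhibit the point distribution at an as-yet-unfound component $a_i$, verify $T(a_i,a_i,a_i)\ge 1-1/\log n$ via concentration of $\sum_{j\ne i}\inner{a_j,a_i}^3$ together with a union bound, and bound $\inner{s,a_i}^2$ using $\|s-a_{\pi(s)}\|\le 0.1$ plus incoherence of distinct components. The only cosmetic difference is that the paper bounds $\inner{s,a_i}^2$ via $(A+B)^2\le 2(A^2+B^2)$ whereas you use the triangle inequality on $|\inner{s,a_i}|$ and square; both give numbers well below $1/8$.
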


\begin{proof}
We first prove that with high probability $T(a_i,a_i,a_i) \ge 1-1/\log n$ for all $i$. This is easy because $T(a_i,a_i,a_i) = 1 + \sum_{j\ne i} \inner{a_i,a_j}^3$. Conditioned on $a_i$, the values $\inner{a_i,a_j}$ are sub-Gaussian random variables with mean 0 and variance $1/n$, so by standard concentration bounds we know with high probability $\sum_{j\ne i} \inner{a_i,a_j}^3 \ge -1/\log n$. We can then take the union bound and conclude $T(a_i,a_i,a_i) \ge 1-1/\log n$ for all $i$.

Now for simplicity of notation, assume that $S = \{s_1,\dots,s_t\}$ for some $t < m$,  where $s_i$ is $0.1$-close to $a_i$. We can construct a pseudo-expectation $\pE[p(x)] = p(a_{t+1})$. Clearly this is a valid pseudo-expectation that satisfies $\|x\|^2 = 1$. For the inequality constraints we also know $\inner{a_{t+1},s_i}^2 \le 2(\inner{a_{t+1},a_i}^2+\inner{a_{t+1}, a_i-s_i}^2) < 1/8$ (where the whole proof only uses Cauchy-Schwarz and $(A+B)^2\le 2(A^2+B^2)$, so the proof is SoS). Therefore the system in Step 3 must have a feasible solution.
\end{proof}

\begin{claim}
For any valid pseudo-expectation in Step 3, with high probability we get an unit vector $c$ that satisfies $T(c,c,c) \ge 0.99$, and $c$ is far from all the previously found $a_i$'s.
\end{claim}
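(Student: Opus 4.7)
The plan is a two-step approach: apply Lemma~\ref{lem:consistent} to identify a true component $a_i$ on which the pseudo-expectation places large mass, then apply the BKS sampling theorem (Theorem~\ref{thm:BKS}) to extract a unit vector $c$ close to that $a_i$. Concretely, for any feasible $\pE$ in Step~3, Lemma~\ref{lem:consistent} produces some $a_i$ with $\pE[\inner{a_i,x}^{2k}]\ge e^{-O(\epsilon)k}$, and running the algorithm of Theorem~\ref{thm:BKS} for $n^{O(k)}$ independent trials boosts its success probability to $1-o(1)$, yielding a unit vector $c$ with $\inner{c,a_i}\ge 1-O(\epsilon)$, hence $\|c-a_i\|\le O(\sqrt{\epsilon})$. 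Once $\epsilon$ is a sufficiently small absolute constant, both conclusions of the claim will follow.

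To see that $c$ is far from every $s\in S$, I would first argue that the $a_i$ isolated by Lemma~\ref{lem:consistent} cannot coincide with any $a_{j'}$ that a previously found $s_j$ is $0.1$-close to. Suppose for contradiction $a_i=a_{j'}$. Using the SoS inequality $(A+B)^2\sosle 2A^2+2B^2$, the constraint $\inner{s_j,x}^2\le 1/8$, the bound $\|a_{j'}-s_j\|\le 0.1$, and $\|x\|^2=1$, one obtains the SoS-provable bound
\[
\inner{a_i,x}^2 \sosle_r 2\inner{s_j,x}^2 + 2\|a_i-s_j\|^2\|x\|^2 \le 0.27.
\]
Raising to the $k$-th power and taking pseudo-expectation yields $\pE[\inner{a_i,x}^{2k}]\le 0.27^k$, which contradicts the lower bound $e^{-O(\epsilon)k}$ from Lemma~\ref{lem:consistent} for any sufficiently small constant $\epsilon$. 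Hence $a_i$ is a new component. Since distinct random components are $\Omega(1)$-separated by incoherence ($\inner{a_i,a_{j'}}=\tilde{O}(1/\sqrt n)$ implies $\|a_i-a_{j'}\|\ge 1-o(1)$), the vector $c$, being within $O(\sqrt\epsilon)$ of $a_i$, is automatically $\Omega(1)$-separated from every $s\in S$.

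The final step is to show $T(c,c,c)\ge 0.99$. Writing $c=a_i+\delta$ with $\|\delta\|=O(\sqrt{\epsilon})$, I would expand
\[
T(c,c,c) \;=\; \inner{a_i,c}^3 \;+\; \sum_{j\ne i}\bigl(\inner{a_j,a_i}+\inner{a_j,\delta}\bigr)^3.
\]
The first term is at least $(1-O(\epsilon))^3\ge 1-O(\epsilon)$. Expanding the cubes in the sum produces four families of error terms, each of which I would bound separately: the pure piece $\sum_{j\ne i}\inner{a_j,a_i}^3$ concentrates at $O(1/\log n)$ by the same sub-Gaussian argument used in Section~\ref{sec:mainidea}; the term $\sum_{j\ne i}\inner{a_j,a_i}^2\inner{a_j,\delta}$ equals $\inner{\delta,A^\top w}$ for $w_j=\inner{a_j,a_i}^2$, and is controlled via $\|A\|\lesssim\sqrt{m/n}$ together with incoherence; the term $\sum_{j\ne i}\inner{a_j,a_i}\inner{a_j,\delta}^2=\delta^\top M_i\delta$ where $M_i=\sum_{j\ne i}\inner{a_j,a_i}a_ja_j^\top$ has $\|M_i\|=\tilde{O}(\sqrt{m/n})$ by a direct matrix Bernstein estimate; and finally $\sum_{j\ne i}\inner{a_j,\delta}^3\le\|T\|_{\mathrm{inj}}\|\delta\|^3$ by Theorem~\ref{thm:maininj} applied to $\delta/\|\delta\|$. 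Each family is $o(1)$ under $m\ll n^{3/2}$, giving $T(c,c,c)\ge 1-O(\sqrt{\epsilon})-o(1)\ge 0.99$ once $\epsilon$ is taken small enough.

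The main obstacle is this last step: although each individual cross-term is routine once the ingredients of Sections~\ref{sec:certify}--\ref{sec:alg} are in hand, the constant $\epsilon$ has to be chosen carefully so that the $O(\sqrt{\epsilon})$ slack from the BKS approximation together with all the $o(1)$ noise still leaves a margin of at least $0.01$. The first two steps, by contrast, are essentially bookkeeping built on top of Lemma~\ref{lem:consistent} and the already-established concentration facts.
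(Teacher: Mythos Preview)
Your proposal is correct and matches the paper on the first two steps: identifying via Lemma~\ref{lem:consistent} an $a_i$ with $\pE[\inner{a_i,x}^{2k}]\ge e^{-O(\epsilon)k}$, ruling out that this $a_i$ is one of the previously found components by the SoS bound $\inner{a_i,x}^2\sosle 2\inner{s_j,x}^2+2\|a_i-s_j\|^2$, and then invoking Theorem~\ref{thm:BKS} to produce $c$ with $\inner{c,a_i}\ge 1-O(\epsilon)$.

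The difference is in the last step. You establish $T(c,c,c)\ge 0.99$ by writing $c=a_i+\delta$, expanding $\sum_j\inner{a_j,c}^3$ into four families of cross-terms, and bounding each one separately via incoherence, the spectral norm of $A$, a matrix Bernstein estimate on $M_i=\sum_{j\ne i}\inner{a_j,a_i}a_ja_j^\top$ (which is exactly the matrix appearing in Claim~\ref{claim:T}), and the injective-norm bound applied to $\delta/\|\delta\|$. This works, but the paper gets there in one line by telescoping the multilinear form:
\[
T(c,c,c)=T(a_i,a_i,a_i)+T(c-a_i,a_i,a_i)+T(c,c-a_i,a_i)+T(c,c,c-a_i)\ge 1-\tfrac{1}{\log n}-3\|T\|_{\mathrm{inj}}\|c-a_i\|,
\]
using only that $T(a_i,a_i,a_i)\ge 1-1/\log n$ (already proved in the feasibility claim) and the injective-norm certificate from Theorem~\ref{thm:maininj}. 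Your route has the advantage of being completely explicit about constants and of not appealing to the fact that the symmetric injective norm controls the full multilinear form, at the cost of re-deriving several concentration estimates that the telescoping argument sidesteps entirely.
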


\begin{proof}
	By Lemma~\ref{lem:consistent} we know there must be a vector $a_i$ such that $\pE[\inner{a_i,x}^k] \ge e^{-\epsilon k}$ for sufficiently small constant $\epsilon$. We show that this vector $a_i$ cannot be among the previously found ones. By Lemma~\ref{lem:sos-holder} we know that for even number $k$, 
	$$(\inner{s_i,x}+\inner{s_i-a_i,x})^{k} \le 2^{k-1}(\inner{s_i-a_i,x}^{k}+\inner{s_i,x}^{k})$$
	Taking pseudo-expectations over both sides, we have that
	$$\pE[\inner{a_i,x}^k] \sosle_{2k} 2^{k-1}(\pE[\inner{s_i,x}^k] + k\pE[\inner{s_i-a_i,x}^k]) \sosle_{\|x|^2=1,2k} e^{-\epsilon k}$$ where we've used the constraint $\inner{s_i,x}^2\le 1/8$ and induction hypothesis $\|s_i-a_i\| \le 0.1$.
	%{\sc RG: we may need the SoS proof for this fact $(A+B)^k \le 2^{k-1}(A^k+B^k)$. This is probably just some version of Holder and if not the case when $k$ is a power of 2 is very easy to prove.}
	
	Now applying Theorem~\ref{thm:BKS} we get a vector $c$ that is has inner-product $1-O(\epsilon)$ with $a_i$. Therefore $T(c,c,c) = T(a_i,a_i,a_i) + T(c-a_i,a_i,a_i)+T(c,c-a_i,a_i) + T(c,c,a_i) \ge 1-1/\log n - 3\|T\|_{\textrm{inj}}\|c-a_i\| \ge 0.99$. Here $T(x,y,z) = \sum_{i_1,i_2,i_3}T_{i_1,i_2,i_3}x_{i_1}y_{i_2}z_{i_3}$ is the multilinear form for the tensor, and note that this step of the proof does not need to be SoS because we already have the vector $c$ from Theorem~\ref{thm:BKS}.
\end{proof}

\begin{claim}
For any unit vector $c$ such that $T(c,c,c)\ge 0.99$, there must be a component $a_i$ such that $\|a_i-c\| \le 0.1$.
\end{claim}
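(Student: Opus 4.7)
I plan to prove the claim by contrapositive: if $\|a_i - c\| > 0.1$, equivalently $\alpha_i := \inner{a_i, c} < 0.995$, for every $i \in [m]$, then $T(c, c, c) < 0.99$. The argument proceeds in two stages: a global step that localizes any $c$ with large $T(c,c,c)$ near one component, followed by a quantitative expansion around that component to sharpen the localization.

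\emph{Stage 1 (localization via $\epsilon$-net).} I would first show that with high probability over the random $\{a_i\}$, every unit $c$ with $\max_i |\alpha_i(c)| \le 1/2$ satisfies $|T(c, c, c)| = o(1)$. For a fixed $c$, $T(c, c, c) = \sum_i \alpha_i^3$ is a sum of independent mean-zero sub-Weibull random variables with $\E \alpha_i^{6} \le O(1/n^3)$. A Rosenthal-type moment bound combined with Markov's inequality gives a per-point tail, and a standard $\epsilon$-net argument (using that $T(x, x, x)$ on the sphere is Lipschitz with constant $O(\|T\|_{\textrm{inj}}) = O(1)$ via Theorem~\ref{thm:maininj}) upgrades the bound to every unit $c$ with $\max_i |\alpha_i(c)| \le 1/2$. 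This is the quantitative version of the second half of the Observation in Section~\ref{sec:mainidea}.

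\emph{Stage 2 (quantitative expansion).} Granted Stage 1, the hypothesis $T(c,c,c) \ge 0.99$ forces some $|\alpha_{j^\ast}| \ge 1/2$; a negative $\alpha_{j^\ast}$ contributes negatively to $T$ and cannot be offset by the remaining indices (all satisfying $|\alpha_i| \le 1/2$), so in fact $\alpha_{j^\ast} \ge 1/2$. Decompose $c = \alpha_{j^\ast} a_{j^\ast} + v$ with $v \perp a_{j^\ast}$ and $\|v\|^2 = 1 - \alpha_{j^\ast}^2$, and expand
\[
T(c,c,c) \;=\; \alpha_{j^\ast}^3\, T(a_{j^\ast},a_{j^\ast},a_{j^\ast}) + 3\alpha_{j^\ast}^2\, T(a_{j^\ast},a_{j^\ast},v) + 3\alpha_{j^\ast}\, T(a_{j^\ast},v,v) + T(v,v,v).
\]
I would bound each piece using the random-matrix tools already developed: $T(a_{j^\ast},a_{j^\ast},a_{j^\ast}) = 1 + o(1)$ by concentration of $\sum_{i \ne j^\ast}\inner{a_i,a_{j^\ast}}^3$; the two cross terms are each $o(\|v\|)$ uniformly in $v$ using the incoherence $|\inner{a_i,a_{j^\ast}}| \le \tilde O(1/\sqrt n)$ plus an $\epsilon$-net in $a_{j^\ast}^\perp$; and $|T(v,v,v)| \le (1+o(1))\|v\|^3$ by the injective-norm bound from Theorem~\ref{thm:maininj}. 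Collecting these gives $T(c,c,c) = \alpha_{j^\ast}^3 + o(1)$, so $T(c,c,c) \ge 0.99$ forces $\alpha_{j^\ast} \ge 0.99^{1/3} - o(1) > 0.995$, contradicting the contrapositive hypothesis.

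\emph{Main obstacle.} The crux is Stage 1: one needs a per-point concentration tail on $\sum_i\alpha_i^3$ strong enough to beat the cardinality of an $\epsilon$-net of the unit sphere. A naive Bernstein bound yields only $\exp(-\sqrt n)$-type tails because $\alpha_i^3$ is sub-Weibull (a cube of a sub-Gaussian) rather than sub-exponential, so one is forced into a moment method with the degree $2k$ optimized carefully --- conceptually reminiscent of, though distinct from, the decoupling-plus-matrix-Bernstein analysis behind Lemma~\ref{lem:px} in Section~\ref{sec:px}. A secondary $\epsilon$-net argument is needed in Stage 2 to control $T(a_{j^\ast},a_{j^\ast},v)$ uniformly in $v \perp a_{j^\ast}$, but it takes place in the $(n-1)$-dimensional subspace $a_{j^\ast}^\perp$ and is of the same flavor.
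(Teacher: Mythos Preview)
Your Stage~1 claim is false as stated. Take $c=(a_1+\cdots+a_5)/\|a_1+\cdots+a_5\|$. Since the $a_i$ are nearly orthonormal, $\|a_1+\cdots+a_5\|=\sqrt{5}+o(1)$, each $\alpha_j=\inner{a_j,c}=1/\sqrt5+o(1)<1/2$ for $j\le 5$, and $\alpha_j=o(1)$ for $j>5$; yet $T(c,c,c)=\sum_j\alpha_j^3=5\cdot 5^{-3/2}+o(1)=1/\sqrt5+o(1)\approx 0.447$, which is $\Theta(1)$, not $o(1)$. So the implication ``$\max_i|\alpha_i|\le 1/2\Rightarrow T(c,c,c)=o(1)$'' cannot hold for all unit $c$.

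More structurally, a plain $\epsilon$-net cannot deliver what you want. For a \emph{fixed} $c$, the tail $\Pr[\sum_i\alpha_i^3\ge 0.99]$ is not $\exp(-\Omega(n))$: if it were, union-bounding over an $\epsilon$-net and using your Lipschitz step would give $\sup_{\|c\|=1}T(c,c,c)<0.99+o(1)$, contradicting $\|T\|_{\mathrm{inj}}=1+o(1)$ from Theorem~\ref{thm:maininj}. The constraint $\max_i|\alpha_i(c)|\le 1/2$ is data-dependent, so it does not fit into the ``fix the net, then reveal the $a_i$'' template; your sketch does not explain how to thread this, and conditioning on $|\alpha_i|\le 1/2$ only buys a Bernstein bound of order $\exp(-\Theta(1))$ per point, far short of what the union bound needs. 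Stage~2, by contrast, is fine once one is handed $\alpha_{j^\ast}\ge 1/2$: the cross terms are $o(1)$ by incoherence and the spectral bound on $A$, and then $T(c,c,c)\le \alpha_{j^\ast}^3+(1-\alpha_{j^\ast}^2)^{3/2}+o(1)$ forces $\alpha_{j^\ast}$ close to $1$. The gap is entirely in Stage~1.

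The paper takes a completely different, two-line route that avoids any new probabilistic work. It plugs the point mass at $c$ into the SoS machinery already built: the operator $\pE^c[p(x)]:=p(c)$ is a valid pseudo-expectation (of arbitrary degree) satisfying $\|x\|^2=1$ and $T(x,x,x)\ge 0.99$, so Lemma~\ref{lem:condtion-for-theorem5.1} applies verbatim and yields some $i$ with $\inner{a_i,c}^k=\pE^c[\inner{a_i,x}^k]\ge e^{-(2\epsilon+\delta)k}$, hence $\inner{a_i,c}\ge 1-O(\epsilon)$. In effect, the deterministic pointwise inequalities $\sum_i\inner{a_i,c}^4\le 1+o(1)$ and $\sum_i\inner{a_i,c}^6\ge 1-o(1)$ (which follow from the SoS certificates of Lemmas~\ref{lem:2-4norm} and~\ref{lem:px} evaluated at $c$) already pin down a single large coordinate via the H\"older step in Lemma~\ref{lem:condtion-for-theorem5.1}; no $\epsilon$-net is needed.
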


\begin{proof}
	We define the following trivial pseudo-expectation $\pE^c$ defined by $c$: $\pE^c\left[p(x)\right] =  p(c)$.  Then we know that $\pE^c$ does satisfy equation $T(x,x,x)\ge 0.99$, and the degree of $\pE^c$ can be any finite number. Therefore, by Lemma~\ref{lem:condtion-for-theorem5.1}, we have that $\pE^c\left[\inner{a_i,x}^k\right]\ge e^{-(2\epsilon+\delta
		)k}$ for $k = O(\log n)$. 
	Therefore using the definition of $\pE^c$, we have that 
	$\pE^c\left[\inner{a_i,x}^k\right] = \inner{a_i,c}^k \ge e^{-(2\epsilon+\delta
		)k}$. Taking $\epsilon = 0.001$ and then we have that $\inner{a_i,c} \ge 0.999-\delta$ and it follows that $\|a_i-c\|\le 0.99$. 
%{\sc RG: Please fill this in.}
\end{proof}

These three claims finishes the induction in the noiseless case. For the noisy case, we can handle it the same ways as Theorem~\ref{thm:maininjdet}: note that $[\tilde{T}-T](x,x,x) \sosle_{\|x\|^2=1,12} 1/2\log n$ and this additional term does not change any part of the proof.

Finally, the runtime of Line 3 in Algorithm~\ref{alg:tensor-decomp} is $n^{O(k)}$, and the run-time of line 4 is also $n^{O(k)}$. Therefore the total runtime is $n^{O(k)}$. 
\section{Matrix Concentrations}

In this section we introduce theorems used to prove matrix concentrations. First we need the following lemma for decoupling the randomness in the sum.

\begin{theorem}[Special case of Theorem 1 of~\cite{decoupling}]\label{thm:decoupling}
	Let $X_1,\dots, X_n$, $Y_1,\dots,Y_n$ are independent random variables on a measurable space over $S$, where $X_i$ and $Y_i$ has the same distribution for $i = 1,\dots,n$. Let $f_{ij}(\cdot,\cdot)$ be a family of functions taking $S\times S$ to a Banach space $(B,\|\cdot\|)$. Then there exists absolute constant $C$, such that for all $n\ge 2$, $t>0$, 
	$$\Pr\left[\left\|\sum_{i\neq j}f_{ij}(X_i,X_j)\right\|\ge t\right] \le C\Pr\left[\left\|\sum_{i \neq j}f_{ij}(X_i,Y_j)\right\|\ge t/C\right]$$
\end{theorem}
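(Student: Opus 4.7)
The plan is to apply the classical random-partition decoupling argument of de la Pe\~na and Montgomery-Smith. Introduce i.i.d.\ Bernoulli random variables $\delta_1,\dots,\delta_n$ uniform on $\{0,1\}$, independent of all $(X_i,Y_i)$, and let $I=\{i:\delta_i=1\}$ and $J=[n]\setminus I$. The key structural observation is that the partial sum
\[
S_\delta \;:=\; \sum_{i\in I,\,j\in J} f_{ij}(X_i,X_j)
\]
involves only $\{X_i:i\in I\}$ and $\{X_j:j\in J\}$, which are independent collections of variables because $I\cap J=\emptyset$. Conditioned on $\delta$, the joint law of these two blocks therefore coincides with that of $(\{X_i\}_{i\in I},\{Y_j\}_{j\in J})$, and so $S_\delta$ has the same distribution as its fully decoupled counterpart $\sum_{i\in I,\,j\in J}f_{ij}(X_i,Y_j)$.

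The first step is to relate the original double sum $S := \sum_{i\ne j} f_{ij}(X_i,X_j)$ to $S_\delta$. For every pair $i\ne j$ we have $\Pr_\delta[i\in I,\,j\in J]=1/4$, hence $\E_\delta[S_\delta]=S/4$, and Jensen's inequality applied to the Banach norm gives $\|S\|\le 4\,\E_\delta\|S_\delta\|$ pointwise in the $X_i$'s. The event $\{\|S\|\ge t\}$ therefore forces $\E_\delta\|S_\delta\|\ge t/4$, and a Paley--Zygmund / Markov-type argument (controlling the second moment of $\|S_\delta\|$ via a standard Rademacher-sum comparison) upgrades this to a universal lower bound on $\Pr_\delta[\|S_\delta\|\ge t/C_1]$ for some constant $C_1$. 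Integrating over $X$ converts a tail bound on $S$ into a tail bound on $S_\delta$ up to absolute constants.

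The second step is to pass from $S_\delta$, which is a product-set sum over $I\times J$, to the full decoupled sum $\widetilde S := \sum_{i\ne j} f_{ij}(X_i,Y_j)$. Decompose the index set $\{(i,j):i\ne j\}$ into four pieces according to whether $i$ and $j$ lie in $I$ or $J$; by the symmetry of the law of $\delta$ each of the four resulting subsums has the same distribution, so the triangle inequality together with the contraction principle for Rademacher sequences shows that the $(I,J)$-subsum is stochastically dominated by $\widetilde S$ up to a universal multiplicative constant. Chaining the two steps yields the claimed inequality $\Pr[\|S\|\ge t]\le C\Pr[\|\widetilde S\|\ge t/C]$ for an absolute constant $C$.

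The main obstacle is the second step: recovering a full off-diagonal sum from a product-set sum in a general Banach space requires the contraction principle (or an equivalent symmetrization over $\delta$), and it is this step that determines the absolute constant $C$. The first step is essentially soft, using only Jensen's inequality and independence of the $\delta_i$'s; the distributional substitution of $Y_j$ for $X_j$ on the index block $J$ is the point where ``decoupling'' truly occurs and follows directly from independence of the $X$'s across the disjoint coordinate blocks $I$ and $J$.
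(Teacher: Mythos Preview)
The paper does not prove this theorem at all: it is simply quoted from the reference \cite{decoupling} (de la Pe\~na--Montgomery-Smith) and used as a black box in the proof of Lemma~\ref{lem:px}. So there is no ``paper's own proof'' to compare against; your sketch is an attempt to reproduce the cited external result.

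As a sketch of that external proof your outline is in the right spirit---the random-partition trick with i.i.d.\ Bernoulli selectors $\delta_i$ and the identity $\E_\delta[S_\delta]=S/4$ are exactly the opening moves of de~la~Pe\~na--Montgomery-Smith---but two steps are not yet proofs. First, the passage from $\E_\delta\|S_\delta\|\ge t/4$ (which you correctly get pointwise in $X$ via Jensen) to a uniform lower bound on $\Pr_\delta[\|S_\delta\|\ge t/C_1]$ is not a routine Paley--Zygmund application: that would require $\E_\delta\|S_\delta\|^2\le C(\E_\delta\|S_\delta\|)^2$ uniformly in the realization of $X$, which is not automatic in a general Banach space and is in fact \emph{not} how the cited paper proceeds. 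Their argument instead introduces the mixed variables $Z_i=\delta_i X_i+(1-\delta_i)Y_i$, observes that $\sum_{i\ne j}f_{ij}(Z_i,Z_j)$ is equidistributed with the original sum for every fixed $\delta$, and then uses a polarization-type identity to extract the decoupled sum $\widetilde S$ directly from a small linear combination of such mixed sums, giving the tail inequality without any second-moment step. Second, your ``contraction principle'' step to go from the product-block sum over $I\times J$ back up to the full off-diagonal decoupled sum is stated but not carried out; in the original proof this is again handled by the $Z$-variable identity rather than by a separate contraction argument. So the strategy is recognizable, but the two places you flag as needing Paley--Zygmund and contraction are precisely where the actual proof takes a different, more algebraic route.
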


We also need the Matrix Bernstein's Inequality:

\begin{theorem}[Matrix Bernstein, \cite{tropp2012user}] \label{thm:matbernstein}Consider a finite sequence $\{X_k\}$ of independent, random symmetric matrices with dimension $d$. Assume that each random matrix satisfies
$$
\E[X_k] = 0 \mbox{ and } \|X_k\| \le R \mbox{ almost surely.}
$$
Then, for all $t\ge 0$,
$$
\Pr[\|\sum_k X_k\| \ge t] \le d\cdot \exp\left(\frac{-t^2/2}{\sigma^2+Rt/3}\right) \mbox{ where }\sigma^2 := \|\sum_k \E[X_k^2]\|.
$$
\end{theorem}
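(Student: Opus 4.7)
The plan is to follow the matrix Laplace transform approach of Ahlswede-Winter as refined by Tropp. Write $S = \sum_k X_k$. The starting point is the matrix Markov inequality
$$
\Pr[\lambda_{\max}(S) \ge t] \;\le\; \inf_{\theta>0} e^{-\theta t}\, \E\,\operatorname{tr}\exp(\theta S),
$$
which reduces the tail bound to controlling the trace of the matrix moment generating function of $S$. Since we want a bound on $\|S\| = \max(\lambda_{\max}(S), \lambda_{\max}(-S))$, I would prove the one-sided bound for $S$ and apply the identical argument to $-S$ (whose summands satisfy the same hypotheses), then take a union bound; the resulting factor of $2$ is conventionally absorbed into the dimensional constant $d$.

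The key analytic step is to linearize the trace exponential of a sum of non-commuting matrices. I would invoke Lieb's concavity theorem — the map $A \mapsto \operatorname{tr}\exp(H+\log A)$ is concave on positive definite $A$ — and iterate with Jensen's inequality, peeling off one summand at a time, to obtain the matrix cumulant subadditivity bound
$$
\E\,\operatorname{tr}\exp\!\left(\theta \sum_k X_k\right) \;\le\; \operatorname{tr}\exp\!\left(\sum_k \log \E\, e^{\theta X_k}\right).
$$
This is the only nontrivial noncommutative input; I expect it to be the main obstacle in a self-contained write-up, and would either cite it or reproduce the short Lieb-based argument from Tropp's paper.

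It then remains to control each matrix cumulant $\log \E\, e^{\theta X_k}$ in the semidefinite (PSD) order. Using $\E[X_k]=0$, a Taylor expansion of $e^{\theta X_k}$, and the operator dominance $X_k^j \preceq R^{j-2} X_k^2$ for $j \ge 2$ (valid because $X_k$ is symmetric with spectrum in $[-R,R]$), summing a geometric series gives
$$
\E\, e^{\theta X_k} \;\preceq\; I + g(\theta)\,\E[X_k^2], \qquad g(\theta) := \frac{\theta^2/2}{1-\theta R/3},
$$
for $0<\theta<3/R$. Operator monotonicity of $\log$, together with $\log(1+x) \le x$, then yields $\log \E\,e^{\theta X_k} \preceq g(\theta)\,\E[X_k^2]$. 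Summing over $k$, the PSD matrix $H := g(\theta)\sum_k \E[X_k^2]$ has $\lambda_{\max}(H) = g(\theta)\sigma^2$, so $\operatorname{tr}\exp(H) \le d\cdot e^{g(\theta)\sigma^2}$.

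Combining these three ingredients produces
$$
\Pr[\lambda_{\max}(S) \ge t] \;\le\; d\cdot \exp\!\big(-\theta t + g(\theta)\sigma^2\big),
$$
and a routine calculus minimization — exactly as in scalar Bernstein — at $\theta^* = t/(\sigma^2 + Rt/3)$ (which automatically satisfies $\theta^* < 3/R$ since $\theta^* R / 3 \le 1$ is equivalent to $Rt/3 \le \sigma^2 + Rt/3$) yields the exponent $-t^2/(2(\sigma^2+Rt/3))$ in the statement. Applying the symmetric argument to $-S$ finishes the bound on $\|S\|$. The only part requiring genuine noncommutative machinery is the Lieb-based subadditivity step; all other components — the scalar-style MGF bound for bounded centered random variables, operator monotonicity of $\log$, and the optimization in $\theta$ — are direct analogues of the scalar Bernstein derivation.
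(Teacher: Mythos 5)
The paper does not prove Theorem~\ref{thm:matbernstein}; it cites it directly from Tropp's survey. Your proposal correctly reconstructs the standard Laplace-transform proof from that source: matrix Markov on the trace MGF, Lieb's concavity theorem plus Jensen to obtain subadditivity of the matrix cumulant generating functions, the operator bound $X_k^j \preceq R^{j-2}X_k^2$ together with the factorial estimate $j! \ge 2\cdot 3^{j-2}$ to get the rational function $g(\theta) = \frac{\theta^2/2}{1-\theta R/3}$, operator monotonicity of $\log$, and the Bernstein-style choice $\theta = t/(\sigma^2 + Rt/3)$ (which, as you note, is admissible and produces the stated exponent even though it is not the exact minimizer). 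All of these steps are correct and are the steps Tropp uses.

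One small imprecision: the one-sided Lieb argument bounds $\Pr[\lambda_{\max}(\sum_k X_k) \ge t]$ by $d\exp(\cdot)$, and applying it also to $-\sum_k X_k$ and union-bounding gives $2d\exp(\cdot)$ for $\Pr[\|\sum_k X_k\| \ge t]$. The factor $2$ cannot literally be ``absorbed into $d$''; the paper's statement with bare $d$ on the two-sided event is the commonly quoted but slightly loose form, and the discrepancy is irrelevant for the paper's $\widetilde{O}$-level applications. Aside from that phrasing, your proposal matches the cited proof.
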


\section{Sum-of-Square Proofs}

In this section we state some lemmas that can be proved by low-degree SoS proofs. Most of these lemmas can be found in~\cite{BarakS14} and~\cite{BarakKS14} but we still give the proofs here for completeness.

%\Tnote{We need a lemma about $x^TMx\sosle \|M\|\|x\|^2$ and a lemma about cauchy-shwarz}

\begin{lemma}\label{lem:sos-cauchy-schwarz}[SoS proof for Cauchy-Schwarz]
	Cauchy-Schwarz inequality can be proved by degree-2 sum of squares proofs, 
	$$\left(\sum_{i=1}^{n} a_i^2\right) \left(\sum_{i=1}^{n} b_i^2\right) - \left(\sum_i a_ib_i\right)^2 = \sum_{i,j}(a_ib_j-a_jb_i)^2$$
\end{lemma}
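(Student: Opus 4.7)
My plan is to prove the identity directly by expanding the right-hand side, which is already manifestly a sum of squares, and check that it coincides (up to the appropriate constant --- note that the statement as written should read $\sum_{i<j}$ or carry a factor of $\tfrac{1}{2}$, since each unordered pair $\{i,j\}$ is counted twice when we sum over all ordered pairs $(i,j)$) with the Cauchy--Schwarz gap. This is just the classical Lagrange identity, and the only content is a polynomial identity in $a_1,\dots,a_n,b_1,\dots,b_n$; in particular there is no inequality to prove and no ``obstacle'' beyond bookkeeping.

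Concretely, I would first expand a single term $(a_i b_j - a_j b_i)^2 = a_i^2 b_j^2 - 2 a_i a_j b_i b_j + a_j^2 b_i^2$ and then sum. Summing over $i < j$, the cross term contributes $-2\sum_{i<j} a_i a_j b_i b_j$, while the square terms $a_i^2 b_j^2$ and $a_j^2 b_i^2$ together contribute $\sum_{i \ne j} a_i^2 b_j^2$. Adding the diagonal terms $\sum_i a_i^2 b_i^2$ to both sides of the bookkeeping (they cancel between the product $(\sum_i a_i^2)(\sum_j b_j^2)$ and $(\sum_i a_i b_i)^2$), one obtains
\begin{equation*}
\sum_{i<j} (a_i b_j - a_j b_i)^2 \;=\; \Bigl(\sum_i a_i^2\Bigr)\Bigl(\sum_j b_j^2\Bigr) - \Bigl(\sum_i a_i b_i\Bigr)^2,
\end{equation*}
which is the desired Lagrange identity.

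Since the right-hand side is literally a sum of squares of the linear (in $(a,b)$) polynomials $h_{ij}(a,b) := a_i b_j - a_j b_i$, this exhibits the Cauchy--Schwarz gap as an element of the degree-$2$ SoS cone (in fact degree $4$ in the joint variables $(a,b)$, but degree $2$ in each of $a$ and $b$ separately, which matches the ``degree-$2$ SoS'' terminology used earlier in the paper for statements of the form $x^{\t}Ax \sosle \|A\|\|x\|^2$). No auxiliary polynomials $g_j$ multiplying equality constraints are needed, so the certificate is unconditional. The only subtlety --- and the reason I would double-check the statement carefully --- is the factor of two in the indexing convention; once that is pinned down, the identity is a one-line algebraic check and no further work is required.
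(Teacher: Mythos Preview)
Your proposal is correct and matches the paper's approach: the paper offers no proof beyond stating the identity itself, so your direct expansion is exactly the verification that is implicitly intended. Your observation about the factor of $\tfrac{1}{2}$ (equivalently, summing over $i<j$) is also correct; as written in the paper the right-hand side equals twice the left-hand side, though this does not affect its status as an SoS certificate.
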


\begin{lemma}\label{lem:sos-holder}
	For any vector $x$, $y$, we have that for even number $k$, 
	$$\|x+y\|^{k} \sosle_k 2^{k-1}(\|x\|^k + \|y\|^k)$$
\end{lemma}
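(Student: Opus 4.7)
My plan is to prove this by induction on the even integer $k$, using the elementary but crucial fact that the product of two sum-of-squares polynomials is again a sum of squares; so multiplying an SoS inequality $P \sosle Q$ by an SoS polynomial $R$ preserves the relation, yielding $PR \sosle QR$. The base case $k = 2$ follows from the identity $2(\|x\|^2 + \|y\|^2) - \|x+y\|^2 = \|x\|^2 - 2\inner{x,y} + \|y\|^2 = \|x-y\|^2 = \sum_i (x_i - y_i)^2$, which is visibly SoS in the coordinates of $x$ and $y$.

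For the inductive step from $k$ to $k+2$, I would write $\|x+y\|^{k+2} = \|x+y\|^k \cdot \|x+y\|^2$ and apply the inductive hypothesis after multiplying through by the SoS quantity $\|x+y\|^2$. Then, observing that $\|x\|^k + \|y\|^k = (\|x\|^2)^{k/2} + (\|y\|^2)^{k/2}$ is itself SoS since $k$ is even, I would apply the base case inequality to the remaining factor $\|x+y\|^2$ and multiply, arriving at $\|x+y\|^{k+2} \sosle 2^k (\|x\|^k+\|y\|^k)(\|x\|^2+\|y\|^2) = 2^k\bigl(\|x\|^{k+2}+\|y\|^{k+2}+\|x\|^k\|y\|^2+\|x\|^2\|y\|^k\bigr)$.

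It then remains to absorb the cross terms into the diagonal ones, i.e.\ to show $\|x\|^k\|y\|^2 + \|x\|^2\|y\|^k \sosle \|x\|^{k+2}+\|y\|^{k+2}$. The difference factors as $(\|x\|^k - \|y\|^k)(\|x\|^2 - \|y\|^2)$, and for even $k$ one has $\|x\|^k - \|y\|^k = (\|x\|^2 - \|y\|^2) \sum_{j=0}^{k/2-1} (\|x\|^2)^{k/2-1-j}(\|y\|^2)^j$, so the difference equals $(\|x\|^2-\|y\|^2)^2$ times a polynomial in $\|x\|^2, \|y\|^2$ with non-negative coefficients. Since both $\|x\|^2$ and $\|y\|^2$ are SoS and $(\|x\|^2-\|y\|^2)^2$ is a square, the entire expression is SoS. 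Substituting back yields $\|x+y\|^{k+2} \sosle 2^{k+1}(\|x\|^{k+2}+\|y\|^{k+2})$, completing the induction.

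Main obstacle: the argument is essentially bookkeeping, and the only subtle point is ensuring that each multiplication preserves the $\sosle$ relation, which requires the multiplier to be itself SoS rather than merely non-negative; this is why I explicitly verify at each step that the relevant factor ($\|x+y\|^2$, $\|x\|^k+\|y\|^k$, $(\|x\|^2-\|y\|^2)^2$, etc.) is a sum of squares. Inducting specifically on \emph{even} $k$ sidesteps sign issues that would otherwise appear for odd powers.
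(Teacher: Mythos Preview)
Your proof is correct and takes essentially the same approach as the paper: induction on even $k$, base case via $\|x-y\|^2$, inductive step by multiplying the hypothesis by the SoS quantity $\|x+y\|^2$ and then applying the base case times the SoS quantity $\|x\|^k+\|y\|^k$, and finally absorbing the cross terms via the factorization $(\|x\|^k-\|y\|^k)(\|x\|^2-\|y\|^2)=(\|x\|^2-\|y\|^2)^2\sum_{j}(\|x\|^2)^{k/2-1-j}(\|y\|^2)^j$. The only cosmetic difference is that the paper first reduces to one-dimensional $x,y$ and runs the identical algebra with the scalars $x^2,y^2$, whereas you work directly with $\|x\|^2,\|y\|^2$ as SoS polynomials, which is arguably cleaner and avoids having to justify that reduction.
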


\begin{proof}
	Note that it suffices to prove it for one dimensional vector $x,y$. We prove by induction. For $k = 2$, it just follows Cauchy-Schwarz. Suppose it is true for $k-2$ case, we have 
	$$(x+y)^{k} = (x+y)^{k-2}(x+y)^2 \sosle 2^{k-3}(x^{k-2}+y^{k-2}) \cdot 2(x^2+y^2) $$
	Note that $$2(x^k+y^k) -(x^{k-2}+y^{k-2})(x^2+y^2) = (x^2-y^2)^2(x^{k-4}+x^{k-6}y^2 +\dots+y^{k-4})\sosge 0$$
	
	Combing the two equations above we obtain the desired result. 
\end{proof}

\begin{lemma}\label{lem:sosspectral}
	Suppose $M$ is $m\times n$ matrix with spectral norm $\|M\|$, then 
	$$(x^TMy)^2 \sosle_{4} \|x\|^2\|y\|^2 \|M\|^2$$
\end{lemma}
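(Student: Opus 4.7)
My plan is to prove the lemma by chaining two elementary degree-$2$ SoS steps: first use the SoS Cauchy--Schwarz inequality to strip off $x$, and then use the fact that $\|M\|^2 I - M^{\t}M \succeq 0$ to strip off $M$.

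First I would apply Lemma~\ref{lem:sos-cauchy-schwarz} to the vectors $x \in \R^m$ and $My \in \R^m$, obtaining the degree-$4$ SoS proof
$$
(x^{\t} M y)^2 \;=\; \langle x,\, My\rangle^2 \;\sosle_{4}\; \|x\|^2\,\|My\|^2 .
$$
Here the SoS certificate is explicit: $\|x\|^2 \|My\|^2 - \langle x, My\rangle^2 = \tfrac{1}{2}\sum_{i,j}(x_i (My)_j - x_j (My)_i)^2$, and each summand is a polynomial of degree $2$ in $x$ and of degree $2$ in $y$, hence degree $4$ overall.

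Next I would show $\|My\|^2 \sosle_{2} \|M\|^2 \|y\|^2$ in the variables $y$ alone. By the definition of spectral norm, $\|M\|^2$ is the largest eigenvalue of $M^{\t}M$, so $N := \|M\|^2 I - M^{\t}M$ is positive semidefinite. Writing its eigendecomposition $N = \sum_k v_k v_k^{\t}$, we get the explicit identity
$$
\|M\|^2 \|y\|^2 - \|My\|^2 \;=\; y^{\t} N y \;=\; \sum_k (v_k^{\t} y)^2,
$$
which is manifestly a sum of squares of degree $2$ polynomials in $y$.

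Finally I would combine the two inequalities. Multiplying the degree-$2$ SoS certificate of the second step by $\|x\|^2 = \sum_j x_j^2$, each cross term $x_j^2 (v_k^{\t} y)^2 = (x_j \cdot v_k^{\t} y)^2$ is itself a square of a degree-$2$ polynomial, so $\|x\|^2 \|M\|^2 \|y\|^2 - \|x\|^2 \|My\|^2$ is a sum of squares of degree $4$. Adding this to the Cauchy--Schwarz certificate yields
$$
\|x\|^2 \|y\|^2 \|M\|^2 - (x^{\t}My)^2 \;=\; \bigl(\|x\|^2 \|My\|^2 - (x^{\t}My)^2\bigr) + \|x\|^2\bigl(\|M\|^2\|y\|^2 - \|My\|^2\bigr),
$$
which is a sum of squares of degree $4$, giving the claimed bound $\sosle_{4}$. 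There is no real obstacle here; the only thing to verify carefully is that the degrees add up correctly at each step (both pieces remain at degree $\le 4$), which they do.
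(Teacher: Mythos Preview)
Your proof is correct and follows essentially the same two-step template as the paper's: an SoS Cauchy--Schwarz step followed by a spectral bound, combined so that every certificate stays at degree $\le 4$. The only cosmetic difference is that the paper first passes to the SVD basis $z = U^{\t}x$, $w = V^{\t}y$ and applies Cauchy--Schwarz with $a_i = \sigma_i z_i$, $b_i = w_i$ (so the spectral step becomes the scalar inequality $\sigma_i^2 \le \|M\|^2$), whereas you work in the original coordinates and use the PSD factorization of $\|M\|^2 I - M^{\t}M$; these are equivalent.
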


\begin{proof}
	Assume $m\le n$ without loss of generality, and suppose $M$ has singular decomposition $M = U\Sigma V^T$ where $\Sigma = \diag(\sigma_1,\dots,\sigma_m)$. Let $z = x^TU$ and $w = V^Ty$. Then 
	
	$$(x^TMy)^2  = \left(\sum_{i=1}^m \sigma_iz_iw_i\right)^2 \sosle_4\left(\sum_{i=1}^m \sigma_i^2z_i^2\right) \left(\sum_{i=1}^m w_i^2\right) \le \|M\|^2 \|z\|^2 \|w\|^2 = \|x\|^2\|y\|^2 \|M\|^2$$
\end{proof}

\begin{lemma}\label{lem:square-root}
For a nonnegative real number $a$ and a set of polynomial $R$ and positive integer $k$, if a polynomial $p(x)$ satisfy $p(x)\sosle_{R,k} a^2$, then $p(x) \sosle_{R,k'} a$ for $k' = \max\{k,2\deg(p)\}$. 
\end{lemma}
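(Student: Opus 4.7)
The plan is to derive the conclusion from a single algebraic identity together with the hypothesis, reading the premise in light of its usage in Lemma~\ref{lem:2-4norm} (where the authors bound the \emph{square} of $\sum \inner{a_i,x}^4$ and then take a square root) as $p(x)^2 \sosle_{R,k} a^2$. The key identity is the elementary polynomial equality
$$2a\,\bigl(a - p(x)\bigr) \;=\; \bigl(a - p(x)\bigr)^2 \;+\; \bigl(a^2 - p(x)^2\bigr).$$
Assuming $a > 0$, dividing by the positive constant $2a$ writes $a - p(x)$ as a perfect square plus the polynomial $(a^2 - p^2)/(2a)$, which is itself SoS modulo $R$ by the hypothesis, after absorbing the positive constant $1/(2a) = (1/\sqrt{2a})^2$ into each square and each ideal coefficient.

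Concretely, I would unpack the hypothesis as $h_0(x)^2\bigl(a^2 - p(x)^2\bigr) = \sum_i h_i(x)^2 + \sum_j g_j(x)\,r_j(x)$ (with the degree bounds from the definition of SoS proof), multiply the identity above by $h_0(x)^2$, and rearrange to get
$$h_0(x)^2\,\bigl(a - p(x)\bigr) \;=\; \Bigl(\tfrac{h_0(x)(a-p(x))}{\sqrt{2a}}\Bigr)^2 + \sum_i \Bigl(\tfrac{h_i(x)}{\sqrt{2a}}\Bigr)^2 + \sum_j \tfrac{g_j(x)}{2a}\, r_j(x),$$
which is an SoS certificate for $a - p(x) \sosge_R 0$ carrying the same multiplier $h_0$. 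In particular $p(x) \sosle_R a$, as claimed.

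For the degree bookkeeping, the hypothesis gives $2\deg(h_0) + 2\deg(p) \le k$, hence $2\deg(h_0) + \deg(p) \le k$ and the new multiplier $h_0$ is admissible; the new square $h_0(a-p)/\sqrt{2a}$ has its double-degree at most $k$; the squares $h_i/\sqrt{2a}$ retain degree $\le k/2$; and the ideal terms $g_j r_j/(2a)$ retain degree $\le k$. The only term that can push us past $k$ is the perfect square $(a-p)^2/(2a)$ in the trivial case $h_0 = 1$, which demands $k' \ge 2\deg(p)$; combining, $k' = \max\{k, 2\deg(p)\}$ suffices. The only real subtlety is the degenerate case $a = 0$: then $-p^2$ is SoS modulo $R$, which forces $p$ to vanish modulo $R$ and makes $p \sosle_R 0$ trivial; alternatively one may exclude it since every invocation in the paper has $a > 0$.
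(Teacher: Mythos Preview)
Your proof is correct and is essentially the paper's own argument. The paper's one-line proof reads
\[
p(x)-a \;\sosle_{R,k}\; \frac{1}{2a}\bigl(p(x)-a\bigr)^2 \;\sosle_{R,k'}\; 0,
\]
which (once one fixes the evident typo $(p(x)-a)^2 \to (p(x)^2-a^2)$ in the middle expression) is exactly your identity $2a(a-p) = (a-p)^2 + (a^2-p^2)$ rearranged; you were also right to read the hypothesis as $p(x)^2 \sosle_{R,k} a^2$, which is how the lemma is actually invoked in Lemma~\ref{lem:2-4norm}. Your version is more careful than the paper's in tracking the multiplier $h_0$, in the degree bookkeeping, and in flagging the degenerate case $a=0$.
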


\begin{proof}
By a simple manipulation of algebra, we have that 
$$p(x) - a \sosle_{R,k}\frac{1}{2a}(p(x)-a)^2\sosle_{R,k'}0.$$
\end{proof}

%\bibliographystyle{alpha}
%\bibliography{ref}
\end{document}